\definecolor{goldenpoppy}{rgb}{0.99, 0.76, 0.0}
\definecolor{richblack}{rgb}{0.06, 0.05, 0.03}
\definecolor{cadmiumred}{rgb}{0.89, 0.0, 0.13}
\definecolor{fuchsia}{rgb}{0.3, 0.0, 0.3}
\definecolor{green(ncs)}{rgb}{0.0, 0.52, 0.32}
\tikzstyle{species_T} = [circle,radius=0.1cm, text centered, draw=black, fill=goldenpoppy]
\tikzstyle{species_R} = [circle,radius=0.1cm, text centered, draw=black, fill=white]
\tikzstyle{species_C} = [circle,radius=0.1cm, text centered, draw=black, fill=fuchsia]
\tikzstyle{dots} = [circle,radius=0.1cm,text centered]
\tikzstyle{arrowA} = [-{Latex[length=2mm]},white,dashed]
\tikzstyle{arrowB} = [-{Latex[length=2mm]},green(ncs)]
\tikzstyle{arrowC} = [-{Latex[length=2mm]},cadmiumred]
\tikzstyle{inhibit} = [thick,-|,black!100]
\tikzstyle{loosely dashed}= [dash pattern=on 3pt off 6pt]
\newtheorem{definition}{Definition}[section]
\newtheorem{theorem}{Theorem}[section]
\newtheorem{lemma}{Lemma}[section]
\newtheorem{corollary}{Corollary}[section]
\let\Item\item
\newcommand{\bcol}[1]{{\color{black}#1}}
\begin{document}
\vspace*{-1cm}

\centerline{{\huge Mapping dynamical systems into chemical reactions}}

\medskip
\bigskip

\centerline{
\renewcommand{\thefootnote}{$1$}
{\Large Tomislav Plesa \footnote{
Department of Applied Mathematics and Theoretical Physics, University of Cambridge,
Centre for Mathematical Sciences, Wilberforce Road, Cambridge, CB3 0WA, UK; 
e-mail: tp525@cam.ac.uk}
}}

\medskip
\bigskip

\noindent
{\bf Abstract}: Polynomial dynamical systems (DSs) 
can model a wide range of physical processes.
A special subset of these DSs 
that can model chemical reactions 
under mass-action kinetics 
is called chemical dynamical systems (CDSs).
A fundamental problem, central to synthetic biology,
is to map polynomial DSs into 
dynamically similar CDSs.
In this paper, we introduce the 
\emph{quasi-chemical map} (QCM)
that can systematically solve this problem.
The QCM introduces suitable
state-dependent perturbations into any
given polynomial DS
which then becomes a CDS under sufficiently 
large translations of variables.
This map preserves robust features, 
such as generic equilibria and limit cycles, 
and generic bifurcations,
as well as some temporal properties, 
such as periods of oscillations.
Furthermore, the resulting CDSs
are at most one degree higher than the original DSs.
We showcase the QCM by 
designing relatively simple CDSs
with oscillations, chaos and bifurcations, 
and addressing Hilbert's 16th problem in chemistry.

\section{Introduction}
First-order autonomous ordinary-differential equations with polynomials
on the right-hand side, which we call 
polynomial \emph{dynamical systems} (DSs)~\cite{Perko}, 
can model time-evolution of a 
range of chemical and biological processes~\cite{Feinberg,Janos}.
In particular, for chemical reactions under mass-action kinetics,
the dependent variables in the corresponding DSs
can be interpreted as (non-negative) chemical concentrations, 
and each distinct monomial can be interpreted as a chemical reaction~\cite{Feinberg,Janos}.
In this paper, this subset of DSs is called 
\emph{chemical dynamical systems} (CDSs). 
For example, $\mathrm{d} x/\mathrm{d} t = (1 - x)$ is a CDS which
describes how concentration $x = x(t)$ of a chemical species $X$ changes in time $t$,
with monomials $1$ and $-x$ interpreted respectively as production 
$\varnothing \to X$ and degradation $X \to \varnothing$, 
where $\varnothing$ denotes some neglected species.
On the other hand, DS 
$\mathrm{d} y/\mathrm{d} t = -1$ is not chemical, 
because monomial $-1$ drives $y = y(t)$ to negative values,
and therefore cannot be interpreted as a chemical reaction.

Problems arising in chemistry and biology are often \emph{direct}:
given a CDS, the task is to deduce some of its properties, 
such as number and stability of time-independent solutions
(equilibria) or isolated periodic solutions (limit cycles). 
Conversely, in the field of synthetic biology~\cite{Synthetic}, 
and the sub-field of DNA computing~\cite{DNA} in particular, 
the defining problem is an \emph{inverse} one: given a property, 
the task is to design a CDS
that displays the property~\cite{Me1,Me2,Me3,Me4,Me5,RNCRN}. 
This challenging problem can be approached by designing 
CDSs from scratch, or by suitably modifying existing
DSs with desired features arising from e.g. 
mechanical or electronic applications.
Key to both of the two approaches
are special maps that can transform
general DSs into CDSs, 
while preserving desired dynamical features~\cite{Me1};
in this paper, we call these transformations \emph{chemical maps}.

Any given polynomial DS can always be mapped
to a CDS via a suitable state-dependent
rescaling of time~\cite{Time_change1,Time_change2};
we call the underlying chemical map the \emph{time-warp map}. 
Under this map, all positive solutions of the underlying
DS are preserved; however, this comes at a cost.
Firstly, since time is re-scaled, 
temporal properties, such as periods of oscillations, are 
in general significantly changed.
Secondly, the right-hand side of the CDSs obtained via
this map is of higher degree compared 
to the original DSs, 
and this increase in degree scales with the number of equations.
Mathematically, these two properties may be undesirable
if one wants to design lower-degree or less time-distorted CDSs.
Experimentally, the time-warp map can have limited applicability, 
because time-scales at which dynamical phenomena occur are
often of great importance, and this map distorts them. 
Furthermore, the significantly higher-degree terms 
in the resulting CDSs are 
experimentally more costly to implement~\cite{Me6,DNA}.

Another map put forward for designing CDSs
is the \emph{$\mathbf{x}$-factorable map}~\cite{Samardzija}, 
which involves multiplying the right-hand side of
the differential equations by their respective dependent variables.
The key advantage of this map 
is that the resulting CDSs are 
of only one degree higher than the original DSs.
However, despite numerous successful examples~\cite{Samardzija}, 
to the best of author's knowledge, 
there are no rigorous results in the literature 
that systematically characterize how the 
$\mathbf{x}$-factorable map influences various dynamical features.
 
To bridge the gap, in this paper we introduce
a novel chemical map, which we call the
\emph{quasi-chemical map} (QCM). This map systematically
introduces appropriate state-dependent perturbations
on the right-hand side of any given polynomial DS
which then becomes a CDS
under sufficiently large translations of the dependent variables.
The QCM displays a number of desirable properties:
firstly, the resulting CDSs are 
\emph{at most} one degree higher than the original DSs.
Secondly, this map preserves all dynamical features
that are robust to small perturbations, 
such as generic equilibria and limit cycles.
Finally, the QCM approximately preserves temporal properties, 
such as periods of oscillations.
In addition, we show that, in a special case, 
the QCM can be seen as 
a correction of the $\mathbf{x}$-factorable map,
thus also clarifying validity of the latter map.

The paper is organized as follows. 
In Section~\ref{sec:background}, we provide
some background on DSs and CDSs; more details
are available in Appendix~\ref{app:background}.
In Section~\ref{sec:chemical_maps}, we
define chemical maps, 
and discuss two particular examples: 
the time-warp and $\mathbf{x}$-factorable maps;
more details are presented in Appendix~\ref{app:timechange}.
In Section~\ref{sec:quasi_chemical}, 
we introduce the QCM
and prove its properties; 
a generalization is
presented in Appendix~\ref{app:quasi_chemical_g}.
In Section~\ref{sec:quasi_chemical_applications}, 
we apply the QCM to design CDSs with exotic dynamics
and bifurcations, and to address Hilbert's 16th problem in chemistry;
further details and examples are presented in 
Appendices~\ref{app:linearoscillator}--\ref{app:homoclinic}.
Finally, we provide a summary and 
discussion in Section~\ref{sec:discussion};
further details are presented in
Appendix~\ref{app:nonpoly}. 

\section{Background theory} 
\label{sec:background}
In this section, we present notation 
and background theory used in the paper.

\textbf{Notation}.
The space of integers is denoted by $\mathbb{Z}$,
while the space of real numbers by $\mathbb{R}$.
Subscripts $\ge$ and $>$ restrict these spaces
to respectively non-negative and positive numbers;
for example, $\mathbb{Z}_{\ge}$ is the space
of non-negative integers, while 
$\mathbb{R}_{>}$ is the space of positive real numbers.
Absolute value of $x \in \mathbb{R}$ is denoted by $|x|$.
Euclidean column vectors are denoted by 
$\mathbf{x} = (x_1, x_2, \ldots, x_N)^{\top} \in \mathbb{R}^{N}$, 
where $\cdot^{\top}$ is the transpose operator.
The $1$-norm of 
$\mathbf{x} \in \mathbb{R}^{N}$ is given by 
$\|\mathbf{x}\| = |x_1| + |x_2| + \ldots + |x_N|$.

\subsection{Dynamical systems} \label{sec:dynamical_sys}
Consider ordinary-differential equations given by
\begin{align}
\frac{\mathrm{d} y_i}{\mathrm{d} t} 
& = f_i(y_1,y_2, \ldots, y_N), 
\; \; \; 
\textrm{where } f_i \in  \mathbb{P}_n(\mathbb{R}^N,\mathbb{R}),
\; \; \; i = 1, 2, \ldots, N,
\label{eq:dyn_components} 
\end{align}
where $\mathbb{P}_n(\mathbb{R}^N,\mathbb{R})$
is the space of all polynomial functions 
$f : \mathbb{R}^N \to \mathbb{R}$ of degree at most $n$.
We interpret $t \ge 0$ as time, 
and note that $f_i = f_i(y_1,y_2,\ldots,y_N)$
is autonomous, i.e. does not depend on time explicitly.
Defining 
$\mathbf{y} = (y_1,y_2, \ldots, y_N)^{\top} \in \mathbb{R}^N$
and $\mathbf{f}(\mathbf{y}) = (f_1(\mathbf{y}),f_2(\mathbf{y}), \ldots, 
f_N(\mathbf{y}))^{\top} \in \mathbb{R}^N$, 
system~(\ref{eq:dyn_components}) can be written succinctly 
in a vector form as
\begin{align}
\frac{\mathrm{d} \mathbf{y}}{\mathrm{d} t} & = 
\mathbf{f}(\mathbf{y}),
\; \; \; \textrm{where } 
\mathbf{f} \in \mathbb{P}_n(\mathbb{R}^N,\mathbb{R}^N),
\label{eq:dyn} 
\end{align}
where $\mathbb{P}_n(\mathbb{R}^N, \mathbb{R}^N)$
is the space of all vector-valued functions
$\mathbf{f} : \mathbb{R}^N \to \mathbb{R}^N$ 
whose components are polynomials of degree at most $n$.
We call~(\ref{eq:dyn}) a \emph{dynamical system} (DS)
with state $\mathbf{y}$ and \emph{vector field} $\mathbf{f}$.
We say that DS~(\ref{eq:dyn}) is \emph{polynomial}
with degree $n$ and dimension $N$.
In this paper, we focus on polynomial DSs;
see Section~\ref{sec:discussion} and Appendix~\ref{app:nonpoly}
for a discussion on non-polynomial DSs.

Let $\mathbf{y} = \mathbf{y}(t;\mathbf{y}_0)$ be
the solution of~(\ref{eq:dyn}) 
satisfying the initial condition 
$\mathbf{y}(0) = \mathbf{y}_0 \in \mathbb{R}^N$.  
This solution can be represented as a trajectory in the 
\emph{time-state space} $\mathbb{R}_{\ge} \times \mathbb{R}^N$, i.e. as the set
$\{(t,\mathbf{y}(t; \mathbf{y}_0)) | t \ge 0\}$,
or projected to the \emph{state-space} $\mathbb{R}^N$, i.e. as the set 
$\{\mathbf{y}(t;\mathbf{y}_0) | t \ge 0\}$.

\textbf{Robustness}. 
Let us consider system
\begin{align}
\frac{\mathrm{d} \mathbf{z}}{\mathrm{d} t} & = 
\mathbf{f}(\mathbf{z}) + \mathbf{F}(\mathbf{z}; \mu),
\; \; \; \textrm{where } 
\mathbf{F} \in C^1(\mathbb{R}^N \times \mathbb{R}_{\ge},\mathbb{R}^N)
\textrm{ and } \mathbf{F}(\mathbf{z};0) = \mathbf{0},
\label{eq:dyn3} 
\end{align}
where $C^1 = C^1(\mathbb{R}^N \times \mathbb{R}_{\ge},\mathbb{R}^N)$
is the space of all functions
$\mathbf{F} : \mathbb{R}^N \times \mathbb{R}_{\ge} \to \mathbb{R}^N$ 
that are continuously differentiable in all arguments.
For all sufficiently small $\mu > 0$, DS~(\ref{eq:dyn3})
can be interpreted as a perturbation of DS~(\ref{eq:dyn}):
a ``small" perturbation $\mathbf{F}$
has been added to the vector field $\mathbf{f}$.
Solutions of~(\ref{eq:dyn}) that persist under all such perturbations 
are of special interest.

\begin{definition} $($\textbf{Robust dynamical system}$)$ 
\label{def:robustness}
Given a compact set $\mathbb{K} \subset \mathbb{R}^N$ in the state-space,
and given any vector field $\mathbf{F} \in C^1$ with 
$\mathbf{F}(\mathbf{z};0) = \mathbf{0}$,
assume that there exists $\mu_0 > 0$
such that for all $\mu \in (0,\mu_0)$
the perturbed \emph{DS}~$(\ref{eq:dyn3})$
in $\mathbb{K}$ is qualitatively equivalent 
to the unperturbed \emph{DS}~$(\ref{eq:dyn})$ in $\mathbb{K}$.
Then, \emph{DS}~$(\ref{eq:dyn})$ is 
said to be \emph{robust} in $\mathbb{K}$.
\end{definition} 
\noindent
\bcol{\textbf{Remark}. Robust DSs are also said to be
\emph{structurally stable} in the literature~\cite{Perko,Kuznetsov,Wiggins};
see also e.g.~\cite{Wiggins,Hyperbolicity} for the 
related notion of \emph{hyperbolicity}.}

\noindent
\textbf{Remark}. In Definition~\ref{def:robustness}, 
we say that DSs~(\ref{eq:dyn})
and~(\ref{eq:dyn3}) are \emph{qualitatively equivalent}
in $\mathbb{K} \subset \mathbb{R}^N$ 
if there exists a local change of coordinates
$\mathbf{y} = \boldsymbol{\Phi}(\mathbf{z})$ in $\mathbb{K}$, 
where map $\boldsymbol{\Phi}$ is continuous, 
has a continuous inverse, and preserves
the direction of time~\cite{Perko,Kuznetsov};
see also Appendix~\ref{app:background}. 

\noindent
\textbf{Remark}. One can also define
a \emph{robust solution} of~$(\ref{eq:dyn})$ by 
localizing the set $\mathbb{K}$
from Definition~\ref{def:robustness}.
In particular, a solution of DS~$(\ref{eq:dyn})$ is 
robust if in each of its neighborhood
DS~$(\ref{eq:dyn3})$ with sufficiently small $\mu > 0$
has a qualitatively equivalent solution;
for a more precise definition, 
see Appendix~\ref{app:background}.

Two important classes of solutions of DSs
are hyperbolic equilibria and limit cycles.
An equilibrium $\mathbf{y}^*$ is a time-independent solution of~$(\ref{eq:dyn})$;
it is hyperbolic if other nearby solutions either
exponentially move towards or away from $\mathbf{y}^*$.
A limit cycle $\mathbf{y}_{\tau}$ is a time-periodic 
solution of~$(\ref{eq:dyn})$ with period $\tau > 0$;
hyperbolicity has analogous meaning as for equilibria;
see Appendix~\ref{app:background} for more details.
These two classes of solutions are robust~\cite{Perko,Kuznetsov}.
\bcol{More broadly, hyperbolicity can be defined for any solution of a DS,
and one can show that hyperbolic solutions are robust~\cite{Wiggins}[Theorem 3.6.4].}

\subsection{Chemical dynamical systems} \label{sec:chemical_sys}
Let us consider polynomial DS
\begin{align}
\frac{\mathrm{d} \mathbf{x}}{\mathrm{d} t} & = 
\mathbf{g}(\mathbf{x}),
\; \; \; \textrm{where } 
\mathbf{g} \in \mathbb{P}_n(\mathbb{R}^N,\mathbb{R}^N).
\label{eq:dyn2} 
\end{align}
Assume that, given any non-negative initial condition, 
the solution of~(\ref{eq:dyn2}) is non-negative for all future times; 
then, DS~(\ref{eq:dyn2}) is said to be \emph{non-negative}.
In this case, the non-negative orthant 
is a trapping region for~(\ref{eq:dyn2}), i.e. 
vector field $\mathbf{g}$ points along or 
inwards on the boundary of $\mathbb{R}_{\ge}^N$:
$g_i(x_1,x_2, \ldots, x_N) \ge 0$
if $x_i = 0$ and $x_k \ge 0$ for $k \ne i$.
We now define a special subset of such DSs
that satisfy a stronger non-negativity condition:
each distinct monomial in $g_i$
is non-negative when $x_i = 0$ and $x_k \ge 0$ for $k \ne i$.
This condition ensures that each of the monomials
can be interpreted as a chemical reaction, 
and variables $x_i$ as chemical concentrations~\cite{Feinberg,Janos}.

\begin{definition} $($\textbf{Chemical dynamical system}$)$ \label{def:chem_sys}
Monomial $\alpha x_1^{\nu_{1}} 
x_2^{\nu_{2}} \ldots x_N^{\nu_{N}}$ 
of $g_i \in \mathbb{P}_n(\mathbb{R}^N,\mathbb{R})$, 
with $\alpha \in \mathbb{R}$
and $\nu_1,\nu_2,\ldots,\nu_N \in \mathbb{Z}_{\ge}$,
is \emph{chemical} if $\alpha x_1^{\nu_{1}} 
x_2^{\nu_{2}} \ldots x_N^{\nu_{N}} \ge 0$
for $x_i = 0$ and for all $x_k \ge 0$ with $k \ne i$.
Polynomial $g_i$ is \emph{chemical}
if all of its distinct monomials are chemical.
Vector field $\mathbf{g} = (g_1,g_2,\ldots,g_N)^{\top}$
is \emph{chemical} if all of its component are chemical.
\emph{DS}~$(\ref{eq:dyn2})$ is called a \emph{chemical dynamical system}
(\emph{CDS}) if its vector field is chemical.
Monomials, polynomials, vector fields and dynamical systems
that are not chemical are \emph{non-chemical}. 
Spaces of all $n$-degree polynomial
vector fields that are chemical and non-chemical 
are denoted respectively by 
$\mathbb{P}_n^{\mathcal{C}}(\mathbb{R}^N,\mathbb{R}^N)$ 
and $\mathbb{P}_n^{\mathcal{N}}(\mathbb{R}^N,\mathbb{R}^N)$.
\end{definition}

\textbf{Example}. Consider the two-dimensional quadratic DS
\begin{align}
\frac{\mathrm{d} y_1}{\mathrm{d} t} 
& = f_1(y_1,y_2) = - 1 + 3 y_1, \nonumber \\
\frac{\mathrm{d} y_2}{\mathrm{d} t} 
& = f_2(y_1,y_2) = 5 + 2 y_1 - y_1^2. 
\label{eq:ex_0}
\end{align}
Letting $y_1 = 0$, 
one gets monomial $f_1(0,y_2) = -1$,
which is always negative, and hence non-chemical;
therefore, $f_1$ is non-chemical.
Similarly, letting $y_2 = 0$, 
one obtains polynomial $f_2(y_1,0) = (5 + 2 y_1 - y_1^2)$.
Monomials $5$ and $2 y_1$ are non-negative when $y_1 \ge 0$,
and are hence chemical; on the other hand, 
$- y_1^2 < 0$ when $y_1 > 0$, and is therefore non-chemical; 
hence, $f_2$ is also non-chemical.
Therefore, $\mathbf{f} = (f_1,f_2)^{\top}
\in \mathbb{P}_2^{\mathcal{N}}(\mathbb{R}^2,\mathbb{R}^2)$,
and DS~(\ref{eq:ex_0}) is non-chemical.

\textbf{Example}. Consider the two-dimensional quadratic DS
\begin{align}
\frac{\mathrm{d} y_1}{\mathrm{d} t} 
& = f_1(y_1,y_2) = 1 - 2 y_2 + y_2^2, \nonumber \\
\frac{\mathrm{d} y_2}{\mathrm{d} t} 
& = f_2(y_1,y_2) = 1 - y_1 y_2. 
\label{eq:ex_02}
\end{align}
Monomials $1$ and $y_2^2$ from $f_1(0,y_2) = (1 - 2 y_2 + y_2^2)$
are chemical; however, $-2 y_2$ is non-chemical, 
since $-2 y_2 < 0$ for all $y_2 > 0$, so that 
$f_1$ is non-chemical.
On the other hand, $f_2(y_1,0) = 1 > 0$;
hence, $f_2$ is chemical. 
Therefore, $\mathbf{f} 
\in \mathbb{P}_2^{\mathcal{N}}(\mathbb{R}^2,\mathbb{R}^2)$,
and DS~(\ref{eq:ex_02}) is non-chemical.
However, note that DS~(\ref{eq:ex_02}) is non-negative, 
since $f_1(0,y_2) = (1 - y_2)^2 \ge 0$ and $f_2(y_1,0) = 1 > 0$, 
i.e. $\mathbb{R}_{\ge}^2$ is a trapping region.
This example shows that CDSs are a proper
subset of non-negative DSs.

\textbf{Example}. Consider the two-dimensional cubic DS
\begin{align}
\frac{\mathrm{d} x_1}{\mathrm{d} t} 
& = g_1(x_1,x_2) = 2 x_1, \nonumber \\
\frac{\mathrm{d} x_2}{\mathrm{d} t} 
& = g_2(x_1,x_2) = 5 + 2 x_1 - x_1^2 x_2.
\label{eq:ex_01}
\end{align} 
Polynomial $g_1$ is chemical, since $g_1(0,x_2) = 0$;
similarly, $g_2$ is chemical, since 
both monomials of $g_2(x_1,0) = (5 + 2 x_1)$ 
are non-negative when $x_1 \ge 0$.
Hence, $\mathbf{g} \in 
\mathbb{P}_3^{\mathcal{C}}(\mathbb{R}^2, \mathbb{R}^2)$,
and~(\ref{eq:ex_01}) is a CDS.

\textbf{Chemical reaction networks}. 
To every CDS one can associate a
set of chemical reactions, which are jointly called
a \emph{chemical reaction network} (CRN)~\cite{Janos}.
For example, a CRN associated with CDS~(\ref{eq:ex_01}) is given by
\begin{align}
X_1 & \xrightarrow[]{2} 2 X_1, \; \; \; \; 
\varnothing \xrightarrow[]{5} X_2, \; \; \; \; 
X_1 \xrightarrow[]{2} X_1 + X_2, \; \; \; \; 
2 X_1 + X_2 \xrightarrow[]{1} 2 X_1, \label{eq:ex_01_CRN1}
\end{align}
where $X_1$ and $X_2$ denote the chemical species 
whose concentrations are respectively $x_1$ and $x_2$, 
and $\varnothing$ denotes some neglected species;
the positive numbers above the reaction arrows are 
called \emph{rate coefficients}.
Another CRN of~(\ref{eq:ex_01}) 
is obtained by fusing the first 
and third reaction in~(\ref{eq:ex_01_CRN1}), and reads
\begin{align}
X_1 & \xrightarrow[]{2} 2 X_1 + X_2, \; \; \; \; 
\varnothing \xrightarrow[]{5} X_2, \; \; \; \; 
2 X_1 + X_2 \xrightarrow[]{1} 2 X_1. \label{eq:ex_01_CRN2}
\end{align}
See Appendix~\ref{app:background}
for more details about CRNs.

\section{Chemical maps} 
\label{sec:chemical_maps}
In this section, we define the central objects of this paper:
maps that transform DSs (see Section~\ref{sec:dynamical_sys})
into CDSs (see Section~\ref{sec:chemical_sys}), 
while preserving desired dynamical properties.
Before providing a definition, let us consider a
cautionary example.

\textbf{Example}. 
Consider the two-dimensional quadratic DS
\begin{align}
\frac{\mathrm{d} x_1}{\mathrm{d} t} 
& = - \frac{45}{4} + \frac{1}{8} x_1 + \frac{1}{2} x_2, \nonumber \\
\frac{\mathrm{d} x_2}{\mathrm{d} t} 
& = \frac{135}{8} +  \frac{1}{2} x_1 - \frac{51}{32} x_2 - \frac{1}{20} x_1 x_2
 + \frac{1}{20} x_2^2, \label{eq:ex1_translated2}
\end{align}
which has an unstable 
hyperbolic equilibrium $(x_1^*,x_2^*) = (10,20)$ 
enclosed by a stable hyperbolic limit cycle.
We display these two features respectively as the blue dot
and red curve in state-space in Figure~\ref{fig:1}(a);
also shown as grey arrows
is the underlying vector field.
In Figure~\ref{fig:1}(b), we show the limit 
cycle in the $(t,x_1)$ time-state space.

One can notice from~Figure~\ref{fig:1}(a) that DS~(\ref{eq:ex1_translated2})
is non-chemical, since the vector field can point outside 
the non-negative quadrant on the $x_2$-axis. 
More precisely, (\ref{eq:ex1_translated2}) 
does not satisfy Definition~\ref{def:chem_sys} because
of the non-chemical monomial $(-45/4)$ in the first equation.
A naive approach to transform this DS into a CDS
is to simply multiply $(-45/4)$ by $x_1$, leading to 
\begin{align}
\frac{\mathrm{d} x_1}{\mathrm{d} t} 
& = \left(- \frac{45}{4} + \frac{1}{8}\right) x_1 + \frac{1}{2} x_2, \nonumber \\
\frac{\mathrm{d} x_2}{\mathrm{d} t} 
& =\frac{135}{8} +  \frac{1}{2} x_1 - \frac{51}{32} x_2 - \frac{1}{20} x_1 x_2
 + \frac{1}{20} x_2^2.  \label{eq:ex1_naive}
\end{align}
Even though only one term of the vector field
from~(\ref{eq:ex1_translated2}) has been modified to 
yield~(\ref{eq:ex1_naive}), the underlying dynamics
has been drastically changed:
(\ref{eq:ex1_naive}) has no equilibria
and, consequently, all of its solutions grow unboundedly, 
i.e. we have designed a hazardous CDS. 
This catastrophic phenomenon, involving destruction of all non-negative 
equilibria and blow-up of chemical concentrations, 
plays an important role in molecular control~\cite{Me5}.

\begin{definition} $($\textbf{Chemical map}$)$ \label{def:chemical_map}
Consider \emph{DS}~$(\ref{eq:dyn})$
in a desired compact set $\mathbb{K} \subset \mathbb{R}^N$ 
in state-space.
Consider also \emph{CDS}
\begin{align}
\frac{\mathrm{d} \mathbf{x}}{\mathrm{d} t} & = 
\Psi \mathbf{f}(\mathbf{x}),
\; \; \; \textrm{where } 
\Psi \mathbf{f} \in \mathbb{P}_{m}^{\mathcal{C}}
(\mathbb{R}^{N},\mathbb{R}^{N}).
\label{eq:chem} 
\end{align} 
Assume that \emph{CDS}~$(\ref{eq:chem})$ in some compact set
$\mathbb{K}' \subset \mathbb{R}_{\ge}^N$ in the non-negative orthant 
is qualitatively equivalent to \emph{DS}~$(\ref{eq:dyn})$ in $\mathbb{K}$.
Then $\Psi : \mathbb{P}_n(\mathbb{R}^N,\mathbb{R}^{N}) 
\to \mathbb{P}_{m}^{\mathcal{C}}(\mathbb{R}^{N},\mathbb{R}^{N})$, 
mapping vector field $\mathbf{f}$ to $\Psi \mathbf{f}$,
is a \emph{chemical map} that qualitatively 
preserves \emph{DS}~$(\ref{eq:dyn})$ in $\mathbb{K}$.
\end{definition}

A natural choice for chemical maps are affine maps, 
since then qualitative equivalence is ensured.
However, for a given DS, there is no a-priori 
guarantee that a suitable affine map exists. 
Furthermore, even if it does exist, 
finding this map can be a non-trivial task 
even for two-dimensional DSs~\cite{Escher},
let alone higher-dimensional ones.
For this reason, we do not focus on affine maps alone.
In what follows, 
we apply two different non-affine maps to
transform~(\ref{eq:ex1_translated2}) into CDSs.

\begin{figure}[!htbp]
\vskip  0.2cm
\leftline{\hskip 1.0cm 
(a) Original system \hskip 1.7cm 
(c) Time-warp map \hskip 1.4cm 
(e) $\mathbf{x}$-factorable map}
\vskip  0.2cm
\centerline{
\hskip -0.4cm
\includegraphics[width=0.27\columnwidth]{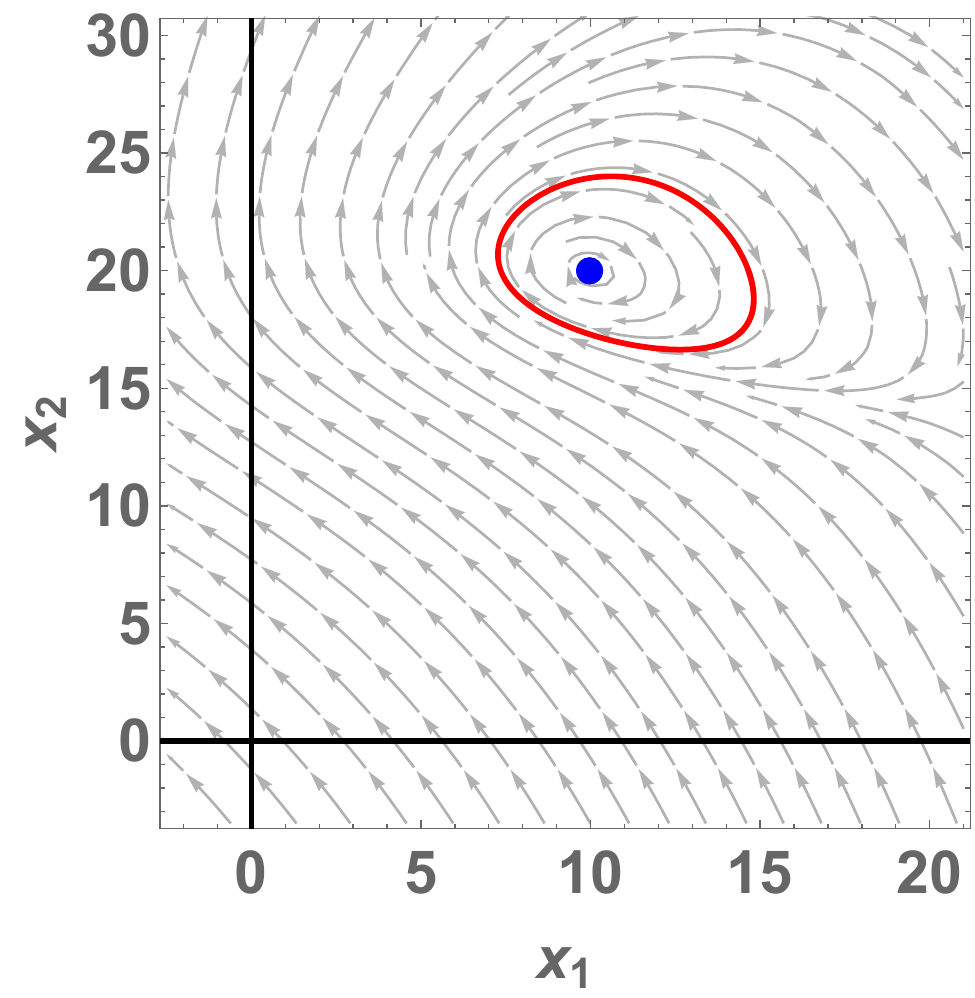}
\hskip 0.5cm
\includegraphics[width=0.27\columnwidth]{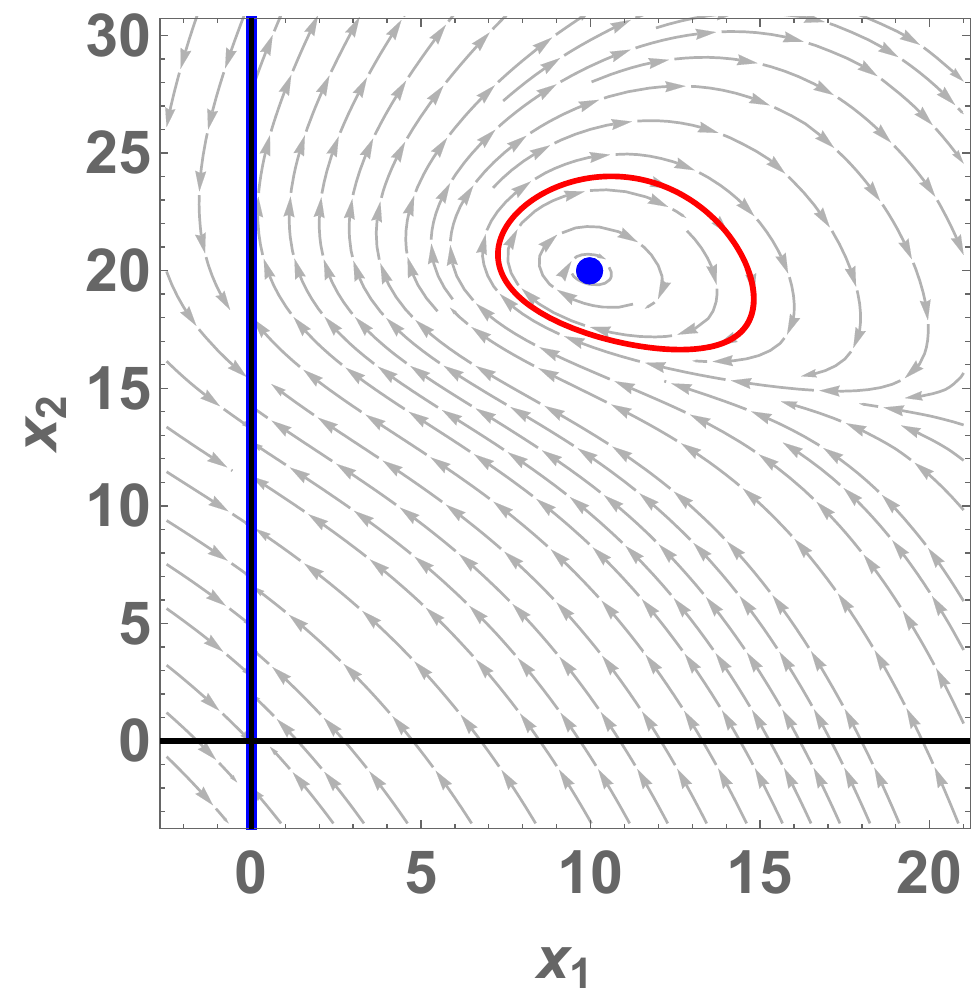}
\hskip 0.5cm
\includegraphics[width=0.27\columnwidth]{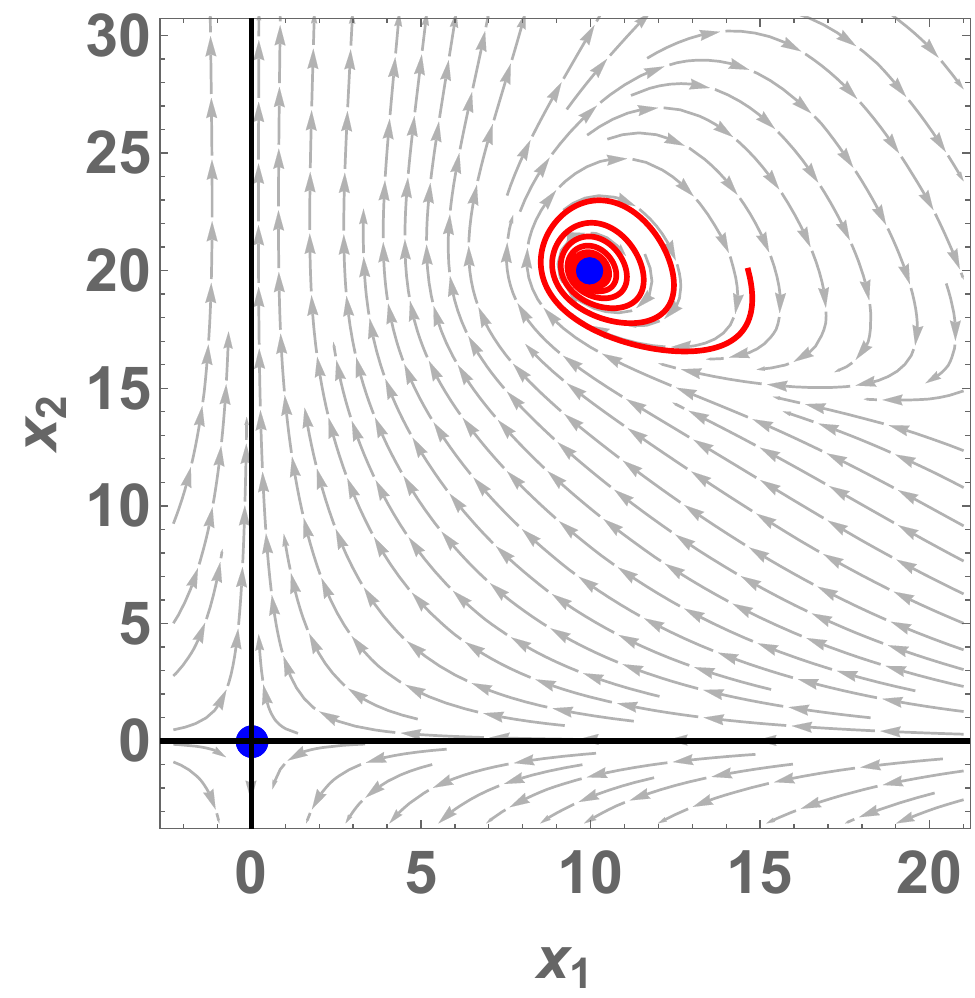}
}
\vskip 0.0cm
\leftline{\hskip 1.0cm (b) \hskip  4.4cm (d) \hskip 4.5cm  (f)}
\vskip 0.2cm
\centerline{
\includegraphics[width=0.3\columnwidth]{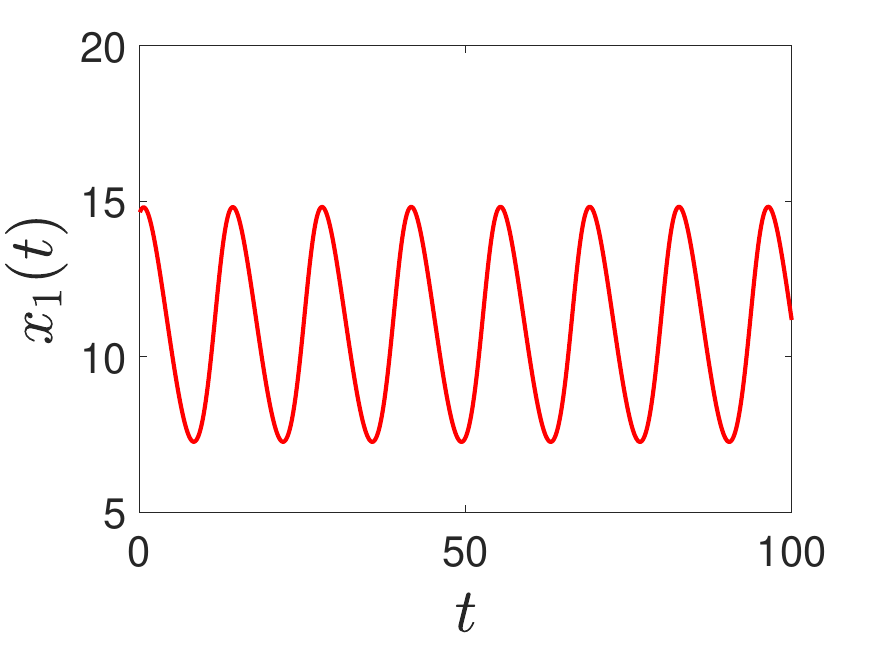}
\includegraphics[width=0.3\columnwidth]{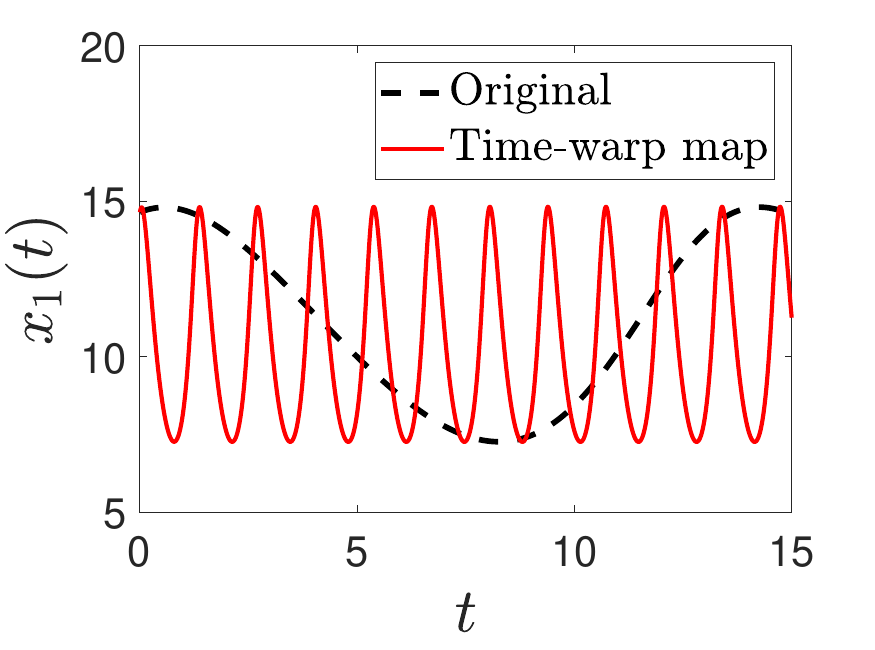}
\includegraphics[width=0.3\columnwidth]{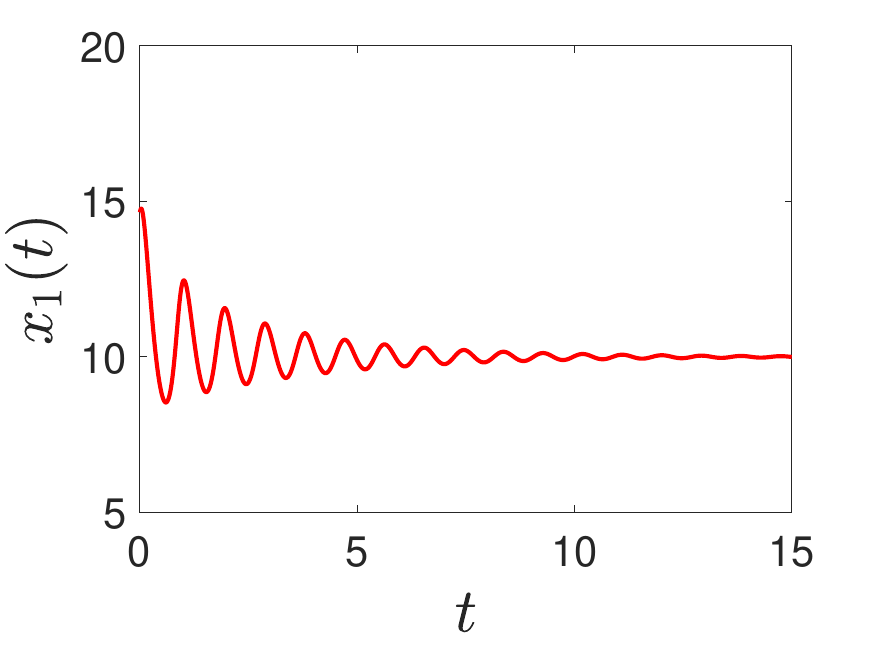}
}
\vskip 0.0cm
\caption{\it{\emph{Application of chemical maps on DS~(\ref{eq:ex1_translated2}).}} 
Panel \emph{(a)} displays state-space
for the original \emph{DS}~$(\ref{eq:ex1_translated2})$, with
the vector field shown as gray arrows,
the equilibrium as the blue dot,
and the limit cycle as the red curve; 
panel \emph{(b)} displays a corresponding $(t, x_1)$-space.
Panels \emph{(c)} and \emph{(d)} show analogous plots 
for \emph{CDS}~$(\ref{eq:ex1_timechange})$, which is obtained by applying 
the time-warp map on~$(\ref{eq:ex1_translated2})$. 
Analogous plots are shown in panels \emph{(e)} and \emph{(f)}
for \emph{CDS}~$(\ref{eq:xfactorable})$,
which is obtained by applying the $\mathbf{x}$-factorable map
on~$(\ref{eq:ex1_translated2})$.
In panel \emph{(e)}, shown as the blue dot 
at the origin is one of the boundary equilibria
introduced by the $\mathbf{x}$-factorable map.
} \label{fig:1}
\end{figure}

\subsection{Time-warp map}
Instead of multiplying only $(-45/4)$ by $x_1$,
as in the naive approach~(\ref{eq:ex1_naive}),
let us instead multiply the whole vector field 
from~(\ref{eq:ex1_translated2}) 
by $x_1$, and denote time by $s$, leading to
the CDS
\begin{align}
\frac{\mathrm{d} x_1}{\mathrm{d} s}
& = x_1 \left(- \frac{45}{4} + \frac{1}{8} x_1
 + \frac{1}{2} x_2 \right), \nonumber \\
\frac{\mathrm{d} x_2}{\mathrm{d} s}
& = x_1 \left(\frac{135}{8} +  \frac{1}{2} x_1 -
 \frac{51}{32} x_2 - \frac{1}{20} x_1 x_2
 + \frac{1}{20} x_2^2 \right).
 \label{eq:ex1_timechange}
\end{align}
We call the map that transforms DS~(\ref{eq:ex1_translated2}) 
into CDS~(\ref{eq:ex1_timechange}) a \emph{time-warp map}.
More generally, the time-warp map can be applied systematically as follows:
assume that the first $K \ge 1$ equations 
from~(\ref{eq:dyn2}) are non-chemical, 
while the remaining $(N-K)$ equations are chemical;
then, multiply \emph{all} $N$ components 
$g_1, g_2, \ldots, g_N$ of the vector field by 
$x_{1} x_{2} \ldots x_{K}$.

This map is equivalent to a state-dependent change of time. 
In particular, consider~(\ref{eq:ex1_translated2})
with an initial condition $\mathbf{x}_0 \in \mathbb{R}_{>}^2$
such that $x_1(t; \mathbf{x}_0) > 0$ over a desired time-interval.
Let us introduce a new time, given by
\begin{align}
s(t) & = \int_0^t \frac{\mathrm{d} \theta}{x_1(\theta; \mathbf{x}_0)}.
\label{eq:time_change_s}
\end{align}
Since $x_1(\theta; \mathbf{x}_0) > 0$ over the desired time-interval, 
$s(t)$ is then well-defined, positive and monotonically increasing;
in particular, it preserves the direction of time $t$, 
and it has an inverse, which we denote by $t(s)$.  
Applying the chain-rule to $x_1 = x_1(t(s))$ 
and $x_2 = x_2(t(s))$ from~(\ref{eq:ex1_translated2}), 
it follows that $\mathrm{d} x_1/\mathrm{d} s = 
(\mathrm{d} t/\mathrm{d} s) \mathrm{d} x_1/\mathrm{d} t 
= x_1 \mathrm{d} x_1/\mathrm{d} t$ 
and similarly $\mathrm{d} x_2/\mathrm{d} s = x_1 \mathrm{d} x_2/\mathrm{d} t$,
which yields~(\ref{eq:ex1_timechange}).
From this consideration, it follows that 
the time-warp map is a chemical one.
In particular, this map qualitatively preserves
 every trajectory that remains within the positive quadrant. 
We verify this fact in Figure~\ref{fig:1}(c), 
which shows state-space for~(\ref{eq:ex1_timechange}).

However, the time-warp map displays three undesirable features.
Firstly, it introduces a continuum of equilibria on the 
$x_2$-axis, which we show in blue in Figure~\ref{fig:1}(c).
Consequently, for initial conditions in certain regions, 
the solutions of~(\ref{eq:ex1_timechange}) asymptotically 
approach the $x_2$-axis - 
a behavior qualitatively different 
from the original DS~(\ref{eq:ex1_translated2}). 
This property arises from the fact that~(\ref{eq:time_change_s})
is not differentiable at time $t$ such that $x_1(t) = 0$.
Secondly, period of oscillations is significantly changed under this map.
In particular, since $x_1(t)$ oscillates around 
the equilibrium $x_1^* = 10$, it follows from~(\ref{eq:time_change_s})
that the period is reduced roughly by a factor of $10$.
This observation is confirmed in Figure~\ref{fig:1}(d): 
we show in red the limit cycle 
of~(\ref{eq:ex1_timechange}) in the $(s,x_1)$-space, 
together with the limit cycle of~(\ref{eq:ex1_translated2}) 
as the black dashed curve.
Finally, the time-warp map in general significantly increases 
degree of DSs, i.e. degree $m$
can be significantly larger than $n$ in Definition~\ref{def:chemical_map}.
For example, if~(\ref{eq:ex1_translated2}) also had a non-chemical term 
in the second equation, then the vector fields would
have to be multiplied by $x_1 x_2$. 
See Appendix~\ref{app:timechange} and e.g.~\cite{Time_change1,Time_change2}
for more details about this map.

\subsection{$\mathbf{x}$-factorable map}
Let us now multiply vector field in the 
first equation of~(\ref{eq:ex1_translated2}) by $x_1$, 
and that in the second equation by $x_2$, thus obtaining CDS
\begin{align}
\frac{\mathrm{d} x_1}{\mathrm{d} t}
& = 
x_1 \left(- \frac{45}{4} + \frac{1}{8} x_1 + \frac{1}{2} x_2 \right), \nonumber \\
\frac{\mathrm{d} x_2}{\mathrm{d} t}
& = 
x_2 \left(\frac{135}{8} +  \frac{1}{2} x_1 - 
\frac{51}{32} x_2 - \frac{1}{20} x_1 x_2
 + \frac{1}{20} x_2^2 \right).
\label{eq:xfactorable}
\end{align}
The map that transforms DS~(\ref{eq:ex1_translated2})
into CDS~(\ref{eq:xfactorable}) 
is called the \emph{$\mathbf{x}$-factorable map}~\cite{Samardzija}.
More generally, the $\mathbf{x}$-factorable map 
can be applied systematically as follows:
multiply the vector field $g_i$ in
equation $i$ from~(\ref{eq:dyn2}) by $x_i$
for all $i = 1, 2, \ldots, N$.
State-space for~(\ref{eq:xfactorable})
is shown in Figure~\ref{fig:1}(e),
while a corresponding $(t,x_1)$-space in Figure~\ref{fig:1}(f).
One can notice that the $\mathbf{x}$-factorable map
has reversed stability of the target equilibrium and 
destroyed the limit cycle.

In contrast to the failure of the $\mathbf{x}$-factorable map 
in Figure~\ref{fig:1}(e)--(f), a number of examples are presented 
in~\cite{Samardzija} indicating that this map can preserve 
equilibria, limit cycles, and even chaotic attractors; 
however, no rigorous results are put forward.
Instead, the authors from~\cite{Samardzija} suggest a heuristic: 
to preserve a dynamical feature in the state-space, this feature
should be translated sufficiently far from the axes 
in the non-negative orthant before the $\mathbf{x}$-factorable map is applied. 
Let us stress that this heuristic 
does not impose any constraints on the direction in which
the target feature is translated; it simply demands that 
the translation is sufficiently large. 
Let us now prove by counter-example that this heuristic can fail
even for hyperbolic equilibria.

\textbf{Counter-example}. 
Consider a perturbed harmonic oscillator
\begin{align}
\frac{\mathrm{d} y_1}{\mathrm{d} t} 
& = \frac{1}{2} y_2 + \frac{25}{2} \varepsilon^2 y_1, \nonumber \\
\frac{\mathrm{d} y_2}{\mathrm{d} t} 
& = -\frac{1}{2} y_1
+ \frac{1}{2} \varepsilon 
\left(- y_1 y_2 + y_2^2 \right)
- \frac{75}{8} \varepsilon^2 y_2. \label{eq:ex1}
\end{align}
Using e.g. theory from~\cite{Perko}[Chapter 4.11], 
one can show that~(\ref{eq:ex1}) has 
an unstable hyperbolic equilibrium 
at the origin surrounded by 
a unique stable hyperbolic limit cycle for all $\varepsilon > 0$ 
sufficiently small.
Under translations
$x_1 = (y_1 + 10)$, $x_2 = (y_2 + 20)$ and with $\varepsilon = 1/10$, 
DS~(\ref{eq:ex1}) becomes~(\ref{eq:ex1_translated2}).

Let us now fix $\varepsilon = 1/10$, translate
the variables via $x_1 = (y_1 + 1/\mu)$ and $x_2 = (y_2 + 2/\mu)$
in~(\ref{eq:ex1}), where $\mu > 0$ is a parameter, 
and then apply the $\mathbf{x}$-factorable map, thus obtaining CDS
\begin{align}
\frac{\mathrm{d} x_1}{\mathrm{d} t} 
& = x_1 \left[\frac{1}{8} \left(x_1 - \frac{1}{\mu} \right)
+ \frac{1}{2} \left(x_2 - \frac{2}{\mu} \right)  \right], \nonumber \\
\frac{\mathrm{d} x_2}{\mathrm{d} t} 
& = x_2 \left[-\frac{1}{2} \left(x_1 - \frac{1}{\mu} \right)
- \frac{1}{20}  \left(x_1 - \frac{1}{\mu} \right) 
\left(x_2 - \frac{2}{\mu} \right)
+ \frac{1}{20}  \left(x_2 - \frac{2}{\mu} \right)^2 
- \frac{3}{32} \left(x_2 - \frac{2}{\mu} \right) \right]. 
\label{eq:xfactorable2}
\end{align}
Note that~(\ref{eq:xfactorable2}) reduces to~(\ref{eq:xfactorable}) 
when $\mu = 1/10$. 
Does the heuristic from~\cite{Samardzija} hold, i.e.
does the equilibrium $(1/\mu,2/\mu)$ from~(\ref{eq:xfactorable2})
qualitatively match the unstable equilibrium $(10,20)$ from~(\ref{eq:ex1_translated2})
if $\mu > 0$ is sufficiently small?
Trace of the Jacobian matrix for~(\ref{eq:xfactorable2}) 
at $(1/\mu,2/\mu)$ is given by
$-1/(16 \mu)$, and the determinant by $61/(128 \mu^2)$. 
Therefore, the equilibrium $(1/\mu,2/\mu)$
is stable for all $\mu > 0$, i.e.
no matter how far it is translated from the boundary of $\mathbb{R}_{\ge}^2$.
Therefore, the heuristic fails, and the $\mathbf{x}$-factorable map, 
not even preserving the hyperbolic equilibrium, is not chemical.

\section{Quasi-chemical map: Theory} 
\label{sec:quasi_chemical}
In Section~\ref{sec:chemical_maps}, we have designed CDSs 
by firstly translating the variables, and then 
multiplying the vector field with appropriate monomials.
In this section, we take a different approach
to designing CDSs:
we first introduce suitable perturbations 
to the vector field, and then employ large translations of variables.
Let us introduce this new approach 
with two examples, before presenting general results.

\textbf{Example}.
Let us fix $\varepsilon = 1/10$ in DS~(\ref{eq:ex1}), 
and perturb its vector field as follows
\begin{align}
\frac{\mathrm{d} z_1}{\mathrm{d} t} 
& = \frac{1}{2} z_2 + \frac{1}{8} z_1
+ \mu z_1 \left[\frac{1}{2} z_2 + \frac{1}{8} z_1 \right], 
\nonumber \\
\frac{\mathrm{d} z_2}{\mathrm{d} t} 
& = -\frac{1}{2} z_1
+ \frac{1}{20}
\left(- z_1 z_2 + z_2^2 \right)
- \frac{3}{32} z_2
+ \frac{\mu}{2} z_2 \left[-\frac{1}{2} z_1
+ \frac{1}{20}
\left(- z_1 z_2 + z_2^2 \right)
- \frac{3}{32} z_2 \right]. \label{eq:ex1_P1}
\end{align}
The target equilibrium and limit cycle of~(\ref{eq:ex1}) are hyperbolic,
and therefore robust; 
see Section~\ref{sec:dynamical_sys}.
Hence, there exists $\mu_0 > 0$ such that
for all $\mu \in (0,\mu_0)$ DS~(\ref{eq:ex1_P1})
displays an equilibrium surrounded by 
a limit cycle that are arbitrarily close and qualitatively equivalent
to those of~(\ref{eq:ex1}). 

\begin{figure}[!htbp]
\vskip  -2.3cm
\leftline{\hskip 0.6cm 
(a) (\ref{eq:ex1_P1c}) with $\mu = 1/10$ \hskip 1.1cm 
(b) (\ref{eq:ex1_P1c}) with $\mu = 1/100$ \hskip 1.1cm 
(c) (\ref{eq:ex1_P1c}) with $\mu = 1/100$}
\vskip  0.2cm
\centerline{
\hskip -0.6cm
\includegraphics[width=0.27\columnwidth]{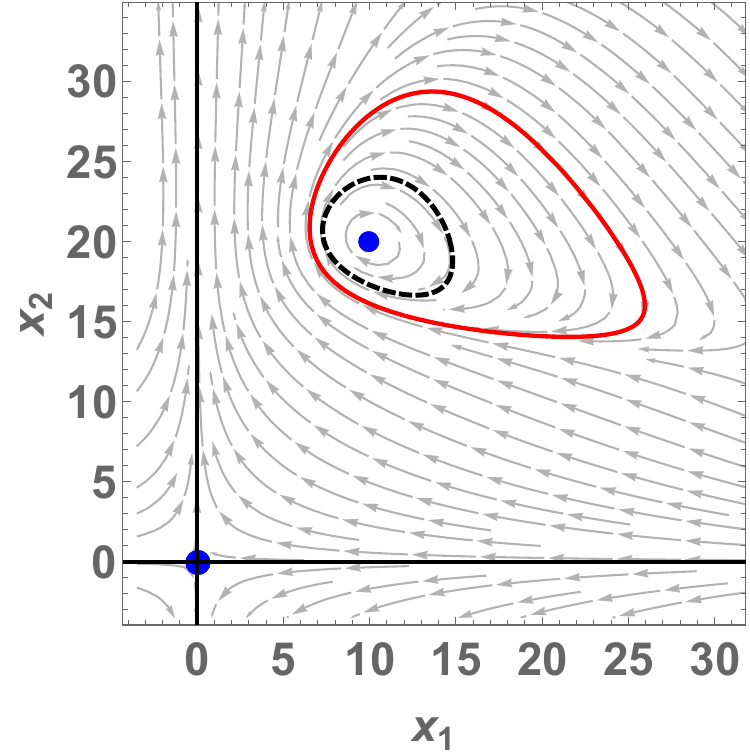}
\hskip 0.3cm
\includegraphics[width=0.283\columnwidth]{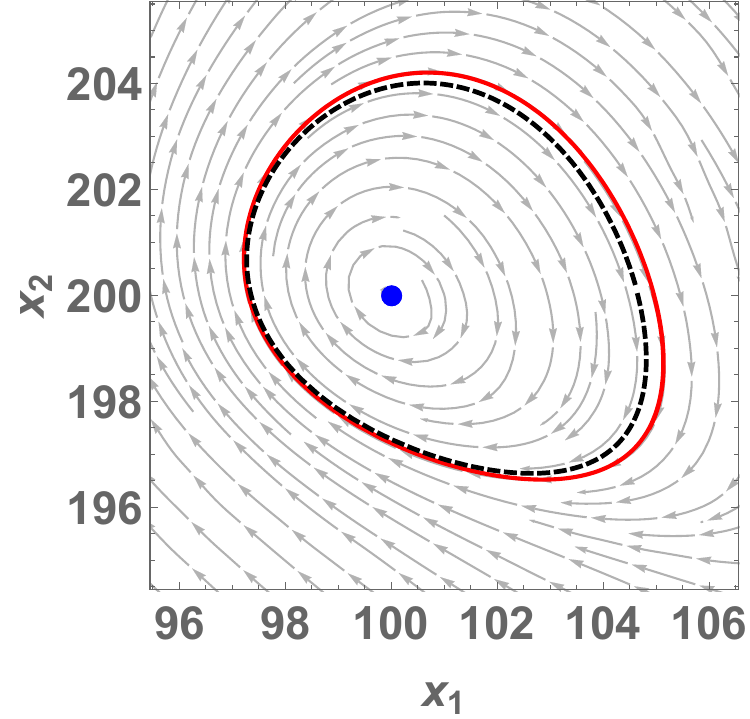}
\hskip 0.3cm
\includegraphics[width=0.35\columnwidth]{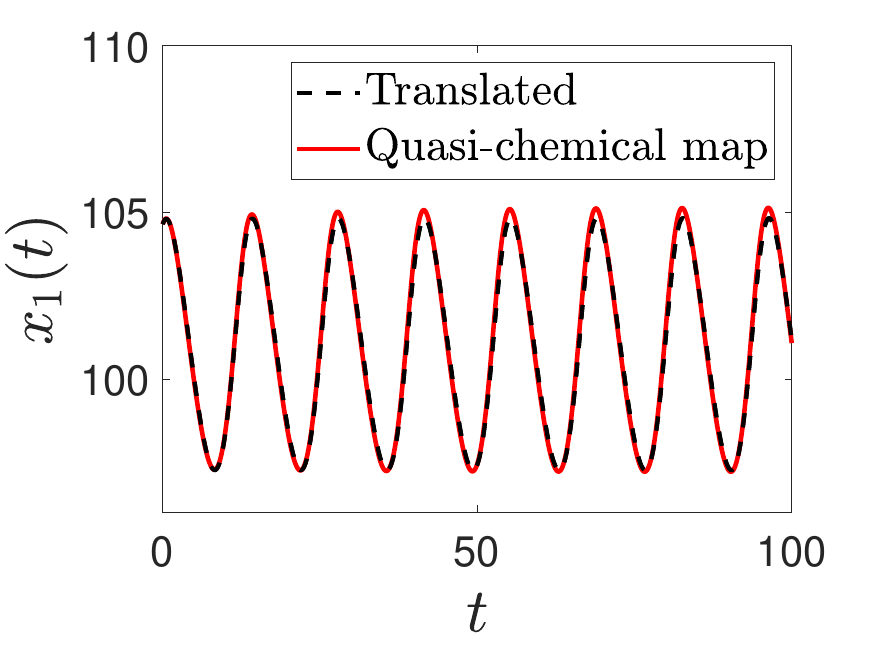}
}
\vskip 0.2cm
\leftline{\hskip 0.6cm 
(d) (\ref{eq:ex1_P2c}) with $\mu = 1/10$ \hskip 1.1cm 
(e) (\ref{eq:ex1_P2c}) with $\mu = 1/100$ \hskip 1.1cm 
(f) (\ref{eq:ex1_P2c}) with $\mu = 1/100$}
\vskip  0.2cm
\centerline{
\hskip -0.6cm
\includegraphics[width=0.27\columnwidth]{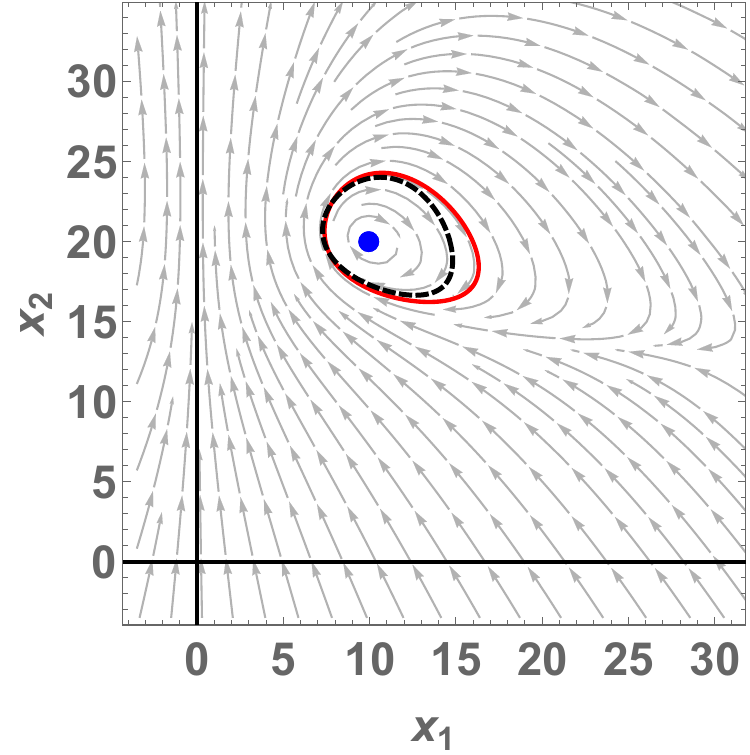}
\hskip 0.3cm
\includegraphics[width=0.283\columnwidth]{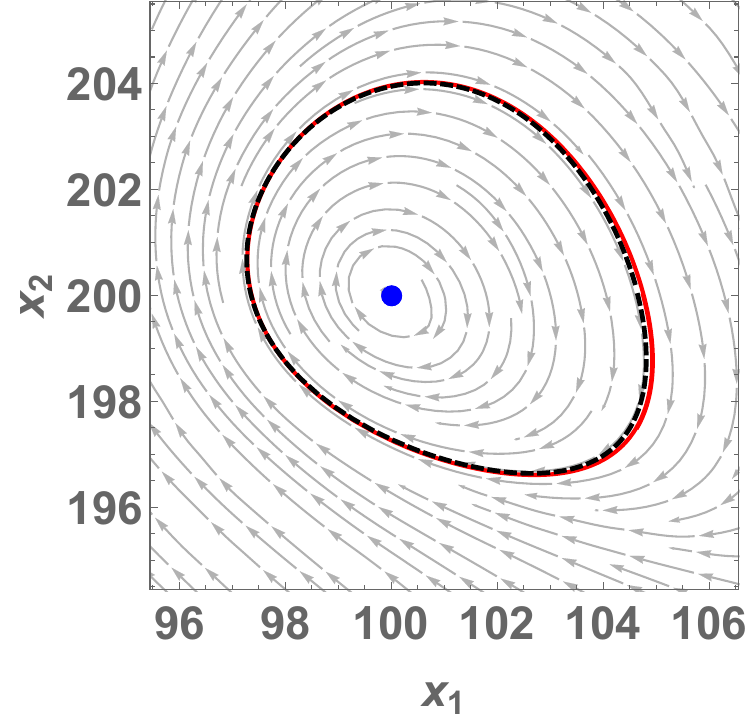}
\hskip 0.3cm
\includegraphics[width=0.35\columnwidth]{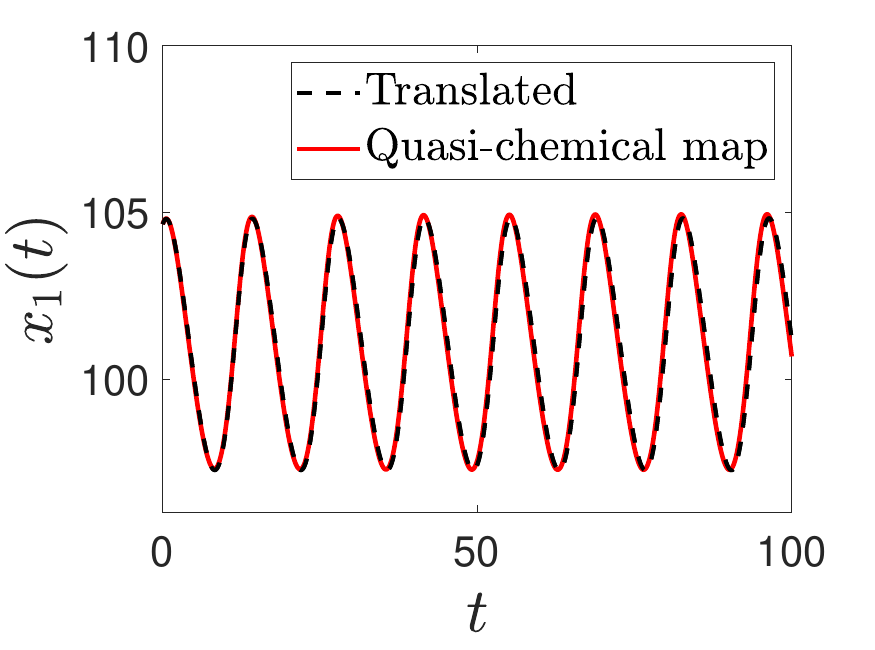}
}
\vskip -0.2cm
\caption{\it{\emph{Application of the 
QCM~(\ref{eq:quasi_chemical}) on DS~(\ref{eq:ex1}).}} 
Panel \emph{(a)} displays state-space
for~$(\ref{eq:ex1_P1c})$ with $\mu = 1/10$,
with the limit cycle shown as the red curve, 
the equilibrium as the blue dot, and vector field
as gray arrows; also shown is the limit cycle 
of~$(\ref{eq:ex1_translated2})$
as the black dashed curve. 
Panel \emph{(b)} displays a similar plot
for~$(\ref{eq:ex1_P1c})$ with $\mu = 1/100$, while panel $(c)$
shows a corresponding time-state space.
Analogous plots are shown in panels \emph{(d)}--\emph{(f)}
for system~$(\ref{eq:ex1_P2c})$.} \label{fig:2}
\end{figure}

Let us now exploit the perturbations to make~(\ref{eq:ex1_P1}) chemical.
To this end, we fix $\mu \in (0,\mu_0)$ and translate 
the variables via $x_1 = (z_1 + 1/\mu)$ and $x_2 = (z_2 + 2/\mu)$,
leading to the qualitatively equivalent CDS
\begin{align}
\frac{\mathrm{d} x_1}{\mathrm{d} t} 
& =  \mu x_1 \left[ \frac{1}{2} \left(x_2 - \frac{2}{\mu} \right) 
+ \frac{1}{8} \left(x_1 - \frac{1}{\mu} \right) \right], 
\nonumber \\
\frac{\mathrm{d} x_2}{\mathrm{d} t} 
& = \frac{\mu}{2} x_2 \left[-\frac{1}{2} \left(x_1 - \frac{1}{\mu} \right)
- \frac{1}{20}  \left(x_1 - \frac{1}{\mu} \right) 
\left(x_2 - \frac{2}{\mu} \right)
+ \frac{1}{20}  \left(x_2 - \frac{2}{\mu} \right)^2 
- \frac{3}{32} \left(x_2 - \frac{2}{\mu} \right) \right]. 
\label{eq:ex1_P1c}
\end{align}
Fixing $\mu = 1/10$ in~(\ref{eq:ex1_P1c}), one obtains 
\begin{align}
\frac{\mathrm{d} x_1}{\mathrm{d} t}
& = \frac{x_1}{10} \left(- \frac{45}{4} + \frac{1}{8} x_1 
+ \frac{1}{2} x_2 \right), \nonumber \\
\frac{\mathrm{d} x_2}{\mathrm{d} t}
& = \frac{x_2}{20} \left(\frac{135}{8} +  \frac{1}{2} x_1 - 
\frac{51}{32} x_2 - \frac{1}{20} x_1 x_2
 + \frac{1}{20} x_2^2 \right).
\label{eq:ex1_P1cb}
\end{align}
In~Figure~\ref{fig:2}(a), we display state-space
for~(\ref{eq:ex1_P1cb}), which contains 
an unstable equilibrium (blue dot) enclosed by a stable
limit cycle (red curve), in qualitative agreement with~(\ref{eq:ex1})
(and, therefore~(\ref{eq:ex1_translated2})).
Note that~(\ref{eq:ex1_P1cb}) can be interpreted
as a correction, via a suitable rescaling of the vector field, 
of $\mathbf{x}$-factorable system~(\ref{eq:xfactorable}), 
which itself fails to qualitatively match the target DS; 
see Figure~\ref{fig:1}(e)--(f).

In~Figure~\ref{fig:2}(a), we also show 
the limit cycle of~(\ref{eq:ex1_translated2})
as dashed black curve, which is in quantitative
disagreement with the limit cycle of~(\ref{eq:ex1_P1cb}).
By design, this mismatch can be made arbitrarily small
if $\mu > 0$ is chosen sufficiently small. For example, 
in Figure~\ref{fig:2}(b), we display state-space 
for~(\ref{eq:ex1_P1c}) with $\mu = 1/100$, 
where one can notice a significantly improved match 
with the target limit cycle.
A corresponding time-state space is shown in Figure~\ref{fig:2}(c), 
demonstrating that the period of oscillations is 
also approximately preserved.

\textbf{Example}. 
Let us again consider DS~(\ref{eq:ex1}) with $\varepsilon = 1/10$, 
and perturb only its first equation:
\begin{align}
\frac{\mathrm{d} z_1}{\mathrm{d} t} 
& = \frac{1}{2} z_2 + \frac{1}{8} z_1
+ \mu z_1 \left[\frac{1}{2} z_2 + \frac{1}{8} z_1 \right], 
\nonumber \\
\frac{\mathrm{d} z_2}{\mathrm{d} t} 
& = -\frac{1}{2} z_1
+ \frac{1}{20}
\left(- z_1 z_2 + z_2^2 \right)
- \frac{3}{32} z_2. \label{eq:ex1_P2}
\end{align}
Translating the variables according to
$x_1 = (z_1 + 1/\mu)$ and $x_2 = (z_2 + 2/\mu)$,
one obtains the qualitatively equivalent system
\begin{align}
\frac{\mathrm{d} x_1}{\mathrm{d} t} 
& =  \mu x_1 \left[ \frac{1}{2} \left(x_2 - \frac{2}{\mu} \right) 
+ \frac{1}{8} \left(x_1 - \frac{1}{\mu} \right) \right], 
\nonumber \\
\frac{\mathrm{d} x_2}{\mathrm{d} t} 
& = -\frac{1}{2} \left(x_1 - \frac{1}{\mu} \right)
- \frac{1}{20}  \left(x_1 - \frac{1}{\mu} \right) 
\left(x_2 - \frac{2}{\mu} \right)
+ \frac{1}{20}  \left(x_2 - \frac{2}{\mu} \right)^2 
- \frac{3}{32} \left(x_2 - \frac{2}{\mu} \right). 
\label{eq:ex1_P2c}
\end{align}
One can readily show that the vector field in the second
equation is chemical for all sufficiently small $\mu > 0$,
making~(\ref{eq:ex1_P2c}) chemical then.
Fixing $\mu = 1/10$ in~(\ref{eq:ex1_P2c}), one obtains CDS
\begin{align}
\frac{\mathrm{d} x_1}{\mathrm{d} t}
& = \frac{x_1}{10} \left(- \frac{45}{4} + \frac{1}{8} x_1 
+ \frac{1}{2} x_2 \right), \nonumber \\
\frac{\mathrm{d} x_2}{\mathrm{d} t}
& =\frac{135}{8} +  \frac{1}{2} x_1 - 
\frac{51}{32} x_2 - \frac{1}{20} x_1 x_2
 + \frac{1}{20} x_2^2.
\label{eq:ex1_P2cb}
\end{align}
In Figure~\ref{fig:2}(d), we display 
 state-space of~(\ref{eq:ex1_P2cb}),
showing an unstable equilibrium and a stable
limit cycle which are in qualitative agreement with~(\ref{eq:ex1_translated2}).
Figure~\ref{fig:2}(e)--(f) displays state and time-state
spaces for~(\ref{eq:ex1_P2c}) when $\mu = 1/100$, 
showing an excellent match with the target DS. 
Let us emphasize an important difference between~(\ref{eq:ex1_P1c})
and~(\ref{eq:ex1_P2c}): the former is cubic, 
while the latter is \emph{quadratic}.

\textbf{Quasi-chemical map}. 
Let us now formalize the map
underlying~(\ref{eq:ex1_P1c}) and~(\ref{eq:ex1_P2c}).

\begin{definition} $($\textbf{Quasi-chemical map}$)$ 
\label{def:quasi_chemical}
Consider \emph{DS}~$(\ref{eq:dyn_components})$. 
Consider also 
\begin{align}
\frac{\mathrm{d} x_i}{\mathrm{d} t} 
& = q_{i} \left(\mathbf{x} - \frac{\mathbf{T}}{\mu} \right)
+ \frac{\mu}{T_i} x_i 
\Big[f_{i} \left(\mathbf{x} - \frac{\mathbf{T}}{\mu} \right) 
- q_{i} \left(\mathbf{x} - \frac{\mathbf{T}}{\mu} \right) \Big], 
\; \; \; i = 1,2,\ldots,N,
\label{eq:quasi_chemical} 
\end{align}
where $\mathbf{T} = (T_1,T_2,\ldots,T_N)^{\top} 
\in \mathbb{R}_{>}^N$ is a fixed vector,
$\mu > 0$ is a free parameter,
and $q_{i}(\mathbf{x} -\mathbf{T}/\mu)
\in \mathbb{P}_n^{\mathcal{C}}(\mathbb{R}^N,\mathbb{R})$
are arbitrary polynomials of degree at most $n$
that are chemical for all sufficiently small $\mu > 0$
and for all $i = 1,2,\ldots,N$; we say that 
$q_{i}(\mathbf{x})$ are \emph{quasi-chemical}.
Then, $\Psi_{\mu} : \mathbb{P}_n(\mathbb{R}^N,\mathbb{R}^N) 
\to \mathbb{P}_{n+1}^{\mathcal{C}}(\mathbb{R}^N,\mathbb{R}^N)$,
mapping the vector field of \emph{DS}~$(\ref{eq:dyn_components})$
to the vector field of \emph{CDS}~$(\ref{eq:quasi_chemical})$
for all sufficiently small $\mu > 0$,
is called a \emph{quasi-chemical map} (\emph{QCM}).
\end{definition}
\noindent
\textbf{Remark}. In Definition~\ref{def:quasi_chemical}, 
we assume that $T_i > 0$. If $T_i = 0$, then we define 
$\mathrm{d} x_i/\mathrm{d} t = 
f_{i}(\mathbf{x} - \mathbf{T}/\mu)$ in~(\ref{eq:quasi_chemical}),
under the additional assumption that 
$f_{i}(\mathbf{x} - \mathbf{T}/\mu)$
with $T_i = 0$ is quasi-chemical. 

\noindent
\textbf{Remark}. In Appendix~\ref{app:quasi_chemical_g}, 
we present a more general definition of the QCM.

The following fundamental lemma is readily provable.
\begin{lemma}
\label{lemma:quasi_chemical}
Under the translation of variables 
$z_i = (x_i - T_i/\mu)$ for any $\mu > 0$, 
\emph{DS}~$(\ref{eq:quasi_chemical})$ becomes
\begin{align}
\frac{\mathrm{d} z_i}{\mathrm{d} t} 
& = f_{i}(\mathbf{z}) + \mu \frac{z_i}{T_i} 
\left[f_{i}(\mathbf{z}) - q_i(\mathbf{z}) \right],
\; \; \; i = 1, 2, \ldots, N.
\label{eq:quasi_chemical_T}
\end{align}
\end{lemma}
\noindent 
\bcol{\textbf{Remark}. Example~(\ref{eq:ex1_P1})--(\ref{eq:ex1_P1c})
is obtained by taking $q_1 = q_2 = 0$, 
$T_1 = 1$ and $T_2 = 2$ in~(\ref{eq:quasi_chemical})--(\ref{eq:quasi_chemical_T}), while example~(\ref{eq:ex1_P2})--(\ref{eq:ex1_P2c})
by taking $q_1 = 0$, $q_2 = f_2$, 
$T_1 = 1$ and $T_2 = 2$.}

Lemma~\ref{lemma:quasi_chemical}
shows that the QCM essentially
just perturbs the target DS~(\ref{eq:dyn_components}),
which allows us to apply the theory of regular perturbations
to deduce a number of results.
We start by showing that the QCM
is a chemical one when it comes to
preserving robust regions in state-space.
To this end, 
for any set $\mathbb{K} \subset \mathbb{R}^N$
and vector $\mathbf{T} \in \mathbb{R}^N$, 
we let $\mathbb{K} + \mathbf{T} 
= \{\mathbf{x} + \mathbf{T} \in \mathbb{R}^N 
\,|\, \mathbf{x} \in \mathbb{K}\}$.

\begin{theorem}$($\textbf{\emph{QCM}}$)$ 
\label{theorem:quasi_chemical}
Assume that \emph{DS}~$(\ref{eq:dyn_components})$ 
is robust in compact set $\mathbb{K} \subset \mathbb{R}^N$. 
Then, for every sufficiently small $\mu > 0$, 
\emph{CDS}~$(\ref{eq:quasi_chemical})$
in $\mathbb{K} + \mathbf{T}/\mu
\subset \mathbb{R}_{\ge}^N$
is qualitatively equivalent to 
\emph{DS}~$(\ref{eq:dyn_components})$ in $\mathbb{K}$.
\end{theorem}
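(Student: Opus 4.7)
The plan is to use the intermediate perturbed system~(\ref{eq:quasi_chemical_T}) as a bridge between the original dynamical system~(\ref{eq:dyn_components}) and the chemical system~(\ref{eq:quasi_chemical}). Concretely, I would establish two separate qualitative equivalences whose composition yields the claim: first, $(\ref{eq:dyn_components})$ in $\mathbb{U}$ is qualitatively equivalent to $(\ref{eq:quasi_chemical_T})$ in $\mathbb{U}$ for small $\mu$; second, $(\ref{eq:quasi_chemical_T})$ in $\mathbb{U}$ is qualitatively equivalent to $(\ref{eq:quasi_chemical})$ in $\mathbb{U} + \mathbf{T}/\mu$ for small $\mu$. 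Composing the two changes of coordinates gives the desired equivalence, and transitivity of qualitative equivalence takes care of the rest.

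For the first equivalence, I would verify that $(\ref{eq:quasi_chemical_T})$ fits the framework of a regular perturbation in the sense of Definition~\ref{def:robustness}. Writing the perturbation as $p_i(\mathbf{z};\mu) = (\mu/T_i)\, z_i\, [f_i(\mathbf{z}) - q_i(\mathbf{z})]$, one sees that each $p_i(\cdot;\mu)$ is a polynomial in $\mathbf{z}$ whose coefficients depend continuously (in fact linearly) on $\mu$, and $p_i(\mathbf{z};0) = 0$. Since by hypothesis $(\ref{eq:dyn_components})$ is robust in $\mathbb{U}$, Definition~\ref{def:robustness} directly delivers a homeomorphism $\boldsymbol{\Phi}_1$ on $\mathbb{U}$, preserving the direction of time, that realizes the qualitative equivalence between $(\ref{eq:dyn_components})$ and $(\ref{eq:quasi_chemical_T})$ for all sufficiently small $\mu > 0$.

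For the second equivalence, I would use the affine translation $\boldsymbol{\Phi}_2(\mathbf{z}) = \mathbf{z} + \mathbf{T}/\mu$, which is a global homeomorphism with continuous inverse and trivially preserves time. Substituting $\mathbf{x} = \mathbf{z} + \mathbf{T}/\mu$ into $(\ref{eq:quasi_chemical_T})$ produces precisely~$(\ref{eq:quasi_chemical})$, and $\boldsymbol{\Phi}_2$ maps $\mathbb{U}$ bijectively onto $\mathbb{U} + \mathbf{T}/\mu$. It remains to check two things: that $\mathbb{U} + \mathbf{T}/\mu \subset \mathbb{R}_{\geq}^N$ for all sufficiently small $\mu > 0$, and that $(\ref{eq:quasi_chemical})$ is indeed chemical on this region. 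The first follows because $\mathbb{U}$ is fixed while $\mathbf{T}/\mu \to \infty$ componentwise, so any finite region $\mathbb{U}$ is eventually pushed into the non-negative orthant. For the second, I inspect the two summands in the right-hand side of~$(\ref{eq:quasi_chemical})$: the term $q_i(\mathbf{x} - \mathbf{T}/\mu)$ is chemical for small $\mu$ by the quasi-chemical hypothesis in Definition~\ref{def:quasi_chemical}, while every monomial of the second term contains an explicit factor of $x_i$, hence vanishes (and is therefore non-negative) whenever $x_i = 0$. By Definition~\ref{def:chem_sys}, the whole vector field is then chemical.

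The main obstacle is quantifying ``sufficiently small $\mu$'' consistently across three competing requirements: $\mu$ must be small enough for the robustness hypothesis to yield qualitative equivalence on $\mathbb{U}$ (call the threshold $\mu_1$), small enough that each $q_i(\mathbf{x} - \mathbf{T}/\mu)$ is chemical (threshold $\mu_2$), and small enough that $\mathbb{U} + \mathbf{T}/\mu \subset \mathbb{R}_{\geq}^N$ (threshold $\mu_3$). Taking $\mu_0 = \min\{\mu_1,\mu_2,\mu_3\}$ and restricting to $\mu \in (0,\mu_0)$ closes the argument. A subtle point worth highlighting is that the equivalence map between $(\ref{eq:dyn_components})$ and $(\ref{eq:quasi_chemical})$ is really $\boldsymbol{\Phi}_2 \circ \boldsymbol{\Phi}_1$, a composition of a $\mu$-dependent local homeomorphism from robustness with a rigid translation; neither step alters the orientation of time, so the overall map satisfies the requirements of qualitative equivalence stated in the remark following Definition~\ref{def:robustness}.
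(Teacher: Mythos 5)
Your proposal is correct and follows essentially the same route as the paper, which justifies Theorem~\ref{theorem:quasi_chemical} by observing that~(\ref{eq:quasi_chemical_T}) is a regular perturbation of~(\ref{eq:dyn_components}) (so robustness applies) and that the translation $x_i = z_i + T_i/\mu$ turns it into the chemical system~(\ref{eq:quasi_chemical}). Your additional bookkeeping of the three smallness thresholds and the explicit verification of chemicality merely spell out what the paper leaves implicit.
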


\begin{proof}
The perturbation from~(\ref{eq:quasi_chemical_T})
is given by $(\mathbf{F}(\mathbf{z};\mu))_i
= \mu (z_i/T_i)[f_{i}(\mathbf{z}) - q_i(\mathbf{z})]$,
and satisfies $\mathbf{F} \in C^1$ and 
$\mathbf{F}(\mathbf{z};0) = \mathbf{0}$.
Hence, since~(\ref{eq:dyn_components})
is robust in $\mathbb{K}$, Definition~\ref{def:robustness}
implies existence of $\mu_0 > 0$
such that~(\ref{eq:quasi_chemical_T})
and~(\ref{eq:dyn_components})
are qualitatively equivalent in $\mathbb{K}$
for every $\mu \in (0,\mu_0)$.
By Lemma~\ref{lemma:quasi_chemical}, 
system~(\ref{eq:quasi_chemical})
in $\mathbb{K} + \mathbf{T}/\mu$
is qualitatively equivalent to~(\ref{eq:quasi_chemical_T}) 
in $\mathbb{K}$ for every $\mu > 0$.
Hence, (\ref{eq:quasi_chemical})
in $\mathbb{K} + \mathbf{T}/\mu
\subset \mathbb{R}_{\ge}^N$
is also qualitatively equivalent 
to~(\ref{eq:dyn_components}) in $\mathbb{K}$
for every $\mu \in (0,\mu_0)$.
\end{proof}
\noindent
\textbf{Remark}. 
In Theorem~\ref{theorem:quasi_chemical}, we 
have assumed that DS~(\ref{eq:dyn_components})
is robust, i.e. remains qualitatively equivalent in $\mathbb{K}$
under \emph{every} continuously differentiable perturbation, 
as per Definition~\ref{def:robustness}.
However, the proof shows that Theorem~\ref{theorem:quasi_chemical} holds
under a much less strict assumption:
it is only required that~(\ref{eq:dyn_components})  
remains qualitatively equivalent in $\mathbb{K}$
 under the \emph{special} $(n+1)$-degree polynomial perturbations
$(\mathbf{F}(\mathbf{z};\mu))_i
= \mu (z_i/T_i)[f_{i}(\mathbf{z}) - q_i(\mathbf{z})]$
from~(\ref{eq:quasi_chemical_T}),
which we call \emph{chemical perturbations}.

\bcol{
\noindent
\textbf{Remark}. Theorem~\ref{theorem:quasi_chemical} implies
that hyperbolic solutions (such as hyperbolic equilibria and limit cycles)
are preserved by the QCM, since such solutions are robust~\cite{Wiggins}[Theorem 3.6.4].}

Theorem~\ref{theorem:quasi_chemical} holds under
any choice of the quasi-chemical functions $q_i$ in~(\ref{eq:quasi_chemical}).
Let us now re-state this theorem
under a special choice of $q_i$.
 
\begin{theorem} 
\label{theorem:quasi_chemical2}
Assume that $N$-dimensional $n$-degree 
\emph{DS}~$(\ref{eq:dyn_components})$
is robust in $\mathbb{K} \subset \mathbb{R}^N$.
Then, for every sufficiently small $\mu > 0$,
 $N$-dimensional $(n+1)$-degree \emph{CDS}
\begin{align}
\frac{\mathrm{d} x_i}{\mathrm{d} t} 
& = \frac{\mu}{T_i} x_i f_{i}\left(\mathbf{x} - \frac{\mathbf{T}}{\mu} \right) ,
\; \; \; i = 1,2,\ldots,N,
\label{eq:quasi_chemical2} 
\end{align}
in $\mathbb{K} + \mathbf{T}/\mu
\subset \mathbb{R}_{\ge}^N$
is qualitatively equivalent 
to \emph{DS}~$(\ref{eq:dyn_components})$ in $\mathbb{K}$.
\end{theorem}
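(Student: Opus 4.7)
The plan is to derive Theorem~\ref{theorem:quasi_chemical2} as a direct specialization of Theorem~\ref{theorem:quasi_chemical}, obtained by taking $q_i \equiv 0$ for every $i = 1, 2, \ldots, N$. First, I would check that this choice is admissible in the sense of Definition~\ref{def:quasi_chemical}: the zero polynomial belongs to $\mathbb{P}_n(\mathbb{R}^N, \mathbb{R})$ for every $n \ge 0$, and because it contains no monomials whatsoever, the non-negativity criterion in Definition~\ref{def:chem_sys} is vacuously met. Hence $q_i \equiv 0$ lies in $\mathbb{P}_n^{\mathcal{C}}(\mathbb{R}^N, \mathbb{R})$ for every $\mu > 0$ and is therefore quasi-chemical.

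Next, I would substitute this choice directly into equation~$(\ref{eq:quasi_chemical})$ of Definition~\ref{def:quasi_chemical}, which instantly collapses to equation~$(\ref{eq:quasi_chemical2})$. I would also verify on the side that~$(\ref{eq:quasi_chemical2})$ is itself chemical, which is needed for consistency: each monomial appearing in $(\mu/T_i)\, x_i\, f_i(\mathbf{x} - \mathbf{T}/\mu)$ carries $x_i$ as an explicit factor, so every monomial in the $i$-th component vanishes whenever $x_i = 0$, and Definition~\ref{def:chem_sys} is satisfied coordinate-by-coordinate. Since $f_i$ has degree at most $n$, multiplication by $x_i$ raises the degree by at most one, confirming that the right-hand side of~$(\ref{eq:quasi_chemical2})$ sits in $\mathbb{P}_{n+1}^{\mathcal{C}}(\mathbb{R}^N, \mathbb{R}^N)$ as claimed.

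Finally, I would invoke Theorem~\ref{theorem:quasi_chemical} with this choice of $q_i$ to conclude that, for all sufficiently small $\mu > 0$, the chemical system~$(\ref{eq:quasi_chemical2})$ in $\mathbb{U} + \mathbf{T}/\mu$ is qualitatively equivalent to the dynamical system~$(\ref{eq:dyn_components})$ in $\mathbb{U}$. There is no genuine obstacle here, since all substantive work — establishing that robustness of the original dynamical system survives the special $(n+1)$-degree chemical perturbation~$(\ref{eq:quasi_chemical_T})$ and the subsequent translation $x_i = z_i + T_i/\mu$ — has already been absorbed into the parent theorem. The only point requiring even a line of justification is the vacuous verification that the zero polynomial qualifies as quasi-chemical, after which the conclusion is immediate.
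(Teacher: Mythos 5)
Your proposal is correct and follows essentially the same route as the paper, which likewise proves Theorem~\ref{theorem:quasi_chemical2} in one line by setting $q_i = 0$ in~(\ref{eq:quasi_chemical}) and appealing to Theorem~\ref{theorem:quasi_chemical}. The additional checks you supply --- that the zero polynomial is vacuously quasi-chemical and that the resulting right-hand side lies in $\mathbb{P}_{n+1}^{\mathcal{C}}(\mathbb{R}^N,\mathbb{R}^N)$ --- are sound and merely make explicit what the paper leaves implicit.
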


\begin{proof}
Fixing the quasi-chemical function $q_i = 0$ for $i = 1, 2, \ldots, N$
in~(\ref{eq:quasi_chemical}) yields~(\ref{eq:quasi_chemical2}).
The statement of the theorem then follows from
Theorem~\ref{theorem:quasi_chemical}.
\end{proof}
\noindent
\textbf{Remark}. Theorem~\ref{theorem:quasi_chemical2}
shows that the $\mathbf{x}$-factorable map~\cite{Samardzija} 
qualitatively preserves robust features if they are translated
uniformly, $T_1 = T_2 = \ldots = T_N$, i.e. the direction
of translation is important; however, 
even in this case, time is distorted.

\subsection{Quantitative properties}
In this section, we augment Theorem~\ref{theorem:quasi_chemical}
with more precise, \emph{quantitative}, results 
for some special robust solutions, namely 
solutions over finite time-intervals, 
hyperbolic equilibria and hyperbolic limit cycles.
For more complicated solutions, 
quantitative results are more difficult to obtain;
in such cases, trapping regions can be useful
(see Appendix~\ref{app:background}). 
In what follows, we write $\mathbf{x}(\mu)
= \mathbf{y} + \mathcal{O}(\mu)$
if $\|\mathbf{x}(\mu) - \mathbf{y}\|
\le c \mu$ for all sufficiently small $\mu > 0$,
where $c > 0$ is a $\mu$-independent constant.

\begin{theorem}$($\textbf{\emph{QCM}: Quantitative properties}$)$ 
\label{theorem:quasi_chemical_EL}
\begin{enumerate}
\item[\emph{(i)}] \textbf{Finite time-intervals}. 
Let $\mathbf{y}(t) = \mathbf{y}(t;\mathbf{y}_0)$
be a solution of~$(\ref{eq:dyn_components})$ 
for all $t \in [0, t_0]$ for some finite $t_0 > 0$. 
Then, for all sufficiently small $\mu$
$(\ref{eq:quasi_chemical})$ has a unique solution 
$\mathbf{x}(t;\mu) = \mathbf{x}(t;\mu,\mathbf{y}_0 + \mathbf{T}/\mu)$ 
for all $t \in [0, t_0]$. Furthermore,
$ \mathbf{x}(t;\mu) = (\mathbf{y}(t) + \mathbf{T}/\mu) 
+ \mathcal{O}(\mu)$
uniformly for $t \in [0, t_0]$.
\item[\emph{(ii)}] \textbf{Equilibria}. 
Let $\mathbf{y}^*$ be a 
hyperbolic equilibrium of~$(\ref{eq:dyn_components})$.
Then, for all sufficiently small $\mu$
$(\ref{eq:quasi_chemical})$ has a 
unique hyperbolic equilibrium $\mathbf{x}^*(\mu)$
in a neighborhood of $(\mathbf{y}^* + \mathbf{T}/\mu)$.
In particular, $\mathbf{x}^*(\mu) 
= (\mathbf{y}^* + \mathbf{T}/\mu) + \mathcal{O}(\mu)$.
Equilibrium $\mathbf{x}^*(\mu)$ is qualitatively 
equivalent to $\mathbf{y}^*$. In particular, 
eigenvalues $\lambda_{1,\mu}, \ldots, \lambda_{N,\mu}$ associated 
to $\mathbf{x}^*(\mu)$ can be ordered so that 
$\lim_{\mu \to 0} \lambda_{i,\mu} = \lambda_{i,0}$
for all $i = 1,\ldots, N$, 
where $\lambda_{1,0}, \ldots, \lambda_{N,0}$ are 
the eigenvalues associated to $\mathbf{y}^*$.
\item[\emph{(iii)}] \textbf{Limit cycles}. 
Let $\mathbf{y}_{\tau_0}$ be a 
hyperbolic limit cycle of~$(\ref{eq:dyn_components})$
with period $\tau_0 > 0$.
Then, for all sufficiently small $\mu$
$(\ref{eq:quasi_chemical})$ has a 
unique hyperbolic limit cycle $\mathbf{x}_{\tau_{\mu}}$
with period $\tau_{\mu} > 0$
in a neighborhood of $(\mathbf{y}_{\tau_0} + \mathbf{T}/\mu)$.
In particular, $\mathbf{x}_{\tau_{\mu}}(t) 
= (\mathbf{y}_{\tau_{0}}(t) + \mathbf{T}/\mu) + \mathcal{O}(\mu)$ 
uniformly over finite time-intervals, 
and $\tau_{\mu} = \tau_0 + \mathcal{O}(\mu)$.
Limit cycle $\mathbf{x}_{\tau_{\mu}}$ is qualitatively 
equivalent to $\mathbf{y}_{\tau_0}$. In particular, 
characteristic exponents 
$\rho_{1,\mu}, \ldots, \rho_{N,\mu}$ associated 
to $\mathbf{x}_{\tau_{\mu}}$ can be ordered so that 
$\lim_{\mu \to 0} \rho_{i,\mu} = \rho_{i,0}$
for all $i = 1, \ldots, N$, 
where $\rho_{1,0}, \ldots, \rho_{N,0}$ are 
the characteristic exponents associated to $\mathbf{y}_{\tau_0}$.
\item[\emph{(iv)}] \textbf{Trapping regions}. 
Let $\mathbb{S}_0 = \{\mathbf{y} \in \mathbb{R}^N | V(\mathbf{y}) \le r
\textrm{ for some }  r > 0\}$ be a 
trapping region for~$(\ref{eq:dyn_components})$. 
Then, $\mathbb{S}_{\mu} = \{\mathbf{x} \in \mathbb{R}^N | 
V(\mathbf{x} - \mathbf{T}/\mu) \le r\}$ is a trapping region 
for~$(\ref{eq:quasi_chemical})$ for all sufficiently small $\mu > 0$.
\end{enumerate}
\end{theorem}

\begin{proof}
To prove statement~(i), we use the fact that,
if $\mathbf{y}(t) = \mathbf{y}(t;\mathbf{y}_0)$
is a solution of~$(\ref{eq:dyn_components})$ 
for all $t \in [0, t_0]$, 
then for all sufficiently small $\mu > 0$
system~(\ref{eq:quasi_chemical_T}) has a unique solution 
$\mathbf{z}(t;\mu) = \mathbf{z}(t;\mu,\mathbf{y}_0)$ 
for all $t \in [0, t_0]$, and 
$\mathbf{z}(t;\mu) = \mathbf{y}(t) + \mathcal{O}(\mu)$
uniformly for $t \in [0, t_0]$~\cite{Coddington}[Chapter 1].
Statement~(i) then follows from Lemma~\ref{lemma:quasi_chemical}
via the translation of variables $\mathbf{x} = \mathbf{z} + \mathbf{T}/\mu$.
Statements~(ii) and~(iii) are proved analogously,
using e.g. the theory
from~\cite{Coddington}[Chapter 14].
To prove statement~(iv), we use Definition~\ref{def:trapping_region}
to deduce that $\mathbf{f}(\mathbf{y}) \cdot \mathbf{n}(\mathbf{y}) < 0$
for all $\mathbf{y} \in \mathbb{R}^N$ on the boundary 
$V(\mathbf{y}) = r$ with outward-pointing normal $\mathbf{n}(\mathbf{y})
\in \mathbb{R}^N$,
where $\cdot$ denotes the dot product.
Therefore, the vector field of~(\ref{eq:quasi_chemical_T}) 
satisfies $\mathbf{f}(\mathbf{z}) \cdot \mathbf{n}(\mathbf{z})
+ \mathcal{O}(\mu) < 0$ on the boundary $V(\mathbf{z}) = r$ 
for every $\mu > 0$ sufficiently small;
hence, $\mathbb{S}_0$ is also a trapping region for~(\ref{eq:quasi_chemical_T}).
Statement~(iv) then follows from Lemma~\ref{lemma:quasi_chemical}.
\end{proof}

Theorem~\ref{theorem:quasi_chemical_EL}(i)
shows that the corresponding solutions of CDS~(\ref{eq:quasi_chemical})
and target DS~(\ref{eq:dyn_components}) 
stay, up to a translation, within $\mathcal{O}(\mu)$-distance
for any finite time-interval. This alignment 
of trajectories holds without a time re-scaling,
i.e. the QCM preserves the time-scale (time-parametrization) along the solutions;
this is in contrast to the time-warp 
map from Section~\ref{sec:chemical_maps}.
Theorem~\ref{theorem:quasi_chemical_EL}(ii)
shows that the corresponding hyperbolic equilibria 
of~(\ref{eq:quasi_chemical}) and~(\ref{eq:dyn_components})
are also within $\mathcal{O}(\mu)$-distance.
Furthermore, the corresponding eigenvalues
are arbitrarily close; consequently,
the QCM preserves, not only stability,
but also the type of hyperbolic equilibria.
Analogous conclusions are reached for hyperbolic limit cycles
in Theorem~\ref{theorem:quasi_chemical_EL}(iii);
in addition, the periods of oscillations of the 
corresponding limit cycles are $\mathcal{O}(\mu)$-close.
Finally, Theorem~\ref{theorem:quasi_chemical_EL}(iv)
shows that the QCM preserves trapping regions.

\subsection{Bifurcations}
\label{sec:bifurcations}
Theorem~\ref{theorem:quasi_chemical} is 
applicable when the target DS~(\ref{eq:dyn})
is robust in $\mathbb{K} \subset \mathbb{R}^N$. 
In this section, we consider the case when~(\ref{eq:dyn}) 
is not robust in $\mathbb{K}$.
To this end, we now explicitly display
dependence of vector fields on some parameters,
which we always assume is continuously differentiable.
In particular, to indicate dependence of~(\ref{eq:dyn}) 
on some parameters $\boldsymbol{\beta}\in \mathbb{R}^P$, we write
\begin{align}
\frac{\mathrm{d} \mathbf{y}}{\mathrm{d} t} & = 
\mathbf{f}(\mathbf{y};\boldsymbol{\beta}), 
\; \; \; \textrm{where } 
\mathbf{f}(\mathbf{y};\cdot) \in C^1(\mathbb{R}^P,\mathbb{R}^N).
\label{eq:dynp} 
\end{align}
Consider also the perturbed DS~(\ref{eq:dyn3}), written as
\begin{align}
\frac{\mathrm{d} \mathbf{z}}{\mathrm{d} t} & = 
\mathbf{f}(\mathbf{z};\boldsymbol{\beta})
+
\mathbf{F}(\mathbf{z};\boldsymbol{\beta},\mu),
\; \; \; \textrm{where } 
\mathbf{F} \in C^1(\mathbb{R}^N \times \mathbb{R}^P \times \mathbb{R}_{\ge},\mathbb{R}^N)
\textrm{ and } \mathbf{F}(\mathbf{z};\boldsymbol{\beta},0) = \mathbf{0}.
\label{eq:dyn3p} 
\end{align}
Let us fix the parameters to
$\boldsymbol{\beta} = \boldsymbol{\beta}^* \in \mathbb{R}^P$, 
and assume that then~(\ref{eq:dynp}) is not robust
in compact set $\mathbb{K} \subset \mathbb{R}^N$.
This means that, for some perturbations $\mathbf{F}$, 
DSs~(\ref{eq:dyn3p}) and~(\ref{eq:dynp}) at 
$\boldsymbol{\beta} = \boldsymbol{\beta}^*$ 
are qualitatively different in $\mathbb{K}$, 
and~(\ref{eq:dynp}) is said to undergo a \emph{bifurcation}~\cite{Perko,Kuznetsov,Wiggins}.
Even though~(\ref{eq:dyn3p}) at $\boldsymbol{\beta} = \boldsymbol{\beta}^*$
is in general qualitatively different from~(\ref{eq:dynp})
at $\boldsymbol{\beta} = \boldsymbol{\beta}^*$,
(\ref{eq:dyn3p}) at a slightly different parameter value,
say $\boldsymbol{\beta} = \boldsymbol{\beta}^{**}  \in \mathbb{R}^P$, 
may be qualitatively equivalent to~(\ref{eq:dynp})
at $\boldsymbol{\beta} = \boldsymbol{\beta}^*$.
In this case, we say that the underlying bifurcation is robust. 

\begin{definition} $($\textbf{Robust bifurcation}$)$ 
\label{def:robust_bifurcation}
Assume that \emph{DS}~$(\ref{eq:dynp})$
at $\boldsymbol{\beta} = \boldsymbol{\beta}^* \in \mathbb{R}^P$
is not robust and undergoes a bifurcation $\mathcal{B}$ 
in $\mathbb{K} \subset \mathbb{R}^N$.
Assume furthermore that for every $\varepsilon > 0$, 
and for every $\mathbf{F} \in C^1$
with $\mathbf{F}(\mathbf{z};\boldsymbol{\beta},0) = \mathbf{0}$,
there exists $\mu_0 > 0$ such that for every $\mu \in (0,\mu_0)$
there exists $\boldsymbol{\beta}^{**}  \in \mathbb{R}^P$, 
with $\|\boldsymbol{\beta}^{**} - \boldsymbol{\beta}^{*}\| < \varepsilon$,
such that~$(\ref{eq:dyn3p})$ at 
$\boldsymbol{\beta} = \boldsymbol{\beta}^{**}$
and~$(\ref{eq:dynp})$ at 
$\boldsymbol{\beta} = \boldsymbol{\beta}^*$
are qualitatively equivalent in $\mathbb{K}$.
Then, \emph{DS}~$(\ref{eq:dynp})$ is said to undergo
at $\boldsymbol{\beta} = \boldsymbol{\beta}^*$
a \emph{robust bifurcation} $\mathcal{B}$ in $\mathbb{K}$.
\end{definition} 
\bcol{ \noindent
\textbf{Remark}. Generic bifurcations
(such saddle-node, generic Hopf and homoclinic bifurcations) 
are robust~\cite{Perko,Kuznetsov,Wiggins}.}

We now show that the QCM preserves robust bifurcations.

\begin{theorem}$($\textbf{\emph{QCM}: Bifurcations}$)$ 
\label{theorem:quasi_chemical_bifurcation}
Assume that \emph{DS}~$(\ref{eq:dynp})$
at $\boldsymbol{\beta} = \boldsymbol{\beta}^* \in \mathbb{R}^P$
undergoes a robust bifurcation $\mathcal{B}$ 
in $\mathbb{K} \subset \mathbb{R}^N$.
Then, for every sufficiently small $\mu > 0$ 
there exists $\boldsymbol{\beta}^{**} \in \mathbb{R}^P$,
arbitrarily close to $\boldsymbol{\beta}^*$,
such that \emph{CDS}~$(\ref{eq:quasi_chemical})$
at $\boldsymbol{\beta} = \boldsymbol{\beta}^{**}$
undergoes bifurcation $\mathcal{B}$ 
in $\mathbb{K} + \mathbf{T}/\mu
\subset \mathbb{R}_{\ge}^N$.
\end{theorem}

\begin{proof}
The perturbation from~(\ref{eq:quasi_chemical_T})
is given by $(\mathbf{F}(\mathbf{z};\boldsymbol{\beta},\mu))_i
= \mu (z_i/T_i) \left[f_{i}(\mathbf{z};\boldsymbol{\beta}) - 
q_i(\mathbf{z};\boldsymbol{\beta}) \right]$,
and satisfies $\mathbf{F} \in C^1$ and 
$\mathbf{F}(\mathbf{z};\boldsymbol{\beta},0) = \mathbf{0}$.
Therefore, using the assumption that~(\ref{eq:dynp})
at $\boldsymbol{\beta} = \boldsymbol{\beta}^*$
undergoes robust bifurcation $\mathcal{B}$ 
in $\mathbb{K} \subset \mathbb{R}^N$,
Definition~\ref{def:robust_bifurcation} 
implies that for every $\varepsilon > 0$ there exists 
$\mu_0 > 0$ such that for all $\mu \in (0,\mu_0)$
there exists $\boldsymbol{\beta}^{**}  \in \mathbb{R}^P$, 
with $\|\boldsymbol{\beta}^{**} - \boldsymbol{\beta}^{*}\| < \varepsilon$,
such that~(\ref{eq:quasi_chemical_T}) at 
$\boldsymbol{\beta} = \boldsymbol{\beta}^{**}$
undergoes bifurcation $\mathcal{B}$ in $\mathbb{K} \subset \mathbb{R}^N$.
Since bifurcations are preserved
 under translations of dependent variables,
statement of the theorem then follows from
Lemma~\ref{lemma:quasi_chemical}.
\end{proof}
\noindent
\textbf{Remark}. 
Analogous to Theorem~\ref{theorem:quasi_chemical},
Theorem~\ref{theorem:quasi_chemical_bifurcation}
holds for all bifurcations that are
robust with respect to the chemical perturbations;
robustness with respect to other types of
perturbations is irrelevant for the theorem.

\bcol{
Analogous to Theorem~\ref{theorem:quasi_chemical2}, 
we now prove that $(n+1)$-degree CDSs can realize
every robust bifurcation displayed by $n$-degree DSs
of the same dimension.

\begin{theorem} 
	\label{theorem:quasi_chemical2_bifurcations}
	Assume that $N$-dimensional $n$-degree 
	\emph{DS}~$(\ref{eq:dynp})$
	at $\boldsymbol{\beta} = \boldsymbol{\beta}^*$
	undergoes a robust bifurcation $\mathcal{B}$ 
	in $\mathbb{K} \subset \mathbb{R}^N$.
	Then, for every sufficiently small $\mu > 0$,
	$N$-dimensional $(n+1)$-degree \emph{CDS}
	\begin{align}
		\frac{\mathrm{d} x_i}{\mathrm{d} t} 
		& = \frac{\mu}{T_i} x_i f_{i}\left(\mathbf{x} - \frac{\mathbf{T}}{\mu};\boldsymbol{\beta}\right) ,
		\; \; \; i = 1,2,\ldots,N,
		\label{eq:quasi_chemical2_bifurcation} 
	\end{align}
	at some $\boldsymbol{\beta} = \boldsymbol{\beta}^{**}$, 
	arbitrarily close to $\boldsymbol{\beta}^*$,
	undergoes bifurcation $\mathcal{B}$ 
	in $\mathbb{K} + \mathbf{T}/\mu
	\subset \mathbb{R}_{\ge}^N$.
\end{theorem}

\begin{proof}
	Choosing $q_i = 0$ in~(\ref{eq:quasi_chemical}), one obtains~(\ref{eq:quasi_chemical2_bifurcation}),
	and the theorem follows from
	Theorem~\ref{theorem:quasi_chemical_bifurcation}.
\end{proof}
}

\section{Quasi-chemical map: Applications} 
\label{sec:quasi_chemical_applications}
In this section, we use the results 
from Section~\ref{sec:quasi_chemical}
to transform some DSs with desired behaviors to CDSs.
These tasks could simply be performed by 
using the special QCM~(\ref{eq:quasi_chemical2}).
However, in what follows, we instead focus on 
the more general QCM~(\ref{eq:quasi_chemical}),
which has a key advantage:
the quasi-chemical functions $q_i$ may be chosen
to reduce the number of higher-degree terms 
in the resulting CDSs, or even 
preserve the degree of the target DSs.
We state most of the results in this section in terms of CRNs; 
see Appendix~\ref{app:background}
for more details on how to construct 
CRNs for a given CDS.

\subsection{CRN with arbitrary many limit cycles} 
\label{sec:oscillations}
Let us consider the two-dimensional DS
\begin{align}
\frac{\mathrm{d} y_1}{\mathrm{d} t} 
& = y_2 + \varepsilon (\beta_0 y_1 + \beta_1 y_1^3 +
 \ldots + \beta_{n} y_1^{2 n + 1}), \nonumber \\
\frac{\mathrm{d} y_2}{\mathrm{d} t} & = - y_1.
\label{eq:HO_1}
\end{align}
Under a suitable choice of parameters
$\beta_0,\beta_1,\ldots,\beta_n$, and for all
$\varepsilon > 0$ sufficiently small,
(\ref{eq:HO_1}) has $n$
nested hyperbolic limit cycles~\cite{Eckweiler,Perko}.
We also prove this fact, and provide explicit 
expressions for a suitable parameter choice,
in Appendix~\ref{app:linearoscillator}. 
We now map DS~(\ref{eq:HO_1}) into a CDS
of the same degree. To this end, 
we denote two irreversible reactions 
$n X \xrightarrow[]{\alpha} m X$
and 
$m X \xrightarrow[]{\beta} n X$
as a single reversible reaction 
$n X \xrightleftharpoons[\beta]{\alpha} m X$.

\begin{theorem}$($\textbf{\emph{CRN} with arbitrary many limit cycles}$)$  \label{theorem:first_order}
Consider the \emph{CRN}
\begin{align}
\varnothing &\xrightleftharpoons[\alpha_1]{\alpha_0} X_1, 
\; \; 2 X_1 \xrightleftharpoons[\alpha_3]{\alpha_2} 3 X_1, 
\; \; \ldots, 
\; \; 2 n X_1 \xrightleftharpoons[\alpha_{2 n + 1}]{\alpha_{2 n}} (2 n + 1) X_1, 
\nonumber \\
X_2 & \xrightarrow[]{\alpha_{2 n + 2}} X_1 + 2 X_2, 
\; \; X_1 + X_2 \xrightarrow[]{\alpha_{2 n + 3}} X_1. 
\label{eq:HO_1_C}
\end{align}
Let $0 < r_1< r_2 < \ldots < r_n$ be arbitrary real numbers.
Then, there exist coefficients $\alpha_i 
= \alpha_i(r_1,r_2,\ldots,r_N)$ for $i = 0, 1, \ldots, 2 N + 3$ 
such that \emph{CRN}~$(\ref{eq:HO_1_C})$ has 
$n$ nested hyperbolic limit cycles.
The $i$th limit cycle is arbitrarily close to
a circle of radius $r_i$,
and has period arbitrarily close to $2 \pi$. 
The $n$th limit cycle is stable, 
and the limit cycles alternate in stability.
\end{theorem}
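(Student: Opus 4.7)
The plan is to realize the mass-action ODE induced by CRN~$(\ref{eq:HO_1_C})$ as the image of the perturbed harmonic oscillator~$(\ref{eq:HO_1})$ under a specific instance of the quasi-chemical map from Definition~\ref{def:quasi_chemical}, and then to combine Theorem~\ref{theorem:quasi_chemical} with the Eckweiler-type limit-cycle construction for~$(\ref{eq:HO_1})$ given in Appendix~\ref{app:linearoscillator}.

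I would first write out the canonical mass-action ODE for~$(\ref{eq:HO_1_C})$,
\begin{align*}
\dot x_1 &= \alpha_0 - \alpha_1 x_1 + \sum_{k=1}^{n}\bigl(\alpha_{2k} x_1^{2k} - \alpha_{2k+1} x_1^{2k+1}\bigr) + \alpha_{2n+2} x_2, \\
\dot x_2 &= x_2\bigl(\alpha_{2n+2} - \alpha_{2n+3} x_1\bigr),
\end{align*}
and then apply the quasi-chemical map~$(\ref{eq:quasi_chemical})$ to~$(\ref{eq:HO_1})$ with the choices $T_1 = T_2 =: T$, $q_2 \equiv 0$, and $q_1(\mathbf z) := f_1(\mathbf z) = z_2 + \varepsilon\sum_{k=0}^n \beta_k z_1^{2k+1}$. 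Setting $q_1 = f_1$ annihilates the perturbation in~$(\ref{eq:quasi_chemical_T})$ for the first equation (thereby keeping its degree at $2n+1$, as discussed below Theorem~\ref{theorem:quasi_chemical2} and exemplified by~$(\ref{eq:ex1_P2c})$), while $q_2=0$ multiplies the second equation by $x_2$. A direct substitution gives the translated system $\dot x_1 = (x_2 - T/\mu) + \varepsilon\sum_k \beta_k (x_1 - T/\mu)^{2k+1}$ and $\dot x_2 = x_2\bigl(1 - (\mu/T) x_1\bigr)$, which matches the display above upon identifying $\alpha_{2n+2} = 1$, $\alpha_{2n+3} = \mu/T$, and reading off $\alpha_0, \ldots, \alpha_{2n+1}$ as the coefficients of $-T/\mu + \varepsilon\sum_k \beta_k (x_1 - T/\mu)^{2k+1}$ in powers of $x_1$.

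The main obstacle is verifying positivity of all the $\alpha_i$, which is simultaneously the condition needed for $q_1(\mathbf x - T/\mu)$ to be chemical in the sense of Definition~\ref{def:quasi_chemical}. Writing $R := T/\mu$, the coefficient of $x_1^j$ in $\varepsilon\sum_k \beta_k (x_1 - R)^{2k+1}$ is dominated for $R$ large by the $k=n$ contribution $\varepsilon\beta_n \binom{2n+1}{j}(-R)^{2n+1-j}$, whose sign alternates with $j$, while the terms with $k<n$ are of order $R^{2n-1-j}$ and hence sub-leading. The Eckweiler polynomials produced in Appendix~\ref{app:linearoscillator} carry $\beta_n < 0$ (a property forced by requiring the outermost limit cycle to be stable), so the dominant sign pattern matches exactly the required alternation $\alpha_{2k} > 0$, $\alpha_{2k+1} > 0$; the same leading behaviour also forces $\alpha_0 > 0$ through the large negative constant term. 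Choosing $R$ sufficiently large (equivalently $\mu$ small enough with $T$ fixed) therefore delivers strictly positive rate coefficients.

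Having exhibited CRN~$(\ref{eq:HO_1_C})$ as a legitimate quasi-chemical image of~$(\ref{eq:HO_1})$, I would invoke Theorem~\ref{theorem:quasi_chemical} on a region $\mathbb U \subset \mathbb R^2$ containing a disk of radius slightly larger than $r_n$ about the origin: for all sufficiently small $\mu > 0$, the CRN in $\mathbb U + (T/\mu, T/\mu)$ is qualitatively equivalent to~$(\ref{eq:HO_1})$ in $\mathbb U$. Appendix~\ref{app:linearoscillator} then supplies $\varepsilon > 0$ and explicit coefficients $\beta_0,\ldots,\beta_n$ with $\beta_n<0$ for which~$(\ref{eq:HO_1})$ carries exactly $n$ hyperbolic limit cycles arbitrarily close to $(y_1(t), y_2(t)) = (r_i\cos t, -r_i\sin t)$, with periods arbitrarily close to $2\pi$, outer cycle stable and stabilities alternating. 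Applying the quantitative closeness asserted in Theorem~\ref{theorem:quasi_chemical_EL} (trajectories, periods and characteristic exponents of hyperbolic limit cycles are preserved up to $O(\mu)$), each such cycle lifts to a hyperbolic limit cycle of~$(\ref{eq:HO_1_C})$ which, after subtracting the translation $(T/\mu, T/\mu)$, is arbitrarily close to the circle of radius $r_i$ with the claimed period and stability, completing the proof.
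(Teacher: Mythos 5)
Your proposal is correct and follows essentially the same route as the paper: apply the quasi-chemical map with $q_1 = f_1$, $q_2 = 0$ and translation by $\mathbf{T}/\mu$, read off the rate coefficients, and conclude via Theorem~\ref{theorem:quasi_chemical} (specialized as Theorem~\ref{theorem:quasi_chemical_EL}(iii)) together with the Eckweiler-type construction of Appendix~\ref{app:linearoscillator}. If anything, you are slightly more thorough than the paper in verifying the sign alternation of \emph{all} coefficients $\alpha_0,\ldots,\alpha_{2n+1}$ via dominance of the $\beta_n(-R)^{2n+1-j}$ term (the paper explicitly checks only the zero-degree term, which is the one needed for chemicality, leaving the alternation implicit in~(\ref{eq:HO_1_C_coeff})).
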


\begin{proof}
Let us fix $\varepsilon > 0$ to a sufficiently small value, 
and $\beta_0,\beta_1,\ldots,\beta_n$
according to~(\ref{eq:HO_1_b}) from Appendix~\ref{app:linearoscillator}.
It then follows from Lemma~\ref{lemma:first_order}
that~(\ref{eq:HO_1}) has $n$ nested hyperbolic limit cycles;
the $i$th limit cycle is arbitrarily close to 
the $2 \pi$-periodic circle of radius $r_i$ centered at the origin.
Furthermore, the $n$th limit cycle is stable, 
and the limit cycles alternative in stability.

Let us apply on~(\ref{eq:HO_1}) 
the QCM~(\ref{eq:quasi_chemical})
with $T_1 = T_2 = 1$, $q_1 = f_1$ and $q_2 = 0$, leading to
\begin{align}
\frac{\mathrm{d} x_1}{\mathrm{d} t} 
& = \left(x_2 - \frac{1}{\mu} \right) 
+ \varepsilon \sum_{i = 0}^{n} \beta_i 
\left(x_1 - \frac{1}{\mu} \right)^{2 i + 1} 
=  \sum_{i = 0}^{2 n + 1} (-1)^{i} \alpha_{i} x_1^{i} 
+ \alpha_{2 n + 2} x_2, \nonumber \\
\frac{\mathrm{d} x_2}{\mathrm{d} t} & = 
- \mu x_2  \left(x_1 - \frac{1}{\mu} \right)
= \alpha_{2 n + 2} x_2 - \alpha_{2 n + 3} x_1 x_2,
\label{eq:HO_1_RRE}
\end{align}
where the coefficients
$\alpha_0,\alpha_1,\ldots,\alpha_{2 n + 3}$,
are stated explicitly as~(\ref{eq:HO_1_C_coeff}) 
in Appendix~\ref{app:linearoscillator}.
Since $\beta_n < 0$, the zero-degree term in the first 
equation from~(\ref{eq:HO_1_RRE})
is dominated by $1/\mu^{2 n + 1} > 0$ for all sufficiently small $\mu > 0$;
therefore, (\ref{eq:HO_1_RRE}) is then a CDS.
A CRN induced by~(\ref{eq:HO_1_RRE}) is given by~(\ref{eq:HO_1_C});
see Definition~\ref{def:CRN}.
In particular, we fuse 
$X_2 \xrightarrow[]{1} X_1 + X_2$
and 
$X_2 \xrightarrow[]{1} 2 X_2$
into
$X_2 \xrightarrow[]{1} X_1 + 2 X_2$, 
as per Definition~\ref{def:fused}.
Statement of the theorem then follows 
from Theorem~\ref{theorem:quasi_chemical_EL}(iii).
\end{proof}

\begin{figure}[!htbp]
\vskip 0.0cm
\leftline{\hskip 2.5cm (a) \hskip  6.5cm (b)}
\centerline{
\hskip -1.0cm
\includegraphics[width=0.35\columnwidth]{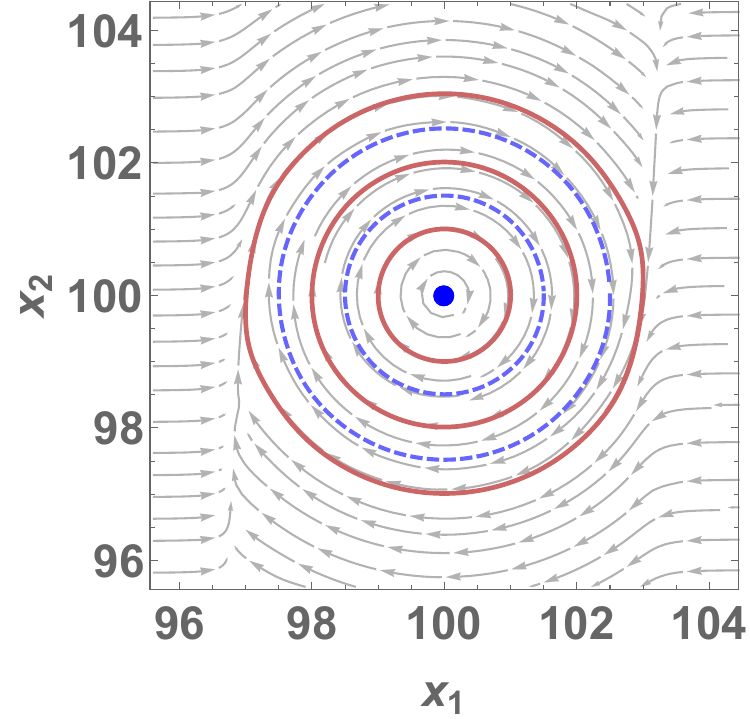}
\hskip 1.5cm
\includegraphics[width=0.3\columnwidth]{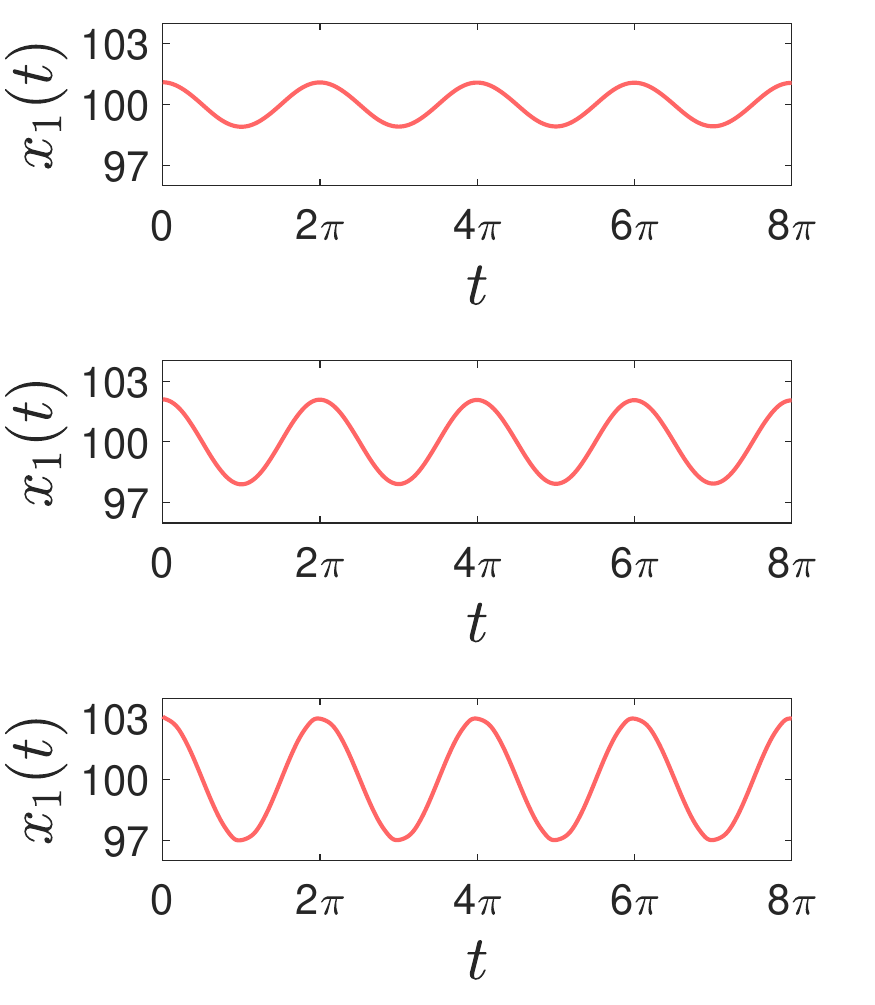}
}
\vskip -0.2cm
\caption{\it{\emph{CDS with arbitrary many limit cycles.}} 
Panel \emph{(a)} displays state-space
for \emph{CDS}~$(\ref{eq:HO_1_RREn5})$,
corresponding to the \emph{CRN}~$(\ref{eq:HO_1_Cn5})$,
with parameters~$(\ref{eq:HO_1_RREn5coeff})$, 
$\varepsilon = 10^{-4}$ and $\mu = 10^{-2}$.
The three stable and two unstable limit cycles are 
respectively shown as solid red and dashed blue curves, 
and the enclosed equilibrium as the blue dot.
Each of the three stable limit cycles
is shown in time-state space 
in one of the sub-panels of panel \emph{(b)}.} 
\label{fig:3}
\end{figure}

\textbf{Example}. 
Consider the CRN~(\ref{eq:HO_1_C}) with $n = 5$:
\begin{align}
\varnothing &\xrightleftharpoons[\alpha_1]{\alpha_0} X_1, 
\; \; 2 X_1 \xrightleftharpoons[\alpha_3]{\alpha_2} 3 X_1, 
\; \; 4 X_1 \xrightleftharpoons[\alpha_5]{\alpha_4} 5 X_1,
\; \; 6 X_1 \xrightleftharpoons[\alpha_7]{\alpha_6} 7 X_1,
\; \; 8 X_1 \xrightleftharpoons[\alpha_9]{\alpha_8} 9 X_1,
\; \; 10 X_1 \xrightleftharpoons[\alpha_{11}]{\alpha_{10}} 11 X_1, \nonumber \\
X_2 & \xrightarrow[]{\alpha_{12}} X_1 + 2 X_2, 
\; \; X_1 + X_2 \xrightarrow[]{\alpha_{13}} X_1, 
\label{eq:HO_1_Cn5}
\end{align}
with the CDS~(\ref{eq:HO_1_RRE}) given by
\begin{align}
\frac{\mathrm{d} x_1}{\mathrm{d} t} 
& = \left(x_2 - \frac{1}{\mu} \right) 
+ \varepsilon \sum_{i = 0}^{5} \beta_i 
\left(x_1 - \frac{1}{\mu} \right)^{2 i + 1}, \nonumber \\
\frac{\mathrm{d} x_2}{\mathrm{d} t} & = 
- \mu x_2  \left(x_1 - \frac{1}{\mu} \right).
\label{eq:HO_1_RREn5}
\end{align}
Let us impose that~(\ref{eq:HO_1_RREn5})
has $5$ limit cycles, close to circles
with radii $r_1 = 1$, $r_2 = 3/2$, 
$r_3 = 2$, $r_4 = 5/2$ and $r_5 = 3$. 
Then, suitable coefficients $\beta_0,\beta_1,\ldots,\beta_5$ are given
\begin{align}
\beta_0 & = \frac{2025}{4}, \; \; 
\beta_1 = -\frac{5307}{4}, \; \; 
\beta_2 = 1039, \; \; 
\beta_3 = - \frac{11652}{35}, \; \; 
\beta_4 = \frac{320}{7}, \; \; 
\beta_5 = - \frac{512}{231}.
\label{eq:HO_1_RREn5coeff}
\end{align}
Coefficients~(\ref{eq:HO_1_RREn5coeff}) can
be obtained directly from~(\ref{eq:HO_1_b})
in Appendix~\ref{app:linearoscillator}; alternatively, 
one can indirectly impose that a suitable polynomial
has the desired radii as roots, see~(\ref{eq:I}) 
in Appendix~\ref{app:linearoscillator}.
In Figure~\ref{fig:3}(a), we display state-space
for CDS~(\ref{eq:HO_1_RREn5}) 
with parameters~(\ref{eq:HO_1_RREn5coeff}), 
 $\varepsilon = 10^{-4}$ and $\mu = 10^{-2}$.
As desired, the system has three stable limit cycles, 
shown as red solid curves, and two unstable ones,
shown as dashed blue curves, all of which are approximately circular.
In Figure~\ref{fig:3}(b), we show three sub-panels, 
each displaying one of the stable limit cycles in 
 time-state space; one can notice that each of
the limit cycles is approximately $2 \pi$-periodic.

\subsection{CRN with chaos}
\label{sec:chaos} 
In this section, we design a CDS
whose long-time dynamics is neither
an equilibrium nor a limit cycle.
To this end, let us consider the
Lorenz system~\cite{Lorenz}:
\begin{align}
\frac{\mathrm{d} y_1}{\mathrm{d} t} & = - 10 y_1 + 10 y_2, \nonumber \\
\frac{\mathrm{d} y_2}{\mathrm{d} t} & = 28 y_1 - y_2 - y_1 y_3, \nonumber \\
\frac{\mathrm{d} y_3}{\mathrm{d} t} & = - \frac{8}{3} y_3 + y_1 y_2. 
\label{eq:Lorenz}
\end{align}
In Figure~\ref{fig:4}(a), we show the butterfly-like 
chaotic attractor of~(\ref{eq:Lorenz}), called the Lorenz attractor, 
projected into $(y_1,y_3)$-space,
while a corresponding $(t,y_1)$-space
is shown in Figure~\ref{fig:4}(b).

\begin{figure}[!htbp]
\vskip 0.2cm
\leftline{\hskip 2.2cm (a) \hskip  6.9cm (c)}
\centerline{
\hskip 0.0cm
\includegraphics[width=0.35\columnwidth]{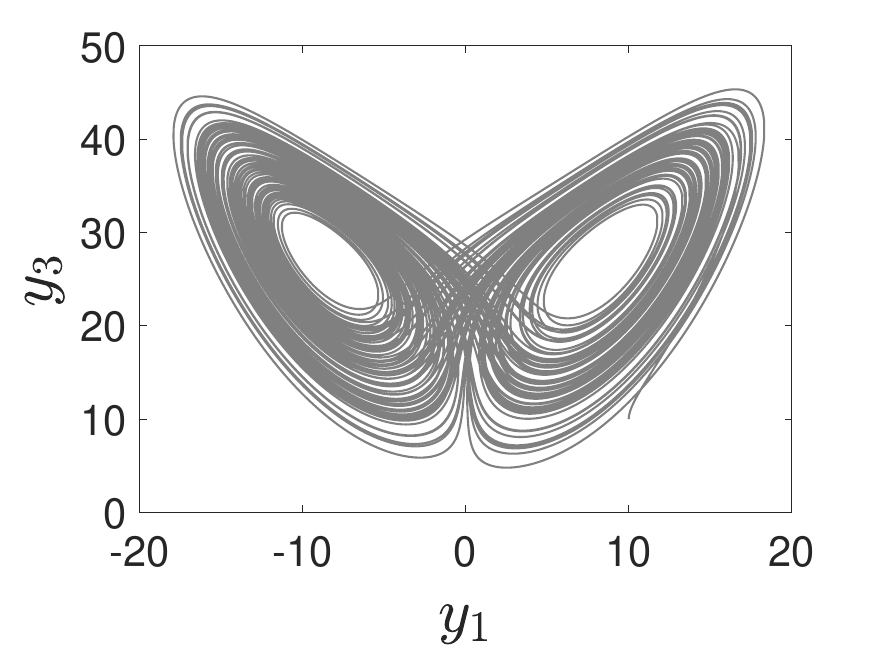}
\hskip 1.5cm
\includegraphics[width=0.35\columnwidth]{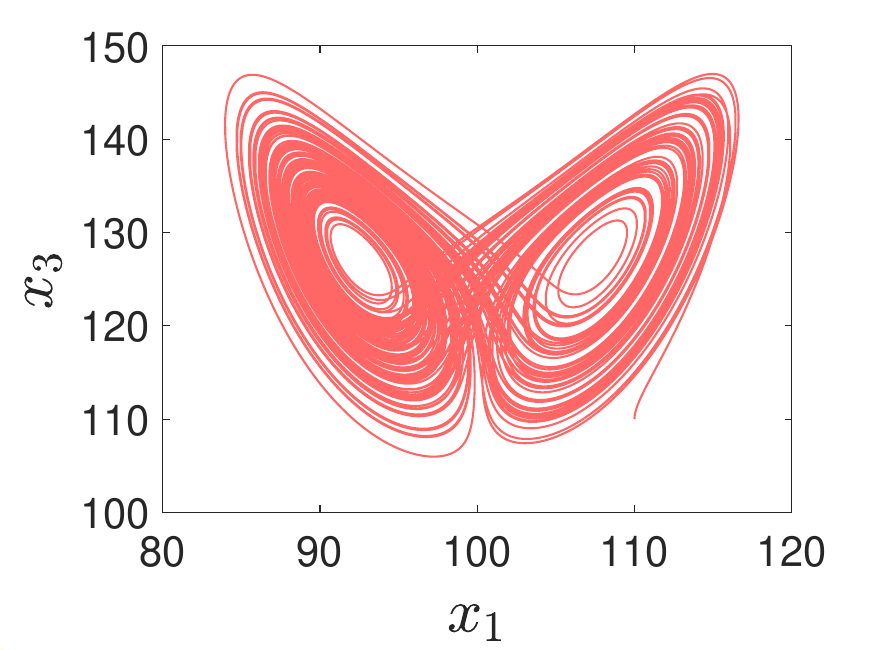}
}
\leftline{\hskip 2.2cm (b) \hskip  6.9cm (d)}
\centerline{
\hskip 0.0cm
\includegraphics[width=0.35\columnwidth]{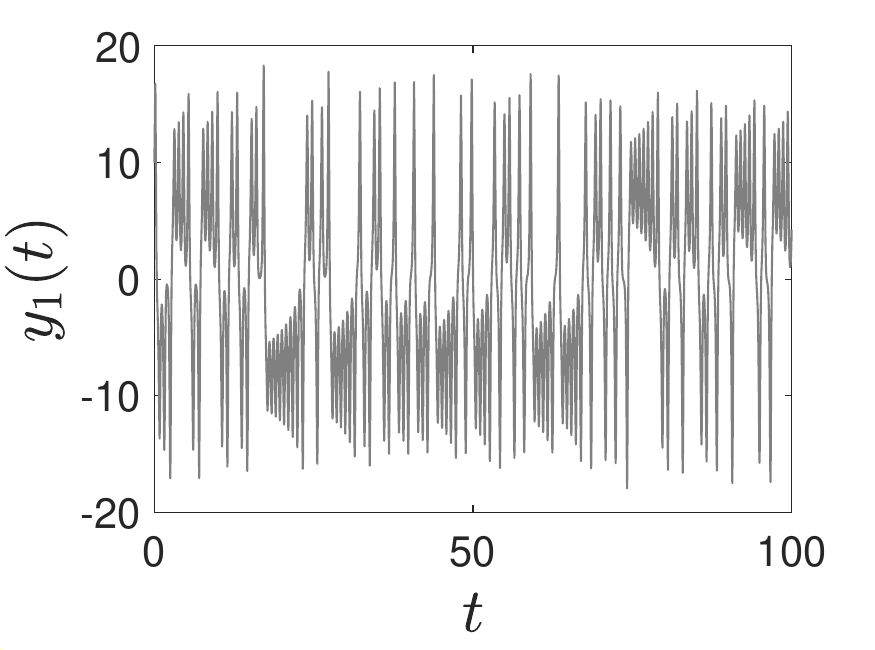}
\hskip 1.5cm
\includegraphics[width=0.35\columnwidth]{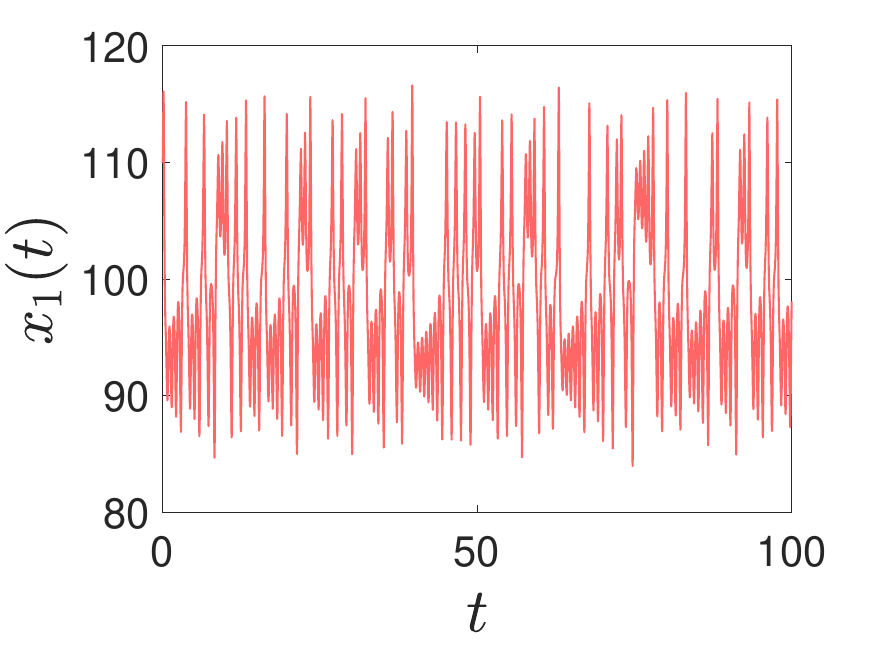}
}
\vskip -0.2cm
\caption{\it{\emph{CDS with Lorenz attractor.} 
Panels \emph{(a)} and \emph{(b)} respectively display $(y_1,y_3)$-
and $(t,y_1)$-space for the Lorenz system~$(\ref{eq:Lorenz})$
with initial condition $y_1(0) = y_2(0) = y_3(0) = 10$.
Analogous plots are shown in panels \emph{(c)} and \emph{(d)}
for \emph{CDS}~$(\ref{eq:Lorenz_chemical})$
with initial condition $x_1(0) = x_2(0) = x_3(0) = 10 + 100$.}} 
\label{fig:4}
\end{figure}

\begin{theorem}$($\textbf{\emph{CRN} with Lorenz attractor}$)$ 
\label{theorem:chaos1}
Consider the \emph{CRN}
\begin{align}
X_1 &\xrightarrow[]{\alpha_1} \varnothing, 
\; \; \; X_2 \xrightarrow[]{\alpha_2} X_1 + X_2, 
\; \; \;  \varnothing \xrightleftharpoons[\alpha_4]{\alpha_3} X_2, 
\; \; \;  X_1 + X_2 \xrightarrow[]{\alpha_5} X_1 + 2 X_2, 
\; \; \;  X_2 + X_3 \xrightarrow[]{\alpha_6} 2 X_2,  \nonumber \\
X_1 & + X_2 + X_3 \xrightarrow[]{\alpha_7} X_1 + 2 X_3, 
\; \; \;  \varnothing \xrightarrow[]{\alpha_8} X_3,
\; \; \;  X_3 \xrightarrow[]{\alpha_9} 2 X_3,
\; \; \;  X_1 + X_3 \xrightarrow[]{\alpha_{10}} X_1.
\label{eq:Lorenz_CRN}
\end{align}
Assume that the rate coefficients are given by 
\begin{align}
\alpha_1 & = 10, \; \; \; 
\alpha_2 = 10,\; \; \; 
\alpha_3 = \frac{1}{\mu}, \; \; \; 
\alpha_4 = \frac{1}{\mu} + 29, \; \; \; 
\alpha_5 = 1 + 28 \mu,  \; \; \; 
\alpha_6 = 1, \nonumber \\
\alpha_7 & = \mu, \; \; \; 
\alpha_8 = \frac{8}{3 \mu}, \; \; \; 
\alpha_{9} = \frac{1}{\mu} - \frac{8}{3}, \; \; \; 
\alpha_{10} = 1. 
\label{eq:Lorenz_coeff}
\end{align}
Then, for every sufficiently small $\mu > 0$, 
\emph{CRN}~$(\ref{eq:Lorenz_CRN})$
has a Lorenz attractor.
\end{theorem}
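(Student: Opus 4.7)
The plan is to realize CRN~$(\ref{eq:Lorenz_CRN})$ as the image of the Lorenz system~$(\ref{eq:Lorenz})$ under a suitably chosen quasi-chemical map, and then invoke Theorem~\ref{theorem:quasi_chemical} together with the known robustness of the Lorenz attractor. First, I would set $T_1=T_2=T_3=1$ in Definition~\ref{def:quasi_chemical} and make the following choice of quasi-chemical functions for the Lorenz vector field $\mathbf{f}=(f_1,f_2,f_3)^\top$:
\begin{align*}
q_1(\mathbf{y}) = f_1(\mathbf{y}) = -10 y_1 + 10 y_2,
\qquad
q_2(\mathbf{y}) = -y_2,
\qquad
q_3(\mathbf{y}) = -\tfrac{8}{3}\, y_3.
\end{align*}
After the translation $y_i = x_i - 1/\mu$, each $q_i$ becomes respectively $-10 x_1 + 10 x_2$, $-x_2 + 1/\mu$, and $-\tfrac{8}{3} x_3 + \tfrac{8}{3\mu}$, all of which satisfy Definition~\ref{def:chem_sys}, so the $q_i$ are indeed quasi-chemical.

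Second, I would substitute these $q_i$ into~$(\ref{eq:quasi_chemical})$ and expand, reading off the coefficient of every distinct monomial in each $\mathrm{d}x_i/\mathrm{d}t$. The first equation reduces to $\mathrm{d}x_1/\mathrm{d}t = -10 x_1 + 10 x_2$ since $q_1 = f_1$. For $i=2$, the nontrivial part $\mu x_2[f_2(\mathbf{x}-\mathbf{1}/\mu) - q_2(\mathbf{x}-\mathbf{1}/\mu)] = \mu x_2[28 y_1 - y_1 y_3]_{y=x-1/\mu}$ yields exactly the combination $(1+28\mu) x_1 x_2 + x_2 x_3 - \mu x_1 x_2 x_3 - 28 x_2 - x_2/\mu$, which together with $q_2(\mathbf{x}-\mathbf{1}/\mu)$ reproduces the ODE induced by the reactions governed by $\alpha_3,\ldots,\alpha_7$ in~$(\ref{eq:Lorenz_CRN})$ with the rate coefficients~$(\ref{eq:Lorenz_coeff})$. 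An analogous expansion for $i=3$ produces precisely the ODE induced by the reactions governed by $\alpha_1,\alpha_2,\alpha_8,\alpha_9,\alpha_{10}$. Matching monomials with Definition~\ref{def:CRN} (and fusing where appropriate via Definition~\ref{def:fused}) then confirms that the mass-action ODEs of~$(\ref{eq:Lorenz_CRN})$ with coefficients~$(\ref{eq:Lorenz_coeff})$ coincide exactly with the image of the Lorenz system under this quasi-chemical map.

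Third, I would invoke robustness of the Lorenz attractor: for the classical parameters $(\sigma,\rho,\beta)=(10,28,8/3)$, the Lorenz attractor is known to persist qualitatively under sufficiently small smooth perturbations of the vector field (in the sense of Definition~\ref{def:robustness}, applied to the appropriate trapping region $\mathbb{U}\subset\mathbb{R}^3$ containing the attractor). Citing this robustness as the required hypothesis on~$(\ref{eq:Lorenz})$, Theorem~\ref{theorem:quasi_chemical} then immediately implies that~$(\ref{eq:Lorenz_CRN})$ with coefficients~$(\ref{eq:Lorenz_coeff})$, when restricted to $\mathbb{U}+\mathbf{1}/\mu\subset\mathbb{R}_{\ge}^3$, is qualitatively equivalent to~$(\ref{eq:Lorenz})$ on $\mathbb{U}$ for all sufficiently small $\mu>0$, completing the proof.

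The main obstacle is not the bookkeeping, which is mechanical once $q_i$ are fixed, but rather the appeal to robustness of the Lorenz attractor, which is a subtle dynamical fact; the cleanest route is to work with a trapping ellipsoid $\mathbb{U}$ for the Lorenz flow and invoke persistence of the compact invariant set together with structural-stability-type results for the geometric Lorenz model, so that \emph{qualitative} equivalence of the chaotic dynamics under the special chemical perturbation in~$(\ref{eq:quasi_chemical_T})$ holds in the sense of the remark following Theorem~\ref{theorem:quasi_chemical}. A secondary but minor check is that all rate coefficients in~$(\ref{eq:Lorenz_coeff})$ are non-negative for sufficiently small $\mu>0$, which follows directly from their explicit dependence on $1/\mu$.
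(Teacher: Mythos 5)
Your proposal is correct and follows essentially the same route as the paper: the identical choice $q_1=f_1$, $q_2=-y_2$, $q_3=-\tfrac{8}{3}y_3$ with $T_i=1$, the translation $x_i=z_i+1/\mu$, monomial matching against~(\ref{eq:Lorenz_CRN})--(\ref{eq:Lorenz_coeff}), and the appeal to robustness of the Lorenz attractor (Tucker) combined with Theorem~\ref{theorem:quasi_chemical}. The only slip is a bookkeeping one: the reactions with rates $\alpha_1,\alpha_2$ feed the $x_1$-equation, not the $x_3$-equation, which does not affect the argument.
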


\begin{proof}
Let us apply on~(\ref{eq:Lorenz}) 
the QCM~(\ref{eq:quasi_chemical})
with $T_1 = T_2 = T_3 = 1$, 
$q_1 = f_1$, $q_2 = -y_2$ and $q_3 = - (8/3) y_3$, 
leading to 
\begin{align}
\frac{\mathrm{d} x_1}{\mathrm{d} t} 
& = - 10 x_1 + 10 x_2, \nonumber \\
\frac{\mathrm{d} x_2}{\mathrm{d} t} 
& = - \left(x_2 - \frac{1}{\mu} \right) 
+ \mu x_2 \left[ 28 \left(x_1 - \frac{1}{\mu} \right) - 
\left(x_1 - \frac{1}{\mu} \right) \left(x_3 - \frac{1}{\mu} \right) \right], 
\nonumber \\
\frac{\mathrm{d} x_3}{\mathrm{d} t} & = 
- \frac{8}{3} \left(x_3 - \frac{1}{\mu} \right) 
+ \mu x_3 \left[\left(x_1 - \frac{1}{\mu} \right) 
\left(x_2 - \frac{T}{\mu} \right) \right].
\label{eq:Lorenz_chemical_g}
\end{align}
DS~(\ref{eq:Lorenz_chemical_g}) is chemical for all $\mu > 0$,
and induces CRN~(\ref{eq:Lorenz_CRN})
with coefficients~(\ref{eq:Lorenz_coeff}).
Since the Lorenz attractor is robust
under small perturbations~\cite{Tucker},
statement of the theorem follows from
Lemma~\ref{lemma:quasi_chemical}.
\end{proof}

Let us fix $\mu = 1/100$ in~(\ref{eq:Lorenz_coeff});
the CDS~(\ref{eq:Lorenz_chemical_g}) 
induced by~(\ref{eq:Lorenz_CRN}) then reads
\begin{align}
\frac{\mathrm{d} x_1}{\mathrm{d} t} 
& = - 10 x_1 + 10 x_2, \nonumber \\
\frac{\mathrm{d} x_2}{\mathrm{d} t} 
& = 100 - 129 x_2 
+ \frac{32}{25} x_1 x_2 + x_2 x_3 - \frac{1}{100} x_1 x_2 x_3, 
\nonumber \\
\frac{\mathrm{d} x_3}{\mathrm{d} t} & = 
\frac{800}{3} + \frac{292}{3} x_3 
- x_1 x_3 - x_2 x_3 + \frac{1}{100} x_1 x_2 x_3. 
\label{eq:Lorenz_chemical}
\end{align}
State-space and time-state space for~(\ref{eq:Lorenz_chemical})
are respectively shown in Figure~\ref{fig:4}(c) and~(d), 
which compare well with Figure~\ref{fig:4}(a)--(b).
One can observe that the Lorenz system~(\ref{eq:Lorenz}) 
and its chemical counterpart~(\ref{eq:Lorenz_chemical}) both 
evolve on a similar time-scale for the given time-interval, which is consistent with Theorem~\ref{theorem:quasi_chemical_EL}(i).
Note that, while Lorenz attractor itself is robust
with respect to perturbations, the underlying trajectories
in the time-state space are sensitive; for this reason, 
the two trajectories from Figures~\ref{fig:4}(b) and~(d)
differ after some time. \bcol{Let us also note that 
trapping regions for the Lorenz attractor are also preserved, 
as per Theorem~\ref{theorem:quasi_chemical_EL}(iv).}
Finally, we note that chemical Lorenz system~(\ref{eq:Lorenz_chemical})
is cubic. In Appendix~\ref{app:chaos}, we present a candidate
chaotic CDS which is quadratic.

\subsection{CRN with homoclinic bifurcation}
\label{sec:CRN_homoclinic}
Consider the two-dimensional DS with a super-critical 
homoclinic bifurcation, given by~\cite{Sandstede,Me1}
\begin{align}
\frac{\mathrm{d} y_1}{\mathrm{d} t} 
& = \left(\beta -\frac{4}{5} \right) y_1 + y_2 - \frac{6}{5} y_1 y_2 
+ \frac{3}{2} y_2^2, \nonumber \\
\frac{\mathrm{d} y_2}{\mathrm{d} t} 
& = y_1 - \frac{4}{5} y_2 - \frac{4}{5} y_2^2.
\label{eq:homoclinic}
\end{align}
In particular, when $\beta = 0$, DS~(\ref{eq:homoclinic})
has a saddle equilibrium at the origin, whose stable and unstable 
manifolds coincide, forming a tear-shaped homoclinic loop; 
see Figure~\ref{fig:6}(b), where we also display
an unstable focus inside the homoclinic loop, 
and a stable node in the first quadrant.
While the saddle is robust, 
the homoclinic loop is not - when 
$\beta$ is slightly varied, 
the saddle manifolds split, 
and a bifurcation occurs.
More precisely, when $\beta < 0$,
the unstable manifold ``undershoots" and 
forms a stable limit cycle around the focus,
as shown in Figure~\ref{fig:6}(a). 
On the other hand, when $\beta > 0$, 
the unstable manifold ``overshoots", 
and is connected to the stable node, 
see Figure~\ref{fig:6}(c).

\begin{figure}[!htbp]
\vskip 0.3cm
\leftline{
\hskip 0.6cm (a) (\ref{eq:homoclinic}) with $\beta = -1/10$
\hskip  1.3cm (b) (\ref{eq:homoclinic}) with $\beta = 0$
\hskip  2.2cm (c) (\ref{eq:homoclinic}) with $\beta = 1/10$} 
\vskip 0.2cm
\centerline{
\hskip 0.0cm
\includegraphics[width=0.3\columnwidth]{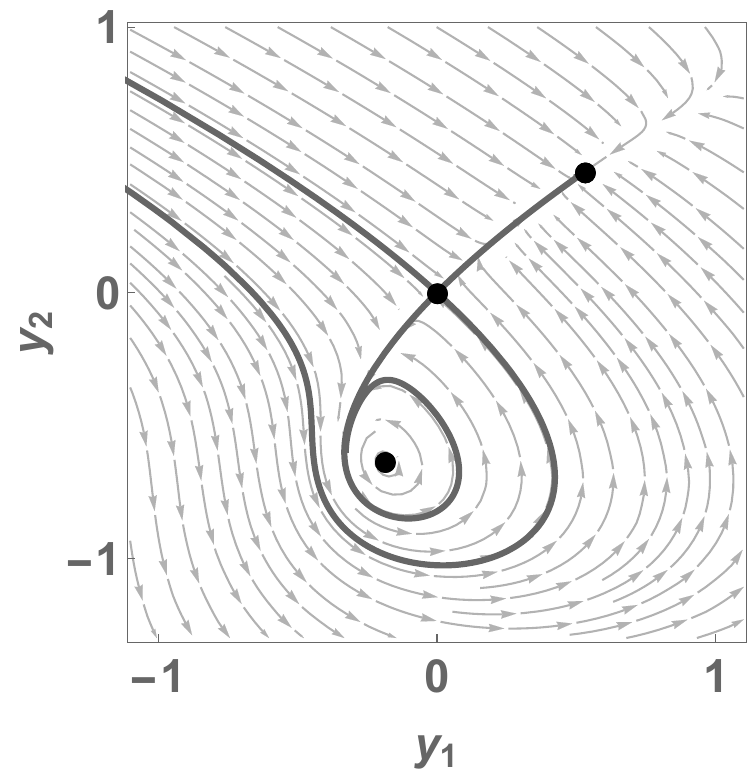}
\hskip 0.5cm
\includegraphics[width=0.3\columnwidth]{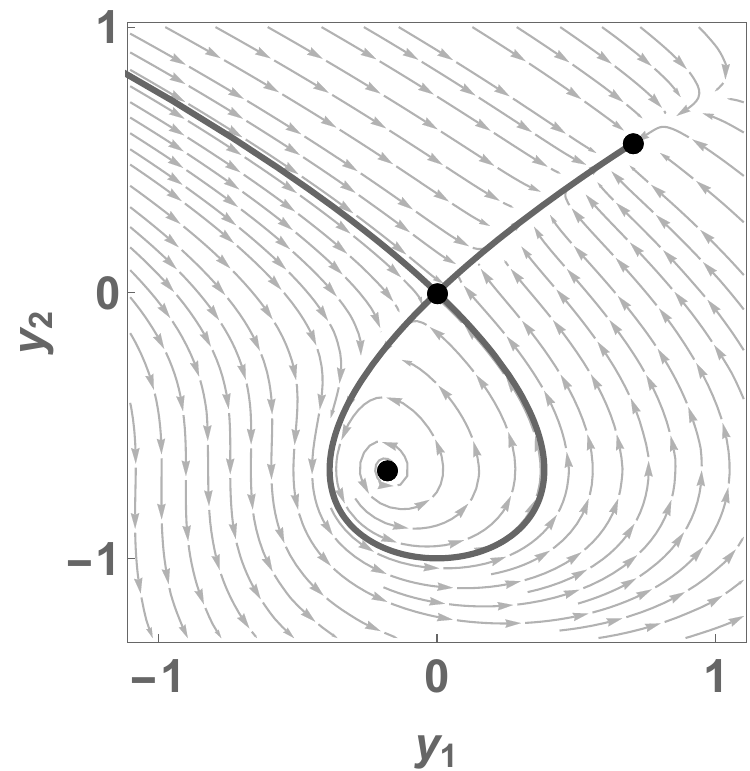}
\hskip 0.5cm
\includegraphics[width=0.3\columnwidth]{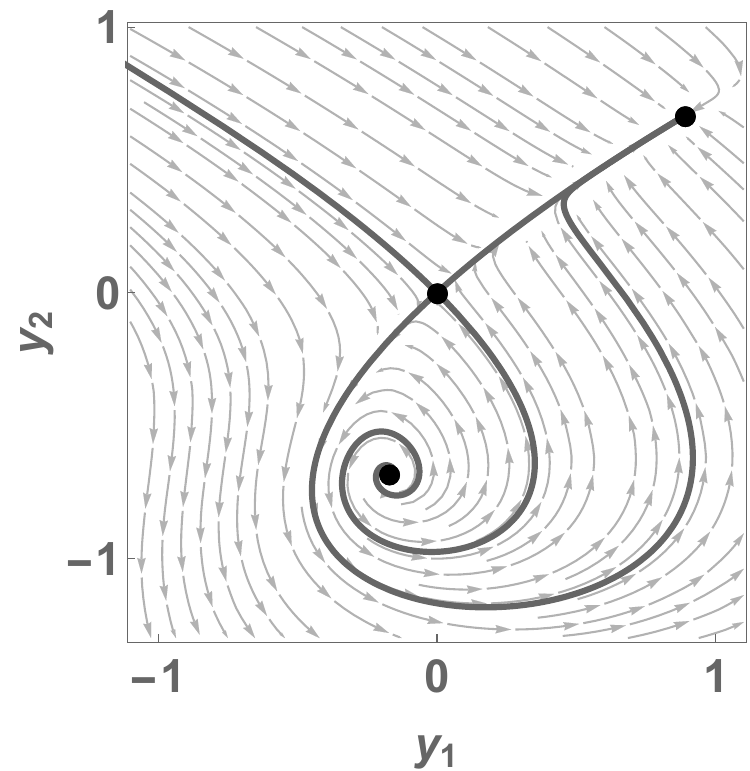}
}
\leftline{
\hskip 0.6cm (d) (\ref{eq:homoclinic_g_fixed}) with $\beta = -1/10$
\hskip  1.3cm (e) (\ref{eq:homoclinic_g_fixed}) with $\beta = 50/5879$
\hskip  1.1cm (f) (\ref{eq:homoclinic_g_fixed}) with $\beta = 1/10$} 
\vskip 0.2cm
\centerline{
\hskip 0.0cm
\includegraphics[width=0.32\columnwidth]{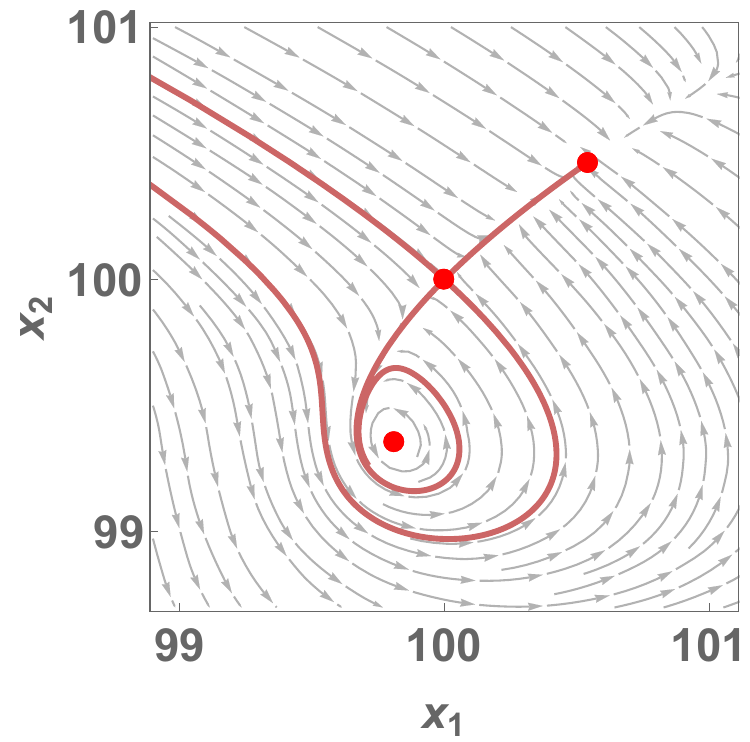}
\hskip 0.1cm
\includegraphics[width=0.32\columnwidth]{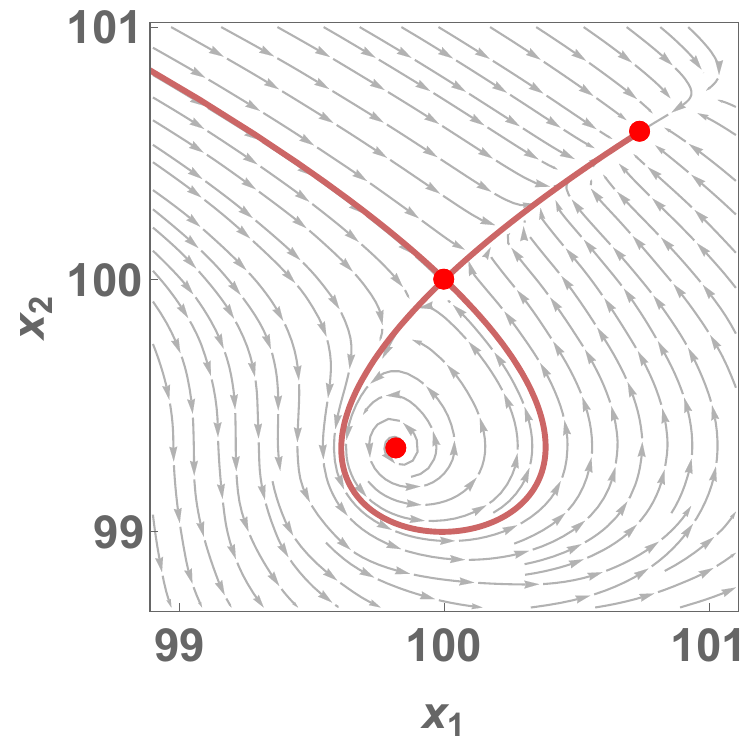}
\hskip 0.1cm
\includegraphics[width=0.32\columnwidth]{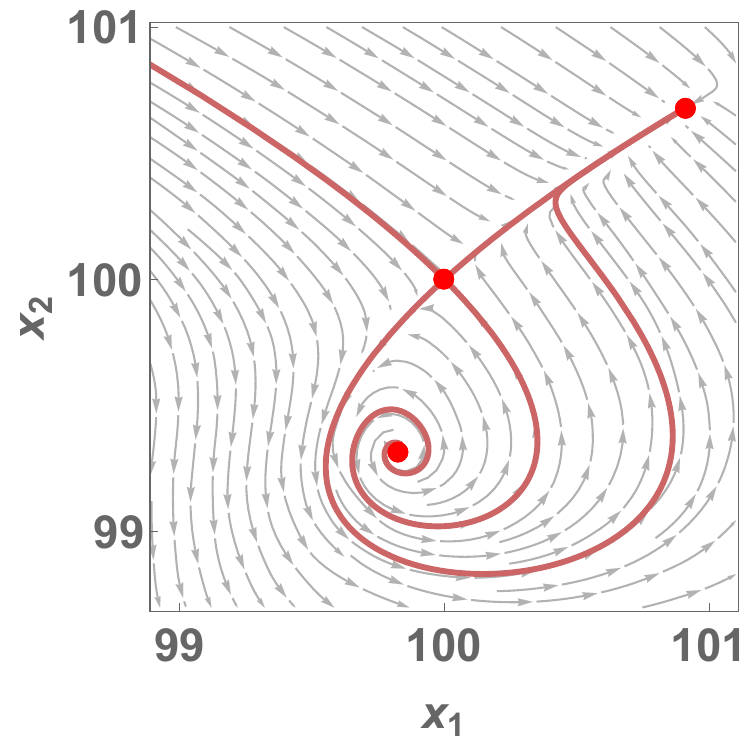}
}
\vskip -0.2cm
\caption{\it{\emph{CDS undergoing homoclinic bifurcation.} 
Panels \emph{(a)}--\emph{(c)} display state-space
for \emph{DS}~$(\ref{eq:homoclinic})$ for three different values
of $\beta$, where at $\beta = 0$ a generic homoclinic 
bifurcation takes place.
Analogous plots are shown for \emph{CDS}~$(\ref{eq:homoclinic_g_fixed})$
in panels \emph{(d)}--\emph{(f)}.}} 
\label{fig:6}
\end{figure}

\begin{theorem}$($\textbf{\emph{CRN} with homoclinic bifurcation}$)$ 
\label{theorem:homoclinic}
Consider the \emph{CRN}
\begin{align}
\varnothing &\xrightarrow[]{\alpha_1} X_1, 
\; \; \; X_1 \xrightarrow[]{\alpha_2} 2 X_1, 
\; \; \; 2 X_1 \xrightarrow[]{\alpha_3} 3 X_1, 
\; \; \; X_1 + X_2 \xrightarrow[]{\alpha_4} X_2, 
\; \; \; 2 X_1 + X_2 \xrightarrow[]{\alpha_5} X_1 + X_2, \nonumber \\ 
X_1 & + 2 X_2 \xrightarrow[]{\alpha_6} 2 X_1 + 2 X_2, 
\; \; \; \varnothing \xrightleftharpoons[\alpha_8]{\alpha_7} X_2,
\; \; \;  X_1 + X_2 \xrightarrow[]{\alpha_9} X_1 + 2 X_2, 
\; \; \;  2 X_2 \xrightleftharpoons[\alpha_{11}]{\alpha_{10}} 3 X_2. 
\label{eq:homoclinic_CRN}
\end{align}
Assume that the rate coefficients are given by 
\begin{align}
\alpha_1 & = \frac{4}{5 \mu} - \frac{\beta}{\mu}, \; \; \; 
\alpha_2 = \frac{3}{10 \mu} - \frac{9}{5} + \beta,\; \; \; 
\alpha_3 = \frac{6}{5}, \; \; \; 
\alpha_4 = \frac{9}{5} - \mu, \; \; \; 
\alpha_5 = \frac{6}{5} \mu, \nonumber \\
\alpha_6 & = \frac{3}{2} \mu, \; \; \; 
\alpha_7 = \frac{4}{5 \mu}, \; \; \; 
\alpha_8 = \frac{4}{5 \mu} + \frac{9}{5}, \; \; \;  
\alpha_9 = \mu, \; \; \;  
\alpha_{10} = \frac{8}{5}, \; \; \;  
\alpha_{11} = \frac{4}{5} \mu.
\label{eq:homoclinic_coeff}
\end{align}
Then, for every sufficiently small $\mu > 0$, 
\emph{CRN}~$(\ref{eq:homoclinic_CRN})$
undergoes a super-critical homoclinic bifurcation
at saddle $(1/\mu,1/\mu) \in \mathbb{R}_{>}^{2}$
at some parameter value $\beta = \beta(\mu)$ arbitrarily close to zero.
\end{theorem}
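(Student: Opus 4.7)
The plan is to mirror the template used for Theorems \ref{theorem:first_order} and \ref{theorem:chaos2}: apply the quasi-chemical map (\ref{eq:quasi_chemical_T})--(\ref{eq:quasi_chemical}) with $\mathbf{T} = (1,1)^\top$ to the dynamical system (\ref{eq:homoclinic}), verify that the translated system is chemical and that it induces the CRN (\ref{eq:homoclinic_CRN}) with coefficients (\ref{eq:homoclinic_coeff}), and then invoke Theorem \ref{theorem:quasi_chemical_bifurcation} to transfer the bifurcation from the dynamical system to the chemical one.

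First I would identify the quasi-chemical functions $q_1(z)$ and $q_2(z)$ by matching. Expanding the mass-action ODE induced by (\ref{eq:homoclinic_CRN}) with the rates (\ref{eq:homoclinic_coeff}), then substituting $x_i = z_i + 1/\mu$, one obtains a polynomial system which must be compared to $\mathrm{d} z_i/\mathrm{d} t = f_i(z;\beta) + \mu z_i[f_i(z;\beta) - q_i(z)]$ with $f_1, f_2$ as in (\ref{eq:homoclinic}). The coefficients of $q_1$ and $q_2$ are read off term by term. Heuristically, the $O(1/\mu)$ rate coefficients in (\ref{eq:homoclinic_coeff}) correspond to monomials of $f_i-q_i$ that survive the $\mu z_i$ prefactor, while the $O(1)$ and $O(\mu)$ rate coefficients arise from translated polynomial terms of $q_i$ and of $f_i$ respectively. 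This step is algebraic bookkeeping, directly analogous to the analogous step in the proof of Theorem \ref{theorem:chaos2}.

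Next I would verify that the translated system is chemical for all sufficiently small $\mu > 0$. Each monomial must satisfy the sign condition in Definition \ref{def:chem_sys}. The only non-trivial checks concern the constants and pure linear terms: the coefficients $\alpha_1 = 4/(5\mu) - \beta/\mu$, $\alpha_2 = 3/(10\mu) - 9/5 + \beta$, $\alpha_7 = 4/(5\mu)$, $\alpha_8 = 4/(5\mu) + 9/5$, $\alpha_9 = \mu$, $\alpha_{11} = 4\mu/5$ are all positive for $\mu > 0$ small and $\beta$ sufficiently close to $0$; the remaining coefficients are manifestly of the correct sign.

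The final step is to invoke Theorem \ref{theorem:quasi_chemical_bifurcation}, which requires that the homoclinic bifurcation of (\ref{eq:homoclinic}) at $\beta = 0$ be robust in the sense of Definition \ref{def:robust_bifurcation}. The hard part will be establishing this robustness rigorously: I would appeal to the classical theory of codimension-one homoclinic bifurcations at hyperbolic planar saddles (see Appendix \ref{app:homoclinic} and \cite{Me1,Sandstede}), which asserts that the saddle and its local stable/unstable manifolds persist under any smooth perturbation, and that the global homoclinic connection is a codimension-one phenomenon whose presence can be restored by adjusting a single parameter. Since $\beta$ plays this role in (\ref{eq:homoclinic}), the robustness follows and Theorem \ref{theorem:quasi_chemical_bifurcation} yields $\beta(\mu)$ arbitrarily close to $0$ at which the chemical system exhibits the homoclinic connection, together with the enclosed unstable focus and the exterior stable node, at the translated saddle $(1/\mu,1/\mu)$.
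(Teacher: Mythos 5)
Your proposal follows essentially the same route as the paper: perturb \eqref{eq:homoclinic} in the form \eqref{eq:quasi_chemical_T} (the paper takes $q_1 = (\beta - 4/5)z_1$, $q_2 = -(4/5)z_2$, which is what your matching procedure would recover), translate by $1/\mu$ in each variable to obtain the chemical system inducing \eqref{eq:homoclinic_CRN}--\eqref{eq:homoclinic_coeff}, and conclude via the genericity/robustness of the codimension-one homoclinic bifurcation together with Theorem~\ref{theorem:quasi_chemical_bifurcation}. The only differences are presentational (you reverse-engineer $q_1,q_2$ from the CRN rather than specifying them up front, and you spell out the positivity checks the paper leaves implicit), so the proof is correct as proposed.
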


\begin{proof}
Let us apply on~(\ref{eq:homoclinic})
the QCM~(\ref{eq:quasi_chemical})
with $T_1 = T_2 = 1$, $q_1 = (\beta - 4/5) y_1$ and $q_2 = -(4/5) y_2$, 
leading to the DS
\begin{align}
\frac{\mathrm{d} x_1}{\mathrm{d} t} 
& = \left(\beta -\frac{4}{5} \right)
\left(x_1 - \frac{1}{\mu} \right)
+ \mu x_1 \left[\left(x_2 - \frac{1}{\mu} \right) 
- \frac{6}{5} \left(x_1 - \frac{1}{\mu} \right) \left(x_2 - \frac{1}{\mu} \right)
+ \frac{3}{2} \left(x_2 - \frac{1}{\mu} \right)^2 \right], \nonumber \\
\frac{\mathrm{d} x_2}{\mathrm{d} t} 
& = - \frac{4}{5} \left(x_2 - \frac{1}{\mu} \right) 
+ \mu x_2 \left[ \left(x_1 - \frac{1}{\mu} \right) 
- \frac{4}{5} \left(x_2 - \frac{1}{\mu} \right) ^2 \right],
\label{eq:homoclinic_g}
\end{align}
which is chemical for every $\beta \le 4/5$,
and for every $\mu > 0$; an induced CRN is given by~(\ref{eq:homoclinic_CRN})
with coefficients~(\ref{eq:homoclinic_coeff}).
DS~(\ref{eq:homoclinic}) at $\beta = 0$ undergoes a super-critical 
homoclinic bifurcation which is generic~\cite{Sandstede,Me1}, 
and therefore robust;
hence, the statement of the theorem 
follows from~Theorem~\ref{theorem:quasi_chemical_bifurcation}.
\end{proof}

Let us fix $\mu = 1/100$ in~(\ref{eq:homoclinic_coeff}), 
so that the resulting CDS~(\ref{eq:homoclinic_g}) becomes
\begin{align}
\frac{\mathrm{d} x_1}{\mathrm{d} t} 
& = \left(80 - 100 \beta \right)
+ \left(\frac{141}{5} + \beta \right) x_1
+ \frac{6}{5} x_1^2
- \frac{179}{100} x_1 x_2
- \frac{3}{250} x_1^2 x_2
+ \frac{3}{200} x_1 x_2^2, \nonumber \\
\frac{\mathrm{d} x_2}{\mathrm{d} t} 
& = 80 - \frac{409}{5} x_2 + \frac{1}{100} x_1 x_2
+ \frac{8}{5} x_2^2 - \frac{1}{125} x_2^3.
\label{eq:homoclinic_g_fixed}
\end{align}
It is found numerically that~(\ref{eq:homoclinic_g_fixed})
undergoes the homoclinic bifurcation 
at $\beta \approx 50/5879$; state-space 
is displayed in Figure~\ref{fig:6}(d)--(e),
and agrees well with that of 
the target DS~(\ref{eq:homoclinic}).
Note that CDS~(\ref{eq:homoclinic_g_fixed}) is cubic.
In Appendix~\ref{app:homoclinic},
we apply a generalized QCM to design a quadratic CDS 
undergoing super-critical homoclinic bifurcation.

\subsection{Hilbert's 16th problem in chemistry}
\label{sec:Hilbert}

In 1900, David Hilbert presented a list of 
mathematical problems, 16th of which asks for the maximum number 
of limit cycles in two-dimensional 
$n$-degree polynomial DSs~\cite{Hilbert1}.
This number is denoted by $H(n)$, and remains unknown 
for all $n \ge 2$~\cite{Hilbert2}. 
One can pose a similar question in the chemical context:
what is the maximum number of limit cycles 
in two-dimensional $n$-degree CDSs?
We denote this number by $H_{\mathcal{C}}(n)$.
Let us show that chemical Hilbert number 
$H_{\mathcal{C}}(n)$ is sandwiched
between the two successive general Hilbert numbers
$H(n-1)$ and $H(n)$.

\bcol{
\begin{theorem} \label{theorem:Hilbert}
$H(n-1) \le H_{\mathcal{C}}(n) \le H(n)$
for every integer $n \ge 2$.
\end{theorem}

\begin{proof}
CDSs are a subset of polynomial DSs; 
hence, $H_{\mathcal{C}}(n) \le H(n)$.
If $H(n-1) < \infty$,
then there exists a two-dimensional 
$(n-1)$-degree DS with $H(n-1)$ 
hyperbolic limit cycles~\cite{Hilbert_Robust}[Theorem 2(a)].
Therefore, by Theorem~\ref{theorem:quasi_chemical2}
and~\ref{theorem:quasi_chemical_EL}(iii), 
there exists a two-dimensional $n$-degree 
CDS with $H(n-1)$ limit cycles, 
implying that $H_{\mathcal{C}}(n) \ge H(n-1)$.
On the other hand, if $H(n-1) = \infty$, 
then there exists a two-dimensional 
$(n-1)$-degree DS with arbitrary many
hyperbolic limit cycles~\cite{Hilbert_Robust}[Theorem 2(b)].
Therefore, by Theorem~\ref{theorem:quasi_chemical2}
and~\ref{theorem:quasi_chemical_EL}(iii), 
there exists a two-dimensional $n$-degree 
CDS with arbitrary many limit cycles, 
implying that $H_{\mathcal{C}}(n) = H(n-1)$. 
\end{proof}
}

\begin{corollary}
$H_{\mathcal{C}}(3) \ge 4$, $H_{\mathcal{C}}(4) \ge 13$, 
$H_{\mathcal{C}}(5) \ge 28$, and
$H_{\mathcal{C}}(n) =  \mathcal{O}(n^2 \ln(n))$
as $n \to \infty$. 
\end{corollary}

\begin{proof}
Statement of the corollary follows
from Theorem~\ref{theorem:Hilbert} and 
the fact that $H(2) \ge 4$~\cite{H2}, 
$H(3) \ge 13$~\cite{H3}, $H(4) \ge 28$~\cite{H4},
and $H(n) =  \mathcal{O}(n^2 \ln(n))$ as $n \to \infty$~\cite{Bound}.
\end{proof}

\section{Discussion}
\label{sec:discussion}
\bcol{In this paper, we have investigated some \emph{chemical maps}
- special maps that transform general dynamical systems (DSs) 
into chemical dynamical systems (CDSs), 
while preserving desired features.
We have introduced a novel map, 
called the \emph{quasi-chemical map}
(QCM), which systematically perturbs any given polynomial DS
in such a way that it becomes a CDS under sufficiently large translations. 
The QCM increases degree of DSs by at most one,
and qualitatively preserves 
robust dynamical and bifurcation structures; 
see respectively Theorems~\ref{theorem:quasi_chemical}
and~\ref{theorem:quasi_chemical_bifurcation}.
In particular, this map qualitatively preserves
hyperbolic equilibria and limit cycles, 
some temporal properties, such as periods of oscillations,
and trapping regions;
see Theorem~\ref{theorem:quasi_chemical_EL}.
Using the QCM, we proved that every $n$-degree DS
with robust dynamical or bifurcation structures
can be mapped to an $(n+1)$-degree CDS
of the same dimension displaying the same structures;
see Theorems~\ref{theorem:quasi_chemical2} and~\ref{theorem:quasi_chemical2_bifurcations}.

These properties can make the QCM
more favorable than some alternative ones, such as
the time-warp map~\cite{Time_change1,Time_change2} 
and the $\mathbf{x}$-factorable map~\cite{Samardzija}.
In particular, while the time-warp map preserves
both robust and non-robust dynamical and bifurcation structures, 
it does not preserve temporal properties; 
furthermore, it can significantly increase degree of DSs,
see e.g. Appendix~\ref{app:timechange}.
On the other hand, the $\mathbf{x}$-factorable map
increases the degree by only one, but can fail 
to preserve even hyperbolic equilibria,
as demonstrated in Section~\ref{sec:chemical_maps}.
In a special case, the QCM can be interpreted
as a correction of the $\mathbf{x}$-factorable map; 
see Theorem~\ref{theorem:quasi_chemical2}.
However, let us note that, as opposed to the 
time-warp map, the QCM relies on a small parameter
and sufficiently large translations of desired dynamical features,
which may pose some experimental challenges~\cite{DNA}.
}

In Section~\ref{sec:quasi_chemical_applications}, 
we have applied the QCM to
design some CDSs with exotic behaviors.
In particular, in Section~\ref{sec:oscillations},
we have designed CDS~(\ref{eq:HO_1_C}) with 
arbitrary many limit cycles. Another such CDS 
is designed in~\cite{Radek},
with the goal of providing
a lower bound on the maximum number
of stable limit cycles in two-dimensional 
$n$-degree CDSs.
This Hilbert-like number is denoted by $C(n)$,
and the bound obtained in~\cite{Radek}
is $C(n)  \ge \lfloor (n+2)/6 \rfloor$, 
where $\lfloor \cdot \rfloor$ 
denotes the integer part of a real number.
CDS~(\ref{eq:HO_1_C}) from Section~\ref{sec:oscillations}
of this paper predicts a better bound, 
namely  $C(n) \ge \lfloor (n+1)/4 \rfloor$.

In Section~\ref{sec:chaos}, we have designed 
a cubic CDS with Lorenz attractor. 
A similar system has been put forward in~\cite{Samardzija};
however, since the $\mathbf{x}$-factorable map is applied
and some of the variables are translated differently, 
existence of the chaotic attractor is not rigorously justified.
On the other hand, a CDS with Lorenz attractor
rigorously justified is presented in~\cite{Janos_chaos}.
To design it, the authors use the time-warp map;
consequently, the CDS is quartic and the time-scale 
at which the Lorenz attractor is tracked is distorted.
In Section~\ref{sec:CRN_homoclinic}, we have presented 
a cubic CDS undergoing a super-critical homoclinic bifurcation;
in Appendix~\ref{app:homoclinic}, we have also
designed a quadratic CDS undergoing the same bifurcation.
This problem has also been considered 
using the $\mathbf{x}$-factorable map
and numerical simulations in~\cite{Me1}. 
\bcol{More broadly, bifurcations in two-dimensional 
quadratic CDSs have also been 
studied in~\cite{Bimolecular_bifurcations}; 
see also~\cite{Degn–Harrison} for a bifurcation 
analysis of a chemical system.}

In this paper, we have focused on polynomial DSs.
Assume now that we are given a \emph{non-polynomial} DS
with continuously differentiable vector field;
assume also that this DS is robust in a desired state-space region.
One approach to design a dynamically similar CDS is
to first map the non-polynomial DS
to a qualitatively equivalent polynomial DS;
this can be done in a dimension-preserving manner 
using the Weierstrass approximation theorem, 
see Appendix~\ref{app:nonpoly}.
Once a suitable polynomial DS is found, 
one can then apply the QCM to obtain a qualitatively equivalent CDS.
Another, more direct, approach to map robust non-polynomial
DSs to CDSs is by using special 
chemical systems that execute suitable
algorithms from the theory of artificial neural networks~\cite{RNCRN}.

The chemical maps presented in this paper are
\emph{dimension-preserving}, i.e. they 
introduce no additional dependent variables.
In a follow-up paper, we will focus on
\emph{dimension-expanding} chemical maps 
that introduce auxiliary variables 
to achieve chemical systems.
As part of future research, we also
formulate the following fundamental problem:
\emph{Find classes of polynomial \emph{DS}s that can be mapped to
\emph{CDS}s with the same dimension and degree,
and with desired dynamical or bifurcation structures preserved}.
This general problem is also relevant
to Hilbert's 16th problem in chemistry.
In particular, we have proved that 
$H_{\mathcal{C}}(n) \ge H(n-1)$
in Theorem~\ref{theorem:Hilbert} of Section~\ref{sec:Hilbert},
using the special QCM~(\ref{eq:quasi_chemical2}),
which increases degree of DSs by one.
A natural question is whether the more general QCM~(\ref{eq:quasi_chemical})
can be applied to suitable polynomial DSs in a degree-preserving manner
to probe if e.g. $H_{\mathcal{C}}(n) = H(n)$ for some $n \ge 2$.
As a step forward with the more general problem, we present the following result.

\begin{theorem} \label{theorem:special_case}
Consider the class of $N$-dimensional $n$-degree polynomial \emph{DSs}
\begin{align}
\frac{\mathrm{d} y_i}{\mathrm{d} t} 
& = f_{i}(\mathbf{y}) + \beta_{i} y_i^{n},
\; \; \; \; \; i = 1, 2, \ldots, N,
\label{eq:special_case} 
\end{align}
where $f_{i} \in \mathbb{P}_{n-1}(\mathbb{R}^N,\mathbb{R})$ 
are arbitrary polynomials 
of degree at most $(n-1)$ for all $i = 1, 2, \ldots, N$. 
Assume that for every $i = 1, 2, \ldots, N$
such that $\beta_i \ne 0$ the following condition holds:
$\beta_{i} < 0$ if $n \ge 1$ is odd, 
and $\beta_{i} > 0$ if $n \ge 2$ is even.
Assume furthermore that~$(\ref{eq:special_case})$
is robust in $\mathbb{K} \subset \mathbb{R}^N$.
Then, $N$-dimensional $n$-degree \emph{CDS}
\begin{align}
\frac{\mathrm{d} x_i}{\mathrm{d} t} 
& = \beta_{i} \left(x_i - \frac{T_i}{\mu} \right)^{n}
+ \frac{\mu}{T_i} x_i f_{i}
\left(\mathbf{x} - \frac{\mathbf{T}}{\mu}\right),
\; \; \; \; \; i = 1, 2, \ldots,N, 
\label{eq:special_case_chem} 
\end{align}
in $\mathbb{K} + \mathbf{T}/\mu \subset \mathbb{R}^N$
is qualitatively equivalent to \emph{DS}~$(\ref{eq:special_case})$
in $\mathbb{K}$.
\end{theorem}

\begin{proof}
Applying on~(\ref{eq:special_case}) 
the QCM~(\ref{eq:quasi_chemical}) 
with $q_i = \beta_{i} y_i^{n}$, 
one obtains~(\ref{eq:special_case_chem}).
Statement of the theorem then follows 
from Theorem~\ref{theorem:quasi_chemical}.
\end{proof} 
\noindent
Let us remark that DSs of the form~(\ref{eq:special_case})
can in two dimensions have arbitrary many limit cycles,
and may display chaotic attractors in three dimensions;
see Theorems~\ref{theorem:first_order} and~\ref{theorem:chaos2},
which can be seen as special cases of Theorem~\ref{theorem:special_case}.

\appendix 

\section{Appendix: Background theory} 
\label{app:background}
In this section, we present further background theory.

\begin{definition} $($\textbf{Qualitative equivalence}$)$ \label{def:equivalent}
Consider \emph{DS}~$(\ref{eq:dyn})$
in region $\mathbb{U} \subset \mathbb{R}^N$,
and \emph{DS}~$(\ref{eq:dyn2})$ in region $\mathbb{V} \subset \mathbb{R}^N$.
Let $\boldsymbol{\Phi}$ be a continuous map
from $\mathbb{U}$ onto $\mathbb{V}$, which has a continuous inverse.
Assume that $\boldsymbol{\Phi}$ maps solutions of~$(\ref{eq:dyn})$ 
contained in $\mathbb{U}$ onto the solutions of~$(\ref{eq:dyn2})$ 
contained in $\mathbb{V}$, and preserves the direction of time.
Then, $(\ref{eq:dyn})$ in 
$\mathbb{U}$ is said to be \emph{qualitatively equivalent}
to~$(\ref{eq:dyn2})$ in $\mathbb{V}$.  
\end{definition} 

\begin{definition} $($\textbf{Robust solution}$)$ 
\label{def:robust_solution}
Let $\{\mathbf{y}(t;\mathbf{y}_0) | t \in [0,t_y)\}$ 
be a solution of~$(\ref{eq:dyn})$
for some (possibly infinite) $t_y > 0$.
Assume that for every neighborhood $\mathbb{K} \subset \mathbb{R}^N$
of $\{\mathbf{y}(t;\mathbf{y}_0) | t \in [0,t_y)\}$,
and for every $\mathbf{F} \in C^1$ with 
$\mathbf{F}(\mathbf{z};0) = \mathbf{0}$,
there exists $\mu_0 > 0$
such that for all $\mu \in (0,\mu_0)$
\emph{DS}~$(\ref{eq:dyn3})$ has 
a solution $\{\mathbf{z}(t;\mathbf{z}_0) | t \in [0,t_z) \} \in \mathbb{K}$ 
such that~$(\ref{eq:dyn3})$ in a neighborhood
of $\{\mathbf{z}(t;\mathbf{z}_0) | t \in [0,t_z)\}$
is qualitatively equivalent to~$(\ref{eq:dyn})$ in a neighborhood
of $\{\mathbf{y}(t;\mathbf{y}_0) | t \in [0,t_y)\}$.
Then, solution $\mathbf{y}(t;\mathbf{y}_0)$ 
of~$(\ref{eq:dyn})$ is \emph{robust}.
\end{definition} 

\textbf{Linearization}. 
Let $\nabla \mathbf{f}(\mathbf{y}) \in \mathbb{R}^{N \times N}$ 
be the Jacobian matrix for~(\ref{eq:dyn}), 
where $(\nabla \mathbf{f}(\mathbf{y}))_{i,j} 
= (\partial f_i(\mathbf{y})/\partial y_j)$.
Then, the linearization around the solution $\mathbf{y}(t)$
of~(\ref{eq:dyn}) is given by
\begin{align}
\frac{\mathrm{d} \mathbf{z}}{\mathrm{d} t} 
& = \nabla \mathbf{f}(\mathbf{y}(t)) \mathbf{z}.
\label{eq:dyn_linear} 
\end{align}
If $\mathbf{y}^*$ is a time-independent solution of~(\ref{eq:dyn}), 
then $\mathbf{z}(t) = \exp(\nabla \mathbf{f}(\mathbf{y}^*) t) \mathbf{z}_0$, 
and we associate the eigenvalues of $\nabla \mathbf{f}(\mathbf{y}^*)$
to $\mathbf{y}^*$.

\begin{definition} $($\textbf{Equilibrium}$)$ \label{def:equilibrium}
$\mathbf{y}^* \in \mathbb{R}^N$ is a 
\emph{hyperbolic equilibrium} for~$(\ref{eq:dyn})$ if:
\begin{enumerate}
\item[\emph{(i)}] $\mathbf{f}(\mathbf{y^*}) = \mathbf{0}$, 
and 
\item[\emph{(ii)}] All $N$ eigenvalues associated to $\mathbf{y}^*$ 
have non-zero real parts.
\end{enumerate}
\end{definition} 

Another important class of solutions of~(\ref{eq:dyn}) 
are those that are periodic
with period $\tau > 0$, which we denote by 
$\mathbf{y}_{\tau} = \mathbf{y}_{\tau}(t)$. 
In this case, the solution of the linearization~(\ref{eq:dyn_linear})
is of the form $\mathbf{z}(t) = P(t) \exp(A t) \mathbf{z}_0$, 
where $P(t) \in \mathbb{R}^{N \times N}$ is a $\tau$-periodic matrix, 
and $A \in \mathbb{R}^{N \times N}$ is 
time-independent~\cite{Perko}. 
We say that the eigenvalues of $A$
are characteristic exponents associated to $\mathbf{y}_{\tau}(t)$;
we note that at least one of the exponents has zero real part.

\begin{definition} $($\textbf{Limit cycle}$)$ \label{def:limit_cycle}
$\mathbf{y}_{\tau} : \mathbb{R}_{\ge} \to \mathbb{R}^N$ is 
a \emph{hyperbolic limit cycle} with period $\tau > 0$ 
for~$(\ref{eq:dyn})$ if:
\begin{enumerate}
\item[\emph{(i)}] $\mathbf{y}_{\tau}(t)$ is a solution of~$(\ref{eq:dyn})$
such that $\mathbf{y}_{\tau}(0) = \mathbf{y}_{\tau}(\tau)$ and
$\mathbf{y}_{\tau}(0) \ne \mathbf{y}_{\tau}(t)$ for all $t \in (0,\tau)$,
and
\item[\emph{(ii)}] $(N-1)$ characteristic exponents associated to 
$\mathbf{y}_{\tau}$ have non-zero real parts.
\end{enumerate}
\end{definition}

\textbf{Trapping regions}. 
For solutions $\mathbf{y}(t)$ of~(\ref{eq:dyn}) 
that are neither time-independent nor periodic,
hyperbolicity can still be defined~\cite{Wiggins};
however, linearization~(\ref{eq:dyn_linear})
is more difficult to analyze.
In this more general context, particularly important 
are regions of state-space on whose boundary
vector field of a DS points 
inwards. Consequently, if the state of the DS
is initiated inside such a region, 
then it stays in there for all future times.

\begin{definition} $($\textbf{Trapping region}$)$ \label{def:trapping_region}
Let $\mathbb{S} = \{\mathbf{y} \in \mathbb{R}^N | V(\mathbf{y}) \le r
\textrm{ for some }  r > 0\}$ be a compact set in state-space, 
where $V : \mathbb{R}^N \to \mathbb{R}$ is a continuous function
such that the boundary $V(\mathbf{y}) = r$ has a well-defined
normal at each point. 
If vector field $\mathbf{f}(\mathbf{y})$ 
points inwards on the boundary $V(\mathbf{y}) = r$, 
then $\mathbb{S}$ is called 
a \emph{trapping region} for~$(\ref{eq:dyn})$.
\end{definition}
\noindent
\textbf{Remark}. A sufficient condition for $V(\mathbf{y}) = r$ to have 
a well-defined normal at each point is that  
$V : \mathbb{R}^N \to \mathbb{R}$ is continuously differentiable
and that $\nabla V(\mathbf{y}) \ne \mathbf{0}$
for all $\mathbf{y} \in \mathbb{R}^N$ such that $V(\mathbf{y}) = r$.

\subsection{Chemical reaction networks}
Every CDS induces a canonical set of chemical reactions~\cite{Janos}.
To state this result, given any $x \in \mathbb{R}$, we
define the sign function as
$\textrm{sign}(x) = -1$ if $x < 0$, 
$\textrm{sign}(x) = 0$ if $x = 0$, and 
$\textrm{sign}(x) = 1$ if $x > 0$.

\begin{definition} $($\textbf{Chemical reaction network}$)$ \label{def:CRN}
Assume that~$(\ref{eq:dyn2})$ is a \emph{CDS}, 
i.e. $\mathbf{g} \in \mathbb{P}_n^{\mathcal{C}}(\mathbb{R}^N,\mathbb{R}^N)$. 
Let $\alpha x_1^{\nu_{1}} 
x_2^{\nu_{2}} \ldots x_N^{\nu_{N}}$,
with $\alpha \in \mathbb{R}$
and $\nu_1,\nu_2,\ldots,\nu_N \in \mathbb{Z}_{\ge}$, 
be a monomial from $g_i$. 
Then, monomial $\alpha x_1^{\nu_{1}} 
x_2^{\nu_{2}} \ldots x_N^{\nu_{N}}
= \textrm{\emph{sign}}(\alpha) 
|\alpha| x_1^{\nu_{1}} x_2^{\nu_{2}} \ldots x_N^{\nu_{N}}$
induces the \emph{canonical chemical reaction}
\begin{align}
\sum_{k = 1}^N \nu_{k} X_k 
& \xrightarrow[]{|\alpha|} 
\left(\nu_{i} + \textrm{\emph{sign}}(\alpha) \right) X_i
+ \sum_{k = 1, k \ne i}^N \nu_{k} X_k, 
\label{eq:CR}
\end{align}
where $X_i$ denotes the chemical species whose concentration is $x_i$.
The set of all such chemical reactions,
induced by all the distinct monomials in $\mathbf{g}$,
is called the \emph{canonical chemical reaction network} (\emph{CRN}) 
induced by~$(\ref{eq:dyn2})$. 
\end{definition}
\noindent \textbf{Remark}. 
Species on the left-hand side of~$(\ref{eq:CR})$ 
are called the \emph{reactants}, 
while $|\alpha|$ is called the \emph{rate coefficient}.
Terms of the form $0 X_i$ are denoted by $\varnothing$,
and interpreted as some neglected species.

For any given CDS, the induced canonical CRN 
from Definition~\ref{def:CRN} is unique.
However, a given CDS can also induce other, non-canonical, CRNs.

\begin{definition} $($\textbf{Fused reaction}$)$ \label{def:fused}
Consider $M$ canonical reactions that have 
identical reactants and rate coefficients:
\begin{align}
\sum_{k = 1}^N \nu_{k} X_k & \xrightarrow[]{|\alpha|} 
\left(\nu_{k_j} + \textrm{\emph{sign}}(\alpha) \right) X_{k_j}
+ \sum_{k = 1, k \ne k_j}^N \nu_{k} X_k, 
\; \; \; j = 1,2, \ldots, M. \nonumber 
\end{align}
Then, the corresponding \emph{fused reaction} is given by 
\begin{align}
\sum_{k = 1}^N \nu_{k} X_k & \xrightarrow[]{|\alpha|} 
\left(\nu_{k_1} + \textrm{\emph{sign}}(\alpha) \right) X_{k_1}
+
\ldots
+
\left(\nu_{k_M} + \textrm{\emph{sign}}(\alpha) \right) X_{k_M}
+ \sum_{\substack{k = 1,\\ k \ne k_1,k_2,\ldots,k_M}}^N \nu_{k} X_k.
\nonumber
\end{align}
Any network obtained by fusing reactions in the canonical \emph{CRN}
is called a non-canonical \emph{CRN}.
\end{definition}

\textbf{Example}. The canonical CRN induced by~(\ref{eq:ex_01})
is given by~(\ref{eq:ex_01_CRN1}). 
Since reactions $X_1 \xrightarrow[]{2} 2 X_1$
and $X_1 \xrightarrow[]{2} X_1 + X_2$
from~(\ref{eq:ex_01_CRN1}) have the same reactants
and rate coefficients, they can be fused;
the resulting non-canonical CRN is given~(\ref{eq:ex_01_CRN2}).

\section{Appendix: Time-warp map} 
\label{app:timechange}

\begin{definition} $($\textbf{Time-warp map}$)$ \label{def:timechange}
Consider \emph{DS}~$(\ref{eq:dyn_components})$.
Assume that 
$f_{i} \in \mathbb{P}_n^{\mathcal{N}}(\mathbb{R}^N;\mathbb{R})$
for $i = 1,2,\ldots,K$,
and
$f_{i} \in \mathbb{P}_n^{\mathcal{C}}(\mathbb{R}^N;\mathbb{R})$
for $i = K+1,K+2,\ldots,N$.
Consider \emph{CDS}
\begin{align}
\frac{\mathrm{d} x_i}{\mathrm{d} s} 
& = \left(x_1 x_2 \ldots x_K \right) f_{i}(\mathbf{x}), 
\; \; \;  i = 1,2,\ldots,N.
\label{eq:timechange} 
\end{align}
$\Psi_s: \mathbb{P}_n^{\mathcal{N}}(\mathbb{R}^N; \, \mathbb{R}^N) 
\to \mathbb{P}_{n+K}^{\mathcal{C}}(\mathbb{R}^N; \, \mathbb{R}^N)$,
mapping the vector field of \emph{DS}~$(\ref{eq:dyn_components})$
to the vector field of \emph{CDS}~$(\ref{eq:timechange})$,
is called a \emph{time-warp map}.
\end{definition}

The time-warp map can be interpreted as a 
state-dependent time-rescaling in~$(\ref{eq:dyn_components})$.
\begin{theorem} \label{theorem:timechange}
Consider \emph{DS}~$(\ref{eq:dyn_components})$
with $f_{i} \in \mathbb{P}_n^{\mathcal{N}}(\mathbb{R}^N;\mathbb{R})$
for $i = 1,2,\ldots,K$,
and
$f_{i} \in \mathbb{P}_n^{\mathcal{C}}(\mathbb{R}^N;\mathbb{R})$
for $i = K+1,K+2,\ldots,N$.
Assume that~$(\ref{eq:dyn_components})$
with initial condition $\mathbf{y}_0 = (y_{1,0},\ldots,y_{N,0})^{\top} 
\in \mathbb{R}_{\ge}^N$ with 
$(y_{1,0},\ldots,y_{K,0})^{\top} \in \mathbb{R}_{>}^K$
has a solution $\mathbf{y}(t;\mathbf{y}_0) \in \mathbb{R}_{\ge}^N$
with 
$(y_1(t;\mathbf{y}_0),
\ldots,y_K(t;\mathbf{y}_0))^{\top} \in \mathbb{R}_{>}^K$
for all $t \in [0,t_0]$ for some $t_0 > 0$. 
Then, \emph{CDS}~$(\ref{eq:timechange})$ has the solution
$\mathbf{y}(t(s);\mathbf{y}_0) \in \mathbb{R}_{\ge}^N$
for all $s \in 
[0, \int_0^{t_0} (y_1(\theta;\mathbf{y}_0)\ldots 
y_K(\theta;\mathbf{y}_0))^{-1} \mathrm{d} \theta]$, 
where $t(s)$ is the inverse of 
$s(t) = \int_0^{t} 
(y_1(\theta;\mathbf{y}_0) \ldots 
y_K(\theta;\mathbf{y}_0))^{-1} 
\mathrm{d} \theta$.
\end{theorem}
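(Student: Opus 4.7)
The plan is to show directly that Theorem~\ref{theorem:timechange} reduces to a change-of-variables argument via the chain rule, using the positivity assumption on the first $K$ components of the solution to ensure the time reparameterization is well-defined and invertible.

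First I would verify that $s(t)$ is a bona fide reparameterization of the time axis. Since $y_1(\theta;\mathbf{y}_0), \ldots, y_K(\theta;\mathbf{y}_0)$ are continuous and strictly positive on $[0, t_0]$ (by hypothesis), their product is bounded away from zero on this compact interval. Hence the integrand $(y_1(\theta;\mathbf{y}_0) \ldots y_K(\theta;\mathbf{y}_0))^{-1}$ is continuous and positive, which implies $s(t)$ is $C^1$ with $ds/dt = (y_1(t;\mathbf{y}_0) \ldots y_K(t;\mathbf{y}_0))^{-1} > 0$. Therefore $s$ is strictly increasing on $[0,t_0]$, with $s(0) = 0$, and possesses a $C^1$ inverse $t(s)$ defined on $[0, s(t_0)]$ satisfying $dt/ds = y_1(t(s);\mathbf{y}_0) \ldots y_K(t(s);\mathbf{y}_0)$.

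Next I would define the candidate solution $\mathbf{x}(s) := \mathbf{y}(t(s); \mathbf{y}_0)$ and verify it solves~(\ref{eq:timechange}). By the chain rule, for each $i = 1, 2, \ldots, N$,
\begin{align}
\frac{\mathrm{d} x_i}{\mathrm{d} s}(s)
= \frac{\mathrm{d} y_i}{\mathrm{d} t}(t(s);\mathbf{y}_0) \cdot \frac{\mathrm{d} t}{\mathrm{d} s}(s)
= f_i(\mathbf{y}(t(s);\mathbf{y}_0)) \cdot y_1(t(s);\mathbf{y}_0) \ldots y_K(t(s);\mathbf{y}_0),
\nonumber
\end{align}
and since $\mathbf{x}(s) = \mathbf{y}(t(s);\mathbf{y}_0)$ this equals $(x_1(s) \ldots x_K(s))\, f_i(\mathbf{x}(s))$, which is exactly the right-hand side of~(\ref{eq:timechange}). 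The initial condition $\mathbf{x}(0) = \mathbf{y}(0;\mathbf{y}_0) = \mathbf{y}_0$ is also immediate, and non-negativity of $\mathbf{x}(s)$ on its domain is inherited directly from the non-negativity of $\mathbf{y}(t;\mathbf{y}_0)$ on $[0,t_0]$ together with surjectivity of $t:[0,s(t_0)] \to [0,t_0]$.

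There is no substantial obstacle: the argument is a textbook application of the chain rule to a strictly monotone $C^1$ reparameterization. The only subtlety worth flagging is that positivity of $y_1, \ldots, y_K$ is essential both for $s(t)$ to be well-defined (the integrand would diverge otherwise) and for $s$ to be invertible; loss of positivity at some $t^* \in (0, t_0)$ would truncate the valid domain of the reparameterization to $[0, t^*)$ and potentially create the boundary equilibria observed in Section~\ref{sec:chemical_maps}. The theorem deliberately sidesteps this regime by hypothesis.
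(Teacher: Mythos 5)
Your proposal is correct and follows essentially the same route as the paper's proof: establish that positivity of $y_1,\ldots,y_K$ on $[0,t_0]$ makes $s(t)$ strictly increasing and hence invertible, then apply the chain rule to $\mathbf{x}(s) = \mathbf{y}(t(s);\mathbf{y}_0)$ to recover~(\ref{eq:timechange}). Your version is slightly more explicit about the $C^1$ regularity of the reparameterization, the initial condition, and the inherited non-negativity, but the substance is identical.
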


\begin{proof}
Since $(y_1(t;\mathbf{y}_0),
\ldots,y_K(t;\mathbf{y}_0))^{\top} \in \mathbb{R}_{>}^K$ for all $t \in [0,t_0]$, 
it follows that $s(t)$ is then well-defined and $s(t) > 0$;
furthermore, $s(t)$ 
is monotonically increasing for all $t \in [0,t_0]$
and, hence, has an inverse then, which we denote by $t(s)$. 
Let $x_i(s;\mathbf{y}_0) = y_i(t(s);\mathbf{y}_0)$; then, 
it follows from~(\ref{eq:dyn_components}) 
that for all 
$s \in \left[0, \int_0^{t_0} 
\left(y_1(\theta;\mathbf{y}_0) \ldots y_K(\theta;\mathbf{y}_0) \right)^{-1} 
\mathrm{d} \theta \right]$
variable $x_i = x_i(s;\mathbf{y}_0)$ satisfies 
\begin{align}
\frac{\mathrm{d} x_i}{\mathrm{d} s}
& =  \frac{\mathrm{d} y_i}{\mathrm{d} t} 
\frac{\mathrm{d} t}{\mathrm{d} s}
=  \left(x_1 \ldots x_K \right) f_{i}(\mathbf{x}),
\; \; \;  i = 1,2,\ldots,N,
\label{eq:timechange2} 
\end{align}
which is identical to~(\ref{eq:timechange}).
\end{proof}
\textbf{Remark}. More generally, 
monomial $\left(x_1 x_2 \ldots x_K \right)$
from~(\ref{eq:timechange})
can be replaced with any other polynomial
that ensures~(\ref{eq:timechange}) is chemical;
for example, $\left(x_1^{n_1} x_2^{n_2} \ldots x_K^{n_K} \right)$
with $n_i \in \mathbb{Z}_{>}$ is allowed.

\textbf{Remark}. Assume that a solution of~(\ref{eq:dyn_components})
is bounded for all $t \ge 0$; then, such a solution can be translated
to the positive orthant prior to applying the time-warp map.
Therefore, Theorem~\ref{theorem:timechange} shows that 
such translated time-warp map can preserve any bounded solution, 
such as equilibria and limit cycles.

\section{Appendix: Generalized quasi-chemical map}
\label{app:quasi_chemical_g}
In this section, we present a more general form of the QCM.
To this end, we define vector
$\mathbf{T}_{\mu,a} = (T_1/\mu^{a_1},\ldots,T_N/\mu^{a_N})^{\top}
\in \mathbb{R}_{\ge}^N$
and diagonal matrix $S_{\mu,b} = \textrm{diag}(s_1/\mu^{b_1},
\ldots, s_N/\mu^{b_N}) \in \mathbb{R}_{>}^{N \times N}$, 
where $T_i \ge 0$, $s_i > 0$, and $a_i, b_i \in \mathbb{Z}_{\ge}$
are fixed parameters, while $\mu > 0$ is a free parameter.
Furthermore, we let $p_i(\cdot;\mu) \in  \mathbb{P}_n(\mathbb{R}^N,\mathbb{R})$
be polynomials that are continuously differentiable in $\mu$ 
with $p_i(\mathbf{z};0) = 0$.

\begin{definition} $($\textbf{Generalized quasi-chemical map}$)$ 
\label{def:quasi_chemical_g}
Consider \emph{DS}~$(\ref{eq:dyn_components})$. 
Consider also 
\begin{align}
\frac{\mu^{b_i}}{s_i} \frac{\mathrm{d} x_i}{\mathrm{d} t} 
& = \left[q_{i} \left(S_{\mu,b}^{-1}(\mathbf{x} - \mathbf{T}_{\mu,\mathbf{a}}) \right)
+ p_i \left(S_{\mu,b}^{-1}(\mathbf{x} - \mathbf{T}_{\mu,\mathbf{a}});\mu \right) \right]
\nonumber \\
& + \frac{\mu^{a_i}}{T_i} x_i 
\Big[f_{i} \left(S_{\mu,b}^{-1}(\mathbf{x} - \mathbf{T}_{\mu,\mathbf{a}}) \right) 
- q_{i} \left(S_{\mu,b}^{-1}(\mathbf{x} - \mathbf{T}_{\mu,\mathbf{a}}) \right) \Big],
\; \; \; i = 1,2,\ldots,N.
\label{eq:quasi_chemical_g} 
\end{align}
Assume that $[q_{i}(S_{\mu,b}^{-1}(\mathbf{x} - \mathbf{T}_{\mu,\mathbf{a}}))
+ p_i(S_{\mu,b}^{-1}(\mathbf{x} - \mathbf{T}_{\mu,\mathbf{a}});\mu)]
\in \mathbb{P}_n^{\mathcal{C}}(\mathbb{R}^N,\mathbb{R})$
is chemical for all sufficiently small $\mu > 0$
and for all $i = 1,2,\ldots,N$.
Assume also that $a_i > b_i$ for all $i = 1,2,\ldots,N$.
Then, $\Psi_{\mu} : \mathbb{P}_n(\mathbb{R}^N,\mathbb{R}^N) 
\to \mathbb{P}_{n+1}^{\mathcal{C}}(\mathbb{R}^N,\mathbb{R}^N)$,
mapping the vector field of \emph{DS}~$(\ref{eq:dyn_components})$
to the vector field of \emph{CDS}~$(\ref{eq:quasi_chemical_g})$
for all sufficiently small $\mu > 0$,
is called a \emph{quasi-chemical map}.
\end{definition}

\textbf{Remark}. 
Definition~\ref{def:quasi_chemical_g} can be further generalized by 
replacing the diagonal matrix $S_{\mu,b} \in \mathbb{R}^{N \times N}$
with general matrix $A_{\mu} \in \mathbb{R}^{N \times N}$
that is non-singular for all sufficiently small $\mu > 0$.

\begin{lemma}
\label{lemma:quasi_chemical_g}
Under the affine change of variables $z_i = (\mu^{b_i}/s_i) (x_i - T_i/\mu^{a_i})$
for every $\mu > 0$,
\emph{DS}~$(\ref{eq:quasi_chemical_g})$ becomes 
\begin{align}
\frac{\mathrm{d} z_i}{\mathrm{d} t} 
& = f_{i}(\mathbf{z}) + p_i(\mathbf{z};\mu) 
+ \mu^{a_i - b_i} \frac{s_i}{T_i} z_i
\left[f_{i}(\mathbf{z}) - q_i(\mathbf{z}) \right],
\; \; \; i = 1, 2, \ldots, N.
\label{eq:quasi_chemical_T_g}
\end{align}
\end{lemma}

Comparing Lemma~\ref{lemma:quasi_chemical_g} 
with Lemma~\ref{lemma:quasi_chemical},
one can notice that, in addition to translation, 
the generalized QCM also involves
scaling of the dependent variables, 
as well as additional perturbations
$p_i(\mathbf{z};\mu)$; furthermore, 
different variables can be scaled differently.
It follows from Lemma~\ref{lemma:quasi_chemical_g}, and 
the assumption $a_i > b_i$, that the 
generalized QCM obeys analogous results as those presented in 
Section~\ref{sec:quasi_chemical}.
For example, Theorem~\ref{theorem:quasi_chemical_EL}
holds, but with a different error estimates,
namely $\mathcal{O}(\mu^{\kappa})$, where $\kappa > 0$
is the minimum value in $\{a_1 - b_1, a_2 - b_2, \ldots, a_N - b_N\}$.

\textbf{Special case}. 
To obtain an analogue of~(\ref{eq:quasi_chemical2}),
we choose $q_i = p_i = 0$ in~(\ref{eq:quasi_chemical_g}), leading to
\begin{align}
 \frac{\mathrm{d} x_i}{\mathrm{d} t} 
& = \mu^{a_i - b_i} \frac{s_i}{T_i} x_i 
f_{i} \left(S_{\mu}^{-1}(\mathbf{x} - \mathbf{T}_{\mu,\mathbf{a}}) \right), 
\; \; \; i = 1, 2, \ldots, N.
\label{eq:quasi_chemical_g2} 
\end{align}

\begin{figure}[!htbp]
\vskip  -2.3cm
\leftline{\hskip 0.6cm 
(a) (\ref{eq:ex_QF_3}) with $\mu = 1/100$ \hskip 1.0cm 
(b) (\ref{eq:ex_QF_3}) with $\mu = 1/1000$ \hskip 0.9cm 
(c) (\ref{eq:ex_QF_3}) with $\mu = 1/1000$}
\vskip  0.2cm
\centerline{
\hskip -0.6cm
\includegraphics[width=0.27\columnwidth]{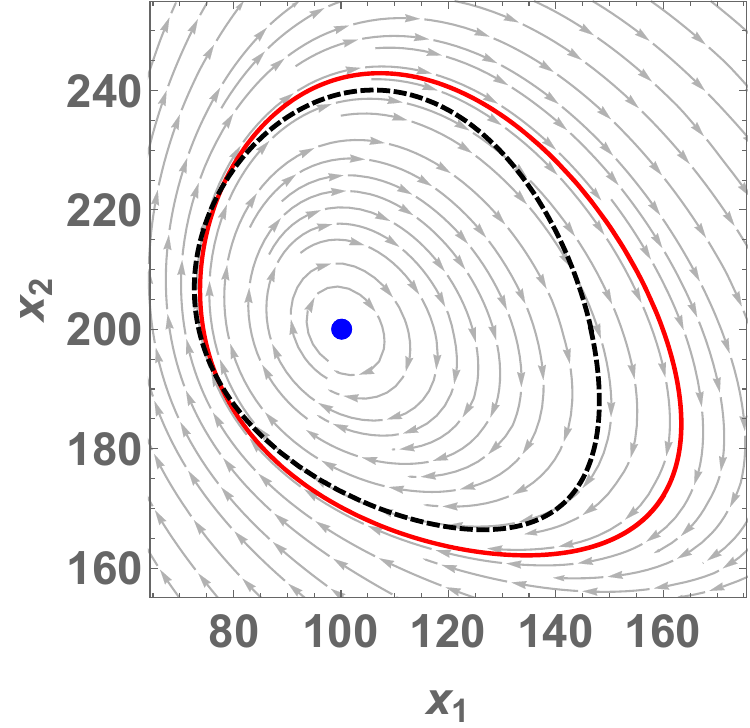}
\hskip 0.3cm
\includegraphics[width=0.283\columnwidth]{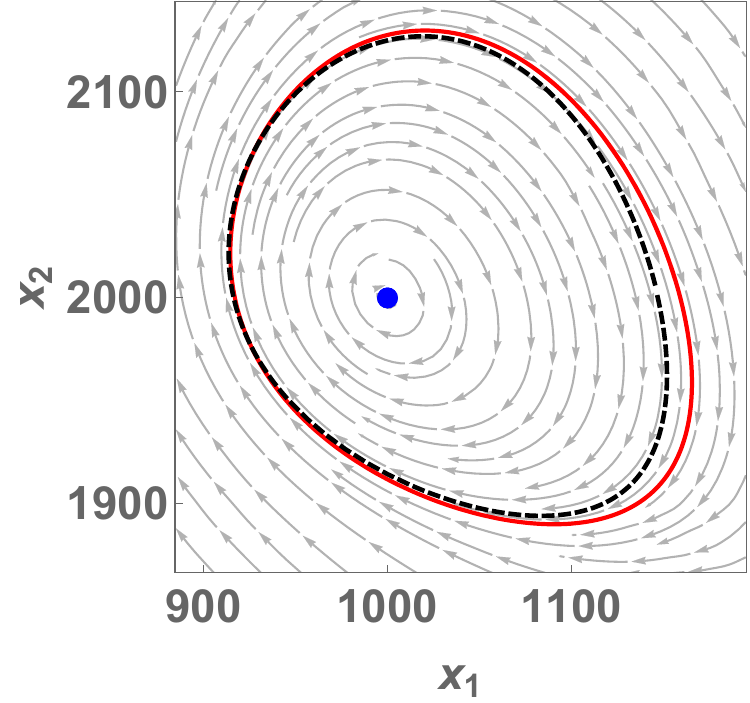}
\hskip 0.3cm
\includegraphics[width=0.35\columnwidth]{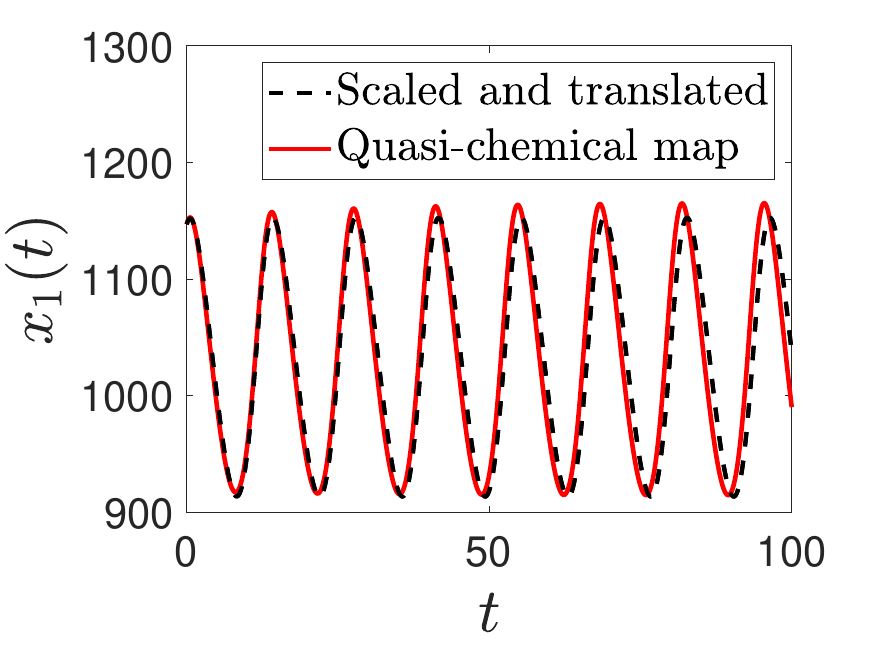}
}
\vskip -0.2cm
\caption{\it{\emph{Application of the generalized 
QCM~(\ref{eq:quasi_chemical_g}) on DS~(\ref{eq:ex1}).}} 
Panel \emph{(a)} displays state-space
for~$(\ref{eq:ex_QF_3})$ with $\mu = 1/100$,
with the limit cycle shown as the solid red curve, 
and the equilibrium as the blue dot; 
also shown as dashed black curve is the limit cycle of~$(\ref{eq:ex1})$
under the same scaling and translation as in~$(\ref{eq:ex_QF_3})$.
Panel \emph{(b)} displays the same plot for $\mu = 1/1000$, 
while panel \emph{(c)} displays a corresponding 
time-state space.} \label{fig:2_app}
\end{figure}

\textbf{Example}.
Let us apply on~(\ref{eq:ex1}) the QCM~(\ref{eq:quasi_chemical_g})
with $q_1 = 0$, $q_2 = f_2$, $p_1 = p_2 = 0$, 
$T_1 = 1$, $T_2 = 2$, $a_1 = a_2 = 1$, 
$s_1 = s_1 = 1$ and $b_1 = b_2 = 1/2$,
which leads to
\begin{align}
\frac{\mathrm{d} x_1}{\mathrm{d} t} 
& = \mu x_1 \left[ \frac{1}{2} \left(x_2 - \frac{2}{\mu} \right) 
+ \frac{1}{8} \left(x_1 - \frac{1}{\mu} \right) \right], \nonumber \\
\frac{\mathrm{d} x_2}{\mathrm{d} t} 
& = -\frac{1}{2} \left(x_1 - \frac{1}{\mu} \right)
- \mu^{1/2} \frac{1}{20}  \left(x_1 - \frac{1}{\mu} \right) 
\left(x_2 - \frac{2}{\mu} \right)
+ \mu^{1/2} \frac{1}{20}  \left(x_2 - \frac{2}{\mu} \right)^2 
- \frac{3}{32} \left(x_2 - \frac{2}{\mu} \right). \label{eq:ex_QF_3}
\end{align}
One can readily show that~(\ref{eq:ex_QF_3})
is chemical for all sufficiently small $\mu > 0$. 
In Figure~\ref{fig:2_app}(a), we display state-space 
for~(\ref{eq:ex_QF_3}) when $\mu = 1/100$.
Analogous plot is shown in Figure~\ref{fig:2_app}(b)
for $\mu = 1/1000$, with a corresponding time-state space
shown in Figure~\ref{fig:2_app}(c). These plots 
confirm that the limit cycle is now of order
$\mathcal{O}(\mu^{-1/2})$, as opposed to 
$\mathcal{O}(1)$ as in Figure~\ref{fig:2}.
Note that the error
between the limit cycle of~(\ref{eq:ex_QF_3}) 
and that of suitably scaled and translated target system, 
as well as their periods, is $\mathcal{O}(\mu^{1/2})$, 
as opposed to $\mathcal{O}(\mu)$
as in Figure~\ref{fig:2}; in particular, the convergence
order is now lower due to the magnification of the limit cycle.

\section{Appendix: Perturbed harmonic oscillator} 
\label{app:linearoscillator}

\begin{lemma} \label{lemma:first_order}
Consider the perturbed harmonic oscillator given by~$(\ref{eq:HO_1})$.
Let $0 < r_1< r_2 < \ldots < r_n$ be arbitrary real numbers.
Then, there exists coefficients $\beta_i 
= \beta_i(r_1,r_2,\ldots,r_N)$ for $i = 0, 1, \ldots, N$ 
such that for every sufficiently small $\varepsilon > 0$
system~$(\ref{eq:HO_1})$ has $n$ nested limit cycles. 
The $i$th limit cycle is arbitrarily close to
$y_1(t) = r_i \cos(t)$
and
$y_2(t) = -r_i \sin(t)$,
and has period arbitrarily close $2 \pi$. 
The $n$th limit cycle is stable, 
and the limit cycles alternate in stability.
\end{lemma}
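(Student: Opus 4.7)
The plan is to treat~(\ref{eq:HO_1}) as a near-Hamiltonian perturbation of the linear harmonic oscillator and apply the classical averaging method (see~\cite{Perko}[Chapter 4.11]). Introducing polar coordinates $y_1 = r\cos\theta$, $y_2 = -r\sin\theta$ and using the identity $r\dot r = y_1\dot y_1 + y_2\dot y_2$, one obtains $\dot r = \varepsilon\cos\theta\,P(r\cos\theta)$ and $\dot\theta = 1 + O(\varepsilon)$, where $P(y_1) = \sum_{i=0}^n\beta_i y_1^{2i+1}$. Since $\theta$ advances monotonically at unit angular speed to leading order, the averaging theorem reduces the slow variable to the autonomous equation $\dot r = \varepsilon M(r) + O(\varepsilon^2)$, with
$M(r) = (2\pi)^{-1}\int_0^{2\pi}\cos\theta\,P(r\cos\theta)\,d\theta = r\sum_{i=0}^n c_i\beta_i r^{2i}$,
where $c_i = \binom{2i+2}{i+1}/2^{2i+2}$ are the standard Wallis integrals. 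Setting $u = r^2$, the positive zeros of $M$ coincide with the positive roots of the degree-$n$ polynomial $Q(u) = \sum_{i=0}^n c_i\beta_i u^i$.

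Second, I would choose $\beta_0,\beta_1,\ldots,\beta_n$ by imposing $Q(u) = c_n\beta_n\prod_{j=1}^n(u-r_j^2)$ and expanding. Since the $r_j$ are distinct and positive, this uniquely determines $\beta_0,\ldots,\beta_{n-1}$ as explicit rational functions of $\beta_n,r_1,\ldots,r_n$; fixing $\beta_n<0$ (needed in the stability step below) then yields the explicit formulas referenced as~(\ref{eq:HO_1_b}).

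The third step invokes the averaging theorem directly: each simple positive zero $r_k$ of $M$ gives rise, for every sufficiently small $\varepsilon>0$, to a unique hyperbolic limit cycle of~(\ref{eq:HO_1}) whose trajectory lies within $O(\varepsilon)$ of the circle of radius $r_k$ and whose period lies within $O(\varepsilon)$ of $2\pi$. Stability is governed by $\textrm{sign}\,M'(r_k) = \textrm{sign}\,Q'(r_k^2)$. Because $Q'(r_n^2) = c_n\beta_n\prod_{j<n}(r_n^2-r_j^2)$ is a positive multiple of $\beta_n<0$, the outer cycle is stable; crossing each successive root of $Q$ inward flips the sign of $Q'$, so the limit cycles alternate in stability as claimed.

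The main obstacle is essentially bookkeeping rather than conceptual: one must verify that each $r_k^2$ is a simple root of $Q$ (which follows from the distinctness of the $r_j$ and the fact that $Q$ has degree exactly $n$ when $\beta_n\ne 0$), and handle the Wallis coefficients $c_i$ carefully enough to reproduce the explicit formulas in~(\ref{eq:HO_1_b}). Hyperbolicity of the cycles, the period asymptotics, and the precise meaning of ``arbitrarily close'' all follow from the averaging theorem without any further analysis.
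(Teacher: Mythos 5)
Your proposal is correct and follows essentially the same route as the paper: the averaged radial equation you derive satisfies $2\pi r M(r) = I(r)$, where $I(r)$ is precisely the Melnikov-type integral the paper computes, and your product ansatz $Q(u) = c_n\beta_n\prod_{j=1}^n(u-r_j^2)$ with $\beta_n<0$ is exactly the paper's application of Vieta's formulas leading to~(\ref{eq:HO_1_b}). The hyperbolicity, closeness, period, and alternating-stability conclusions are drawn from the simplicity of the zeros and the sign of the slope in both arguments, so there is nothing to add.
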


\begin{proof}
Substituting $f_1 = \beta_0 y_1 + \beta_1 y_1^3 + 
\beta_2 y_1^5 + \ldots + \beta_{n} y_1^{2 n + 1}$ 
and $f_2 = 0$ into
\begin{align}
I(r) & = \int_0^{2 \pi} 
\Big[f_1(r \cos(s),-r \sin(s)) \cos(s) 
- f_2(r \cos(s),-r \sin(s))  \sin(s) \Big] 
\mathrm{d} s, \nonumber
\end{align}
one obtains
\begin{align}
I(r) & = \pi r \sum_{i = 0}^n 2^{-2 i} 
\binom{2 i + 1}{i + 1} \beta_{i} r^{2 i},
\label{eq:I}
\end{align}
where we use $\int_0^{2 \pi} \cos^{2 n}(t) \mathrm{d} t 
= 2^{-2 n + 2} \binom{2 n - 1}{n} \pi$.
Using Vieta's formulas, it follows that 
$0 < r_1^2 < r_2^2 < \ldots < r_n^2$ are the roots
of $I(r)$ if 
\begin{align}
\beta_i & = (-1)^{n-i+1} 2^{2 i} \binom{2 i + 1}{i + 1}^{-1}
\sum_{l \in C_{n-i}^n} 
\left(r_{l(1)} r_{l(2)} \ldots r_{l(n-i)} \right)^2
\; \; \; \textrm{for all } i = 0,1,2,\ldots,n-1, \nonumber \\
\beta_n & = -2^{2 n} \binom{2 n + 1}{n + 1}^{-1},
\label{eq:HO_1_b}
\end{align}
where $C_{k}^{n}$ denotes the set of all combinations 
of length $1 \le k \le n$
of the elements from $\{1,2,\ldots,n\}$.
Simple zeros of $I(r)$ correspond to 
the hyperbolic limit cycles of~(\ref{eq:HO_1}), 
and their stability is 
determined by the slope of $I(r)$~\cite{Perko}[Chapter 4.9],
implying the statement of the lemma.
\end{proof}
\noindent
\textbf{Remark}. We impose that $2^{-2 n} \binom{2 n + 1}{n + 1} \beta_{n} < 0$
to ensure stability of the limit cycle with radius $r_n$;
we arbitrarily let $2^{-2 n} \binom{2 n + 1}{n + 1} \beta_{n} = -1$.

\textbf{Rate coefficients for~(\ref{eq:HO_1_C})}. 
The coefficients are given by
\begin{align}
\alpha_i & = \left|\delta_{0}(i) \frac{1}{\mu} 
+ \varepsilon \sum_{j = 0}^n \beta_j
\binom{2 j + 1}{i} \left(\frac{1}{\mu} \right)^{2 j + 1 - i}\right|,
\; \; i = 0,1,\ldots, 2 n + 1, 
\; \; \alpha_{2 n + 2} = 1, 
\; \; \alpha_{2 n + 3} = \mu,  
\label{eq:HO_1_C_coeff}
\end{align}
where $\beta_0,\beta_1,\ldots,\beta_{n}$ are given by~$(\ref{eq:HO_1_b})$,
and $\delta_{i}(j)$ be the Kronecker-delta, i.e. $\delta_i(j) = 1$ if $j = i$, 
and $\delta_{i}(j) = 0$ otherwise.

\section{Appendix: Candidate quadratic CRN with chaos} 
\label{app:chaos}
Chemical Lorenz system~(\ref{eq:Lorenz_chemical})
contains a cubic term, i.e. CRN~(\ref{eq:Lorenz_CRN}) contains 
one tri-molecular reaction.
Let us now focus on deriving a quadratic CDS with chaos.
To this end, consider the DS
\begin{align}
\frac{\mathrm{d} y_1}{\mathrm{d} t} & = \frac{27}{10} y_2 + y_3, \nonumber \\
\frac{\mathrm{d} y_2}{\mathrm{d} t} & = -y_1 + y_2^2, \nonumber \\
\frac{\mathrm{d} y_3}{\mathrm{d} t} & = y_1 + y_2,
\label{eq:Sprott_P}
\end{align}
which has been put forward
as displaying a chaotic attractor in~\cite{Sprott} as ``Case P".
This attractor has been numerically investigated, 
see also Figure~\ref{fig:5}(a)--(b);
however, no rigorous proofs are given. 
Note that proving
existence and properties of chaotic attractors
can be a difficult task. For example, 
there is a gap of more than three decades between
numerical evidence of the Lorenz attractor~\cite{Lorenz} 
and its rigorous proof~\cite{Tucker}.
To proceed, in this paper
we simply assume that~(\ref{eq:Sprott_P})
has a robust chaotic attractor, which allows us
to state the following result. 

\begin{theorem}$($\textbf{Quadratic \emph{CRN} with a chaotic attractor}$)$ 
\label{theorem:chaos2}
Assume that \emph{DS}~$(\ref{eq:Sprott_P})$
has a chaotic attractor which is robust. 
Consider the \emph{CRN}
\begin{align}
X_1 &\xrightarrow[]{\alpha_1} \varnothing, 
\; \; \; X_1 + X_2 \xrightarrow[]{\alpha_2} 2 X_1, 
\; \; \; X_1 + X_3 \xrightarrow[]{\alpha_3} 2 X_1 + 2 X_3, 
\; \; \; \varnothing \xrightleftharpoons[\alpha_5]{\alpha_4} X_2, \nonumber \\ 
2 X_2 & \xrightarrow[]{\alpha_6} 3 X_2, 
\; \; \;  X_3 \xrightarrow[]{\alpha_7} \varnothing,
\; \; \;  X_2 + X_3 \xrightarrow[]{\alpha_8} X_2 + 2 X_3.
\label{eq:quadratic_chaos_CRN}
\end{align}
Assume that the rate coefficients are given by 
\begin{align}
\alpha_1 & = 2, \; \; \; 
\alpha_2 = \frac{27}{10} \mu,\; \; \; 
\alpha_3 = \mu, \; \; \; 
\alpha_4 = \frac{100}{729 \mu^2}, \; \; \; 
\alpha_5 = \frac{20}{27 \mu} - \frac{27}{10}, \; \; \; 
\alpha_6 = 1, \; \; \; 
\alpha_7 = \frac{37}{27}, \; \; \; 
\alpha_{8} = \mu. 
\label{eq:quadratic_chaos_coeff}
\end{align}
Then, for every sufficiently small $\mu > 0$, 
\emph{CRN}~$(\ref{eq:quadratic_chaos_CRN})$
has a chaotic attractor which is qualitatively
equivalent to the robust chaotic attractor of~$(\ref{eq:Sprott_P})$.
\end{theorem}

\begin{figure}[!htbp]
\vskip -0.0cm
\leftline{\hskip 2.2cm (a) \hskip  6.9cm (c)}
\centerline{
\hskip 0.0cm
\includegraphics[width=0.35\columnwidth]{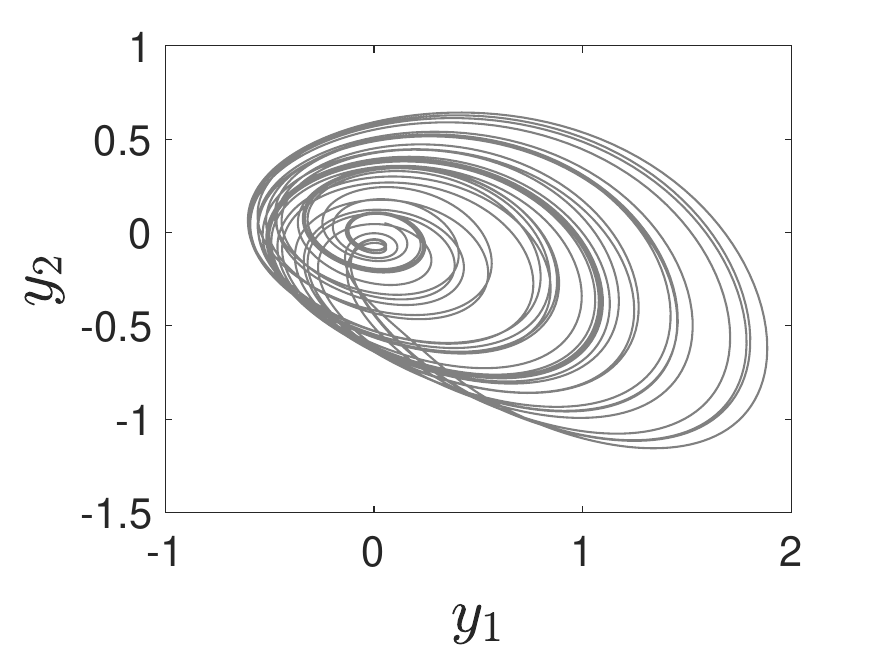}
\hskip 1.5cm
\includegraphics[width=0.35\columnwidth]{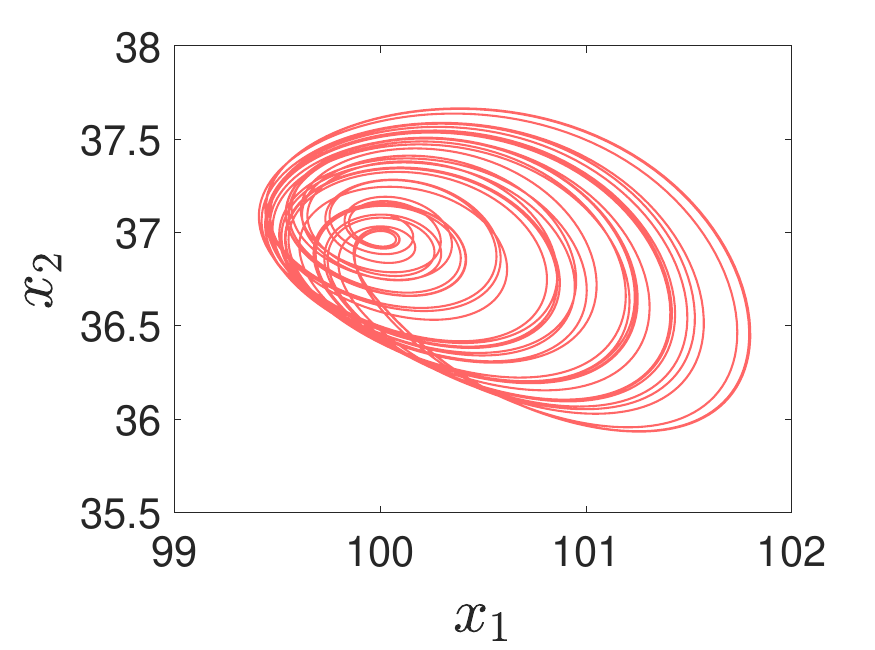}
}
\leftline{\hskip 2.2cm (b) \hskip  6.9cm (d)}
\centerline{
\hskip 0.0cm
\includegraphics[width=0.35\columnwidth]{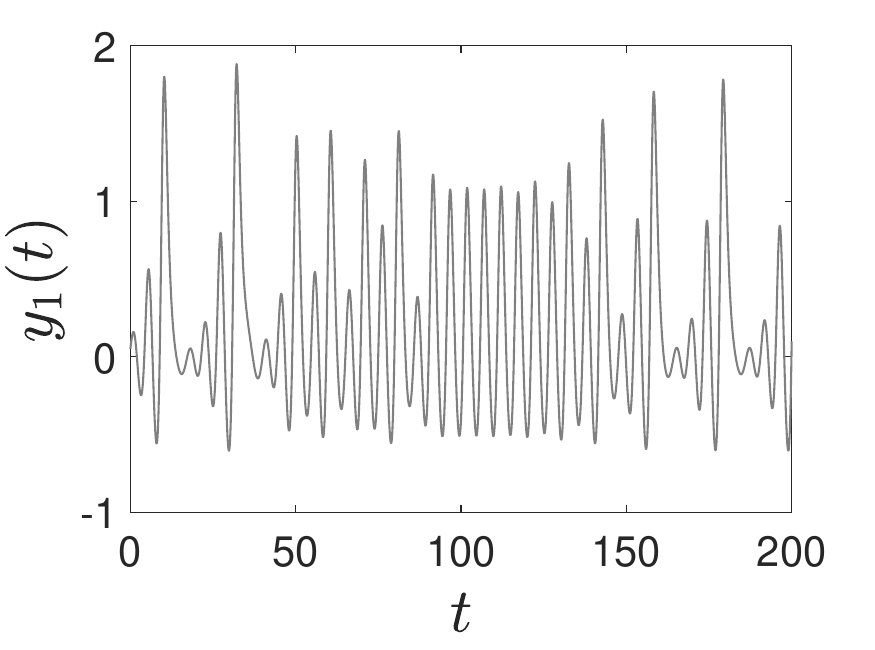}
\hskip 1.5cm
\includegraphics[width=0.35\columnwidth]{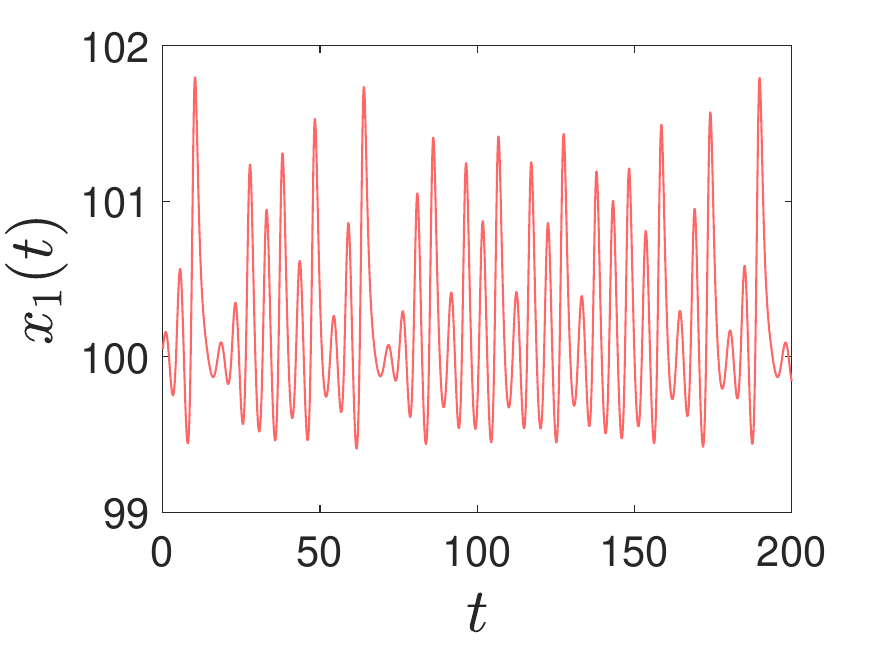}
}
\vskip -0.2cm
\caption{\it{\emph{Quadratic CDS with chaos.} 
Panels \emph{(a)}--\emph{(b)} display $(y_1,y_2)$-
and $(t,y_1)$-space for \emph{DS}~$(\ref{eq:Sprott_P})$,
with initial condition $y_1(0) = y_2(0) = y_3(0) = 5/100$.
Panels \emph{(c)}--\emph{(d)} show analogous plots 
for \emph{CDS}~$(\ref{eq:Sprott_P_chemical})$
with initial condition $x_1(0) = x_3(0) = 5/100 + 100$
and $x_2(0) = 5/100 + 1000/27$.}} 
\label{fig:5}
\end{figure}

\begin{proof}
Let us apply on DS~(\ref{eq:Sprott_P}) 
the QCM~(\ref{eq:quasi_chemical})
with $T_1 = T_3 = 1$, $T_2 = 10/27$, 
$q_1 = q_3 = 0$ and $q_2 = y_2^2$, leading to the CDS
\begin{align}
\frac{\mathrm{d} x_1}{\mathrm{d} t} 
& = \mu x_1 \left[ \frac{27}{10} \left(x_2 - \frac{10}{27 \mu} \right)
+  \left(x_3 - \frac{1}{\mu} \right)\right],
\nonumber \\
\frac{\mathrm{d} x_2}{\mathrm{d} t} & =  \left(x_2 - \frac{10}{27 \mu} \right)^2
+ \frac{27}{10} \mu x_2 \left[ -\left(x_1 - \frac{1}{\mu} \right) \right], \nonumber \\
\frac{\mathrm{d} x_3}{\mathrm{d} t} & = 
\mu x_3 \left[\left(x_1 - \frac{1}{\mu} \right) 
+ \left(x_2 - \frac{10}{27 \mu} \right)\right].
\label{quadratic_chaos_g}
\end{align}
An induced CRN is given by~(\ref{eq:quadratic_chaos_CRN}),
with rate coefficients~(\ref{eq:quadratic_chaos_coeff}).
Using the assumption that~(\ref{eq:Sprott_P})  has a robust attractor,
the statement of the theorem follows from
Theorem~\ref{theorem:quasi_chemical}.
\end{proof}
\noindent 
\textbf{Remark}. 
The choice of the translation parameters has been made
so that the monomial $x_1 x_2$ appears with the 
same coefficient in the first two equations 
in~(\ref{quadratic_chaos_g}), allowing for
fusing of the underlying canonical reactions; 
see Definition~\ref{def:fused}.  

\noindent 
\textbf{Remark}. The assumption in Theorem~\ref{theorem:chaos2}
can be weakened: it is sufficient to assume that~$(\ref{eq:Sprott_P})$
has a chaotic attractor which is robust
with respect to only 
the chemical perturbations underlying~(\ref{quadratic_chaos_g});
see also the remark below Theorem~\ref{theorem:quasi_chemical}.

Let us fix $\mu = 1/100$ in~(\ref{eq:quadratic_chaos_coeff}),
under which CDS~(\ref{quadratic_chaos_g}) becomes
\begin{align}
\frac{\mathrm{d} x_1}{\mathrm{d} t} 
& = - 2 x_1 + \frac{27}{1000} x_1 x_2 + \frac{1}{100} x_1 x_3,
\nonumber \\
\frac{\mathrm{d} x_2}{\mathrm{d} t} 
& = \frac{10^6}{729} - \frac{19271}{270} x_2 - \frac{27}{1000} x_1 x_2 + x_2^2, 
\nonumber \\
\frac{\mathrm{d} x_3}{\mathrm{d} t} 
& = -\frac{37}{27} x_3 + \frac{1}{100} x_1 x_3 + \frac{1}{100} x_2 x_3.
\label{eq:Sprott_P_chemical}
\end{align}
Dynamics of~(\ref{eq:Sprott_P_chemical})
is shown in Figure~\ref{fig:5}(c)--(d), 
which compares well with the original DS~(\ref{eq:Sprott_P}).

\section{Appendix: Quadratic CRN with homoclinic bifurcation} 
\label{app:homoclinic}
Consider the DS
\begin{align}
\frac{\mathrm{d} y_1}{\mathrm{d} t} & = y_2, \nonumber \\
\frac{\mathrm{d} y_2}{\mathrm{d} t} & = y_1 + \beta y_2
+ y_1^2 - y_1 y_2,
\label{eq:homoclinic2}
\end{align}
which undergoes a super-critical homoclinic 
bifurcation~\cite{Perko}[Chapter 4.8].

\begin{theorem}$($\textbf{Quadratic \emph{CRN} with homoclinic bifurcation}$)$ 
\label{theorem:bifurcation2}
Consider the \emph{CRN}
\begin{align}
X_1 &\xrightarrow[]{\alpha_1} \varnothing,  
\; \; \; X_1 + X_2 \xrightarrow[]{\alpha_2} 2 X_1 + X_2, 
\; \; \; \varnothing \xrightleftharpoons[\alpha_4]{\alpha_3} X_2,
\; \; \; X_1 \xrightarrow[]{\alpha_5} X_1 + X_2, \nonumber \\
2 X_1 & \xrightarrow[]{\alpha_6} 2 X_1 + X_2, 
\; \; \; X_1 + X_2 \xrightarrow[]{\alpha_7} X_1, 
\; \; \; 2 X_2 \xrightarrow[]{\alpha_8} 3 X_2.
\label{eq:homoclinic_CRN2}
\end{align}
Assume that the rate coefficients are given by 
\begin{align}
\alpha_1 & = \frac{1}{\mu^{1/2}}, \; \; \; 
\alpha_2 = \mu^{1/2},\; \; \; 
\alpha_3 = \frac{1}{\mu^{5/3}} - \frac{1}{\mu^{3/2}} + \frac{1}{\mu} 
- \frac{1}{\mu^{1/2}}, \; \; \; 
\alpha_4 = \frac{2}{\mu^{2/3}} - \frac{1}{\mu^{1/2}}, \nonumber \\
\alpha_5 & = \frac{1}{\mu} - \frac{2 + \beta}{\mu} + 1, \; \; \; 
\alpha_6 = 1, \; \; \; 
\alpha_7 = 1 - \beta \mu^{1/2}, \; \; \; 
\alpha_8 =  \mu^{1/3}.
\label{eq:homoclinic_coeff2}
\end{align}
Then, for every sufficiently small $\mu > 0$, 
\emph{CRN}~$(\ref{eq:homoclinic_CRN2})$
undergoes a super-critical homoclinic bifurcation
at saddle $(1/\mu^{1/2},1/\mu) \in \mathbb{R}_{>}^{2}$
at some parameter value $\beta = \beta(\mu) \in (-1,0)$.
\end{theorem}

\begin{proof}
Consider the perturbed DS
\begin{align}
\frac{\mathrm{d} z_1}{\mathrm{d} t} & = 
f_1(z_1,z_2) = z_2 + \mu^{1/2} z_1 z_2, 
\nonumber \\
\frac{\mathrm{d} z_2}{\mathrm{d} t} & =
f_2(z_1,z_2; \beta) = z_1 + 
\beta \left(z_2 + \mu^{1/2} z_1 z_2 \right)
+ z_1^2 - z_1 z_2 + \mu^{1/3} z_2^2.
\label{eq:homoclinic2p}
\end{align}
The angle between the vector field of~(\ref{eq:homoclinic2p})
and the $z_1$-axis is given by 
$\theta = \tan^{-1} [f_2(z_1,z_2)/f_1(z_1,z_2; \beta)]$,
and hence satisfies the differential equation
\begin{align}
\frac{\mathrm{d} \theta}{\mathrm{d} \beta} 
& = \frac{f_1(z_1,z_2) \frac{\partial f_2(z_1,z_2; \beta)}{\partial \beta}}
{f_1^2(z_1,z_2) + f_2^2(z_1,z_2; \beta)}
= \frac{f_1^2(z_1,z_2)}{f_1^2(z_1,z_2) + f_2^2(z_1,z_2; \beta)}.
\label{eq:angle}
\end{align}
Hence, for points in the state-space which are
not equilibria, the vector field rotates anti-clockwise
as parameter $\beta$ is increased.  
If $\mu = 0$, then~(\ref{eq:homoclinic2p})
has a stable hyperbolic limit cycle with clock-wise orientation
for a range of values $\beta > -1$~\cite{Perko}[Chapter 4.8];
hence, the same is true for~(\ref{eq:homoclinic2p})
for all sufficiently small $\mu > 0$. 
It follows from~(\ref{eq:angle}) and~\cite{Perko}[Chapter 4.6]
that this limit cycle monotonically expands as $\beta$ is increased, 
until it forms a stable homoclinic loop at the saddle $(0,0)$
at some bifurcation value $\beta(\mu) > -1$.
Since the homoclinic loop is stable, 
it follows that the trace of Jacobian for~(\ref{eq:homoclinic2p})
at $(0,0)$ is negative at the bifurcation, 
i.e. $\beta(\mu) < 0$~\cite{Perko}[Chapter 4.8].
Choosing a sufficiently small $\mu > 0$, 
 restricting $\beta \in (-1,0)$,
and applying translation $x_1 = (z_1 + 1/\mu^{1/2})$
and $x_2 = (z_2 + 1/\mu)$, DS~(\ref{eq:homoclinic2p}) 
becomes the CDS
\begin{align}
\frac{\mathrm{d} x_1}{\mathrm{d} t} 
& = \mu^{1/2} x_1 \left(x_2 - \frac{1}{\mu} \right), 
\nonumber \\
\frac{\mathrm{d} x_2}{\mathrm{d} t} & =
\left(x_1 - \frac{1}{\mu^{1/2}} \right) + 
\beta \left[\left(x_2 - \frac{1}{\mu} \right)
 + \mu^{1/2} \left(x_1 - \frac{1}{\mu^{1/2}} \right) 
 \left(x_2 - \frac{1}{\mu} \right)\right]
+ \left(x_1 - \frac{1}{\mu^{1/2}} \right)^2 \nonumber \\
& - \left(x_1 - \frac{1}{\mu^{1/2}} \right)
 \left(x_2 - \frac{1}{\mu} \right) + 
 \mu^{1/3} \left(x_2 - \frac{1}{\mu} \right)^2,
\label{eq:homoclinic_g2}
\end{align}
with a CRN given by~(\ref{eq:homoclinic_CRN2})--(\ref{eq:homoclinic_coeff2}).
\end{proof}
\noindent
\textbf{Remark}. CDS~(\ref{eq:homoclinic_g2})
can also be obtained by applying on~(\ref{eq:homoclinic2})
the QCM~(\ref{eq:quasi_chemical_g})
with $q_1 = 0$, $p_1 = 0$, $T_1 = 1$, $a_1 = 1/2$, $s_1 = 1$, $b_1 = 0$, 
$q_2 = f_2$, $p_2 = \mu^{1/2} \beta y_1 y_2 + \mu^{1/3} y_2^2$, 
$T_2 = 1$, $a_2 = 1$, $s_2 = 1$, $b_2 = 0$.
One can introduce parameter
$\varepsilon \equiv \mu^{1/6}$ to obtain
 continuously differentiable perturbations 
in~(\ref{eq:homoclinic2p}). 

Fixing $\mu= 1/100$ in~(\ref{eq:homoclinic_g2}), 
one obtains the CDS
\begin{align}
\frac{\mathrm{d} x_1}{\mathrm{d} t} 
& = -10 x_1 + \frac{1}{10} x_1 x_2, 
\nonumber \\
\frac{\mathrm{d} x_2}{\mathrm{d} t} & =
\left(-910 + 10^{10/3} \right) 
+ (81 - 10 \beta) x_1 
+ (10 - 2 \times 10^{4/3}) x_2 
+ x_1^2 + \left(-1 + \frac{\beta}{10} \right) x_1 x_2 
+ \frac{1}{10^{2/3}} x_2^2,
\label{eq:homoclinic_g2_fixed}
\end{align}
which is found numerically to undergo the 
homoclinic bifurcation at saddle
$(10,100)$ when $\beta \approx \frac{-191}{200}$.

\section{Appendix: Non-polynomial dynamical systems} 
\label{app:nonpoly}
Let us consider DSs
with continuously differentiable vector fields, 
\begin{align}
\frac{\mathrm{d} \mathbf{y}}{\mathrm{d} t} & = 
\mathbf{h}(\mathbf{y}),
\; \; \; \textrm{where } \mathbf{h} \in C^1(\mathbb{R}^N,\mathbb{R}^N).
\label{eq:dyn_nonpoly} 
\end{align} 

\begin{theorem}
\label{theorem:non_polynomial}
Assume that vector field $\mathbf{h} \in C^1(\mathbb{R}^N,\mathbb{R}^N)$
is non-polynomial.
Assume furthermore that~$(\ref{eq:dyn_nonpoly})$ 
is robust in $\mathbb{K} \subset \mathbb{R}^N$. 
Then, there exists an $N$-dimensional polynomial \emph{DS}~$(\ref{eq:dyn})$
that is in $\mathbb{K}$ qualitatively equivalent to
$N$-dimensional non-polynomial 
\emph{DS}~$(\ref{eq:dyn_nonpoly})$ in $\mathbb{K}$.
\end{theorem}

\begin{proof}
The Weierstrass approximation 
theorem~\cite{Weierstrass}[Theorem 1.6.2]
guarantees that for every $\varepsilon > 0$ 
there exists a polynomial function
$\mathbf{f}_{\varepsilon} : \mathbb{K} \to \mathbb{R}^N$
and a continuously differentiable function 
$\mathbf{F}_{\varepsilon} \in C^1(\mathbb{R}^N,\mathbb{R}^N)$
such that 
$\mathbf{h}(\mathbf{y}) = \mathbf{f}_{\varepsilon}(\mathbf{y}) + 
\mathbf{F}_{\varepsilon} (\mathbf{y})$
for all $\mathbf{y} \in \mathbb{K}$,
where $\textrm{max}_{\mathbf{y} \in \mathbb{K}} 
(\|\mathbf{F}_{\varepsilon} (\mathbf{y})\| 
+ \|\nabla \mathbf{F}_{\varepsilon}(\mathbf{y})\|)
< \varepsilon$. Using the assumption that~(\ref{eq:dyn_nonpoly})
is robust in $\mathbb{K}$, it follows that 
for every sufficiently small $\varepsilon > 0$
DS~(\ref{eq:dyn}) with vector field
$\mathbf{f}_{\varepsilon}(\mathbf{y}) = \mathbf{h}(\mathbf{y}) 
- \mathbf{F}_{\varepsilon} (\mathbf{y})$ is 
qualitatively equivalent to DS~(\ref{eq:dyn_nonpoly}) in $\mathbb{K}$,
implying the statement of the theorem.
\end{proof}

\end{document}